\definecolor{darkgreen}{rgb}{0,0.5,0}
\crefname{equation}{}{} 
\colorlet{refkey}{orange!20}
\colorlet{labelkey}{blue!30}
\numberwithin{equation}{section}
\newtheorem{theorem}{Theorem}[section]
\newtheorem{lemma}[theorem]{Lemma}
\newtheorem{claim}[theorem]{Claim}
\crefname{claim}{Claim}{Claims}
\newtheorem{corollary}[theorem]{Corollary}
\newtheorem*{question*}{Question}
\theoremstyle{definition}
\newtheorem{definition}[theorem]{Definition}
\newtheorem*{definition*}{Definition}
\theoremstyle{remark}
\newcommand{\set}[1]{\left\{ #1 \right\}}
\newcommand{\poly}{\mathrm{poly}}
\newcommand{\fC}{\mathcal{C}}
\newcommand{\eps}{\varepsilon}
\newcommand{\E}{\mathbb{E}}
\newcommand{\diam}{\textrm{diam}}
\newcommand{\tO}{\widetilde{O}}
\newcommand{\fA}{\mathcal{A}}
\newcommand{\Labels}{\Sigma}
\newcommand{\polylog}{\textrm{polylog}}
\newcommand{\congest}{$\mathsf{CONGEST}\,$}
\newcommand{\local}{$\mathsf{LOCAL}\,$}
\newcommand{\pram}{$\mathsf{PRAM}\,$}
\newcommand{\dead}{\textrm{dead}}
\newcommand{\alive}{\textrm{alive}}
\newcommand{\act}{\textrm{active}}
\newcommand{\cut}{\textrm{cut}}
\newcommand{\expand}{\textrm{expand}}
\newcommand{\utility}{\mathbf{u}}
\newcommand{\cost}{\mathbf{c}}
\newcommand{\del}{\mathrm{del}}
\newcommand{\wait}{\mathrm{wait}}
\newcommand{\frontier}{\mathrm{frontier}}
\newcommand{\separation}{\text{s}}
\newcommand{\hfC}{\hat{\fC}}
\title{Improved Distributed Network Decomposition, \\ Hitting Sets, and Spanners, via Derandomization}
\author{
  Mohsen Ghaffari \\
  \small{MIT}\\
  \small{ghaffari@mit.edu}
  \and
  Christoph Grunau \\
  \small{ETH Zurich}\\
  \small{cgrunau@inf.ethz.ch}
  \and
  Bernhard Haeupler \\
  \small{ETH Zurich and CMU}\\
  \small{bernhard.haeupler@inf.ethz.ch}
  \and
  Saeed Ilchi  \\
  \small{ETH Zurich} \\
  \small{saeed.ilchi@inf.ethz.ch}
  \and
  Václav Rozhoň \\
\small{ETH Zurich} \\
\small{rozhonv@inf.ethz.ch}
}
\date{}
\begin{document}
\maketitle
\begin{abstract}
This paper presents significantly improved deterministic algorithms for some of the key problems in the area of distributed graph algorithms, including network decomposition, hitting sets, and spanners. As the main ingredient in these results, we develop novel randomized distributed algorithms that we can analyze using only pairwise independence, and we can thus derandomize efficiently. As our most prominent end-result, we obtain a deterministic construction for $O(\log n)$-color $O(\log n \cdot \log\log\log n)$-strong diameter network decomposition in $\tO(\log^3 n)$ rounds. This is the first construction that achieves almost $\log n$ in both parameters, and it improves on a recent line of exciting progress on deterministic distributed network decompositions [Rozho\v{n}, Ghaffari STOC'20; Ghaffari, Grunau, Rozho\v{n} SODA'21; Chang, Ghaffari PODC'21; Elkin, Haeupler, Rozho\v{n}, Grunau FOCS'22].
\end{abstract}

{
\setcounter{page}{0}
\thispagestyle{empty}
\newpage}

{ 
\bigskip
\hypersetup{linkcolor=blue}
\tableofcontents
\setcounter{page}{0}
\thispagestyle{empty}
}

\newpage
\section{Introduction}
This paper is centered on the area of \emph{distributed graph algorithms} and provides new methods and tools for developing improved \emph{deterministic} distributed algorithms. 

It has been a central, well-known, and well-studied theme in this area that, for many of the graph problems of interest, known randomized algorithms outperform their deterministic counterparts. Concretely, the randomized variants have been much faster and/or achieved better output properties, e.g., approximation factors. As a prominent example, for several of the key problems of interest---including maximal independent set, maximal matching, $\Delta+1$ vertex coloring---we have known $O(\log n)$ round randomized algorithms since the 1986 work of Luby~\cite{luby86}. In contrast, developing even $\poly(\log n)$-time deterministic algorithms for many of these problems remained open for nearly four decades. See for instance the 2013 book of Barenboim and Elkin~\cite{barenboimelkin_book} which lists numerous such open questions. Only very recently, $\poly(\log n)$-time deterministic algorithms for these problems were developed~\cite{rozhonghaffari20, GGR20, chang2021strong, GhaffariK21, elkin2022deterministic}. However, currently, these deterministic algorithms are still quite far from their randomized counterparts.

In this paper, we focus on two of the most central tools in developing deterministic algorithms for local graph problems, namely \emph{network decompositions} and \emph{hitting sets}, and we present significantly improved deterministic distributed constructions of these tools. From a technical perspective, our novelty is in developing new randomized algorithms for these tools in such a way that we can analyze the algorithm by assuming only pairwise independence in the randomness it uses. We then describe how one can leverage this to derandomize the algorithms, i.e., to transform the randomized algorithm into an efficient deterministic algorithm. We next review the model and then state our contributions in the context of the recent progress. 

\paragraph{Model.} We work with the standard distributed message-passing model for graph algorithms~\cite{peleg00}. The network is abstracted as an $n$-node graph $G=(V, E)$ where each node $v\in V$ corresponds to one processor in the network. Communications take place in synchronous rounds. Per round, each processor/node can send an $O(\log n)$-bit message to each of its neighbors in $G$. This model is called \congest. The relaxed variant of the model where we allow unbounded message sizes is called \local. At the end of the round, each processor/node performs some computations on the data that it holds, before we proceed to the next communication round. 

A graph problem in this model is captured as follows: Initially, the network topology is not known to the nodes of the graph, except that each node $v\in V$ knows its own unique $O(\log n)$-bit identifier and perhaps some of the global parameters of the network, e.g., the number $n$ of nodes in the network or a suitably tight upper bound on it. At the end of the computation, each node should know its own part of the output, e.g., in the graph coloring problem, each node should know its own color. When we discuss a particular graph problem, we will specify what part of the output should be known by each node.

\subsection{Network Decomposition}
Perhaps the most central object in the study of deterministic distributed algorithms for local graph problems has been the concept of \emph{network decomposition}, which was introduced by Awerbuch, Luby, Goldberg, and Plotkin~\cite{awerbuch89}. We next define this concept and explain its usefulness. Then, we discuss its existence and randomized distributed constructions. Afterward, we review the deterministic distributed constructions, especially the recent breakthroughs, and state our contributions.

Generally, the vertices of any $n$-node network can be colored using $O(\log n)$ colors such that in the subgraph induced by each color, each connected component has  diameter $O(\log n)$. We call this an $O(\log n)$-color $O(\log n)$-diameter network decomposition (or sometimes $O(\log n)$-color $O(\log n)$-strong-diameter network decomposition, to contrast it with a weaker variant which we discuss later). This decomposition enables us to think of the entire graph as a collection of $O(\log n)$ node-disjoint graphs, each of which has a small $O(\log n)$-diameter per component; the latter facilitates distributed coordination and computation in the component. 
As a prototypical example, given such a network decomposition, one easily gets an $O(\log^2 n)$-round deterministic algorithm for maximal independent set in the \local model: we process the color classes one by one, and per color, in each $O(\log n)$-diameter component, we add to the output a maximal independent set of the nodes of the component that do not have a neighbor in the independent sets computed in the previous colors. Each color is processed in $O(\log n)$ rounds, as that is the component diameter, and thus the overall process takes $O(\log^2 n)$ rounds. See \cite{rozhonghaffari20, ghaffari2017complexity, ghaffari2018derandomizing} for how network decomposition leads to a general derandomization method in the \local model, which transforms any $\poly(\log n)$-time randomized algorithm for any locally checkable problem~\cite{naor95} (roughly speaking, problems in which any proposed solution can be checked deterministically in $\poly(\log n)$-time, e.g., coloring, maximal independent set, maximal matching) into a $\poly(\log n)$-time deterministic algorithm.  

The existence of such a $O(\log n)$-color $O(\log n)$-diameter network decomposition follows by a simple ball-growing process~\cite{Awerbuch-Peleg1990}.
We build the colors one by one, and each time, we color at least half of the remaining nodes with the next color. For one color $i$, start from an arbitrary node and grow its ball hop by hop, so long as the size is increasing by at least a $2$ factor per hop. This stops in at most $O(\log n)$ hops. Once stopped, color the inside of the ball with the current color $i$, and remove the boundary nodes, deferring them to the next colors. If we continue doing this from nodes that remain in the graph, in the end, at least half of the nodes of the graph (which remained after colors $1$ to $i-1$) are colored in this color $i$, each carved ball has diameter $O(\log n)$, and different balls are non-adjacent as we remove their boundaries. 

 Linial and Saks~\cite{linial92} gave a randomized distributed algorithm that computes almost such a network decomposition in $O(\log^2 n)$ rounds of the \congest model. The only weakness was in the diameter guarantee: the vertices of each color are partitioned into non-adjacent clusters so that per cluster, every two vertices of this cluster have a distance of at most $O(\log n)$ in the original graph. This is what we call $O(\log n)$ weak-diameter. In contrast, if the distance was measured in the subgraph induced by the nodes of this color, it is called a \emph{strong-diameter}. A $O(\log^2 n)$-round \congest-model randomized algorithm for $O(\log n)$-color $O(\log n)$-strong-diameter network decomposition was provided much later, by Elkin and Neiman~\cite{elkin16_decomp}, building on a parallel algorithm of Miller, Peng, and Xu~\cite{miller2013parallel}.

 In contrast, even after significant recent breakthroughs, deterministic constructions for network decomposition are still far from achieving similar measures, and this suboptimality spreads to essentially all applications of network decomposition in deterministic algorithms. The original work of Awerbuch et al.~\cite{awerbuch89} gave a $T$-round deterministic \local algorithm for $c$-color and $d$-strong-diameter network decomposition where $c=d=T=2^{O(\sqrt{\log n\log\log n})}$. All these bounds were improved to $c=d=T=2^{O(\sqrt{\log n})}$ by Panconesi and Srinivasan~\cite{panconesi-srinivasan}. A transformation of Awerbuch et al.~\cite{awerbuch96} in the \local model can transform these into a $O(\log n)$-color $O(\log n)$-strong-diameter network decomposition, but the time complexity remains $2^{O(\sqrt{\log n})}$ and this remained the state of the art for over nearly three decades.
 
Rozho\v{n} and Ghaffari~\cite{rozhonghaffari20} gave the first $\poly(\log n)$ time deterministic network decomposition  with $\poly(\log n)$ parameters. Concretely, their algorithm computes a $O(\log n)$-color $O(\log^3 n)$-weak-diameter network decomposition in $O(\log^8 n)$ rounds of the \congest model. The construction was improved to a $O(\log n)$-color $O(\log^2 n)$-weak-diameter network decomposition in $O(\log^5 n)$ rounds of the \congest model, by Grunau, Ghaffari, and Rozho\v{n}~\cite{GGR20}. Both of these constructions were limited to only a weak-diameter guarantee. If one moves to the relaxed \local model with unbounded message sizes, then by combining these with a known transformation of Awerbuch et al.~\cite{awerbuch96}, one gets $O(\log n)$-color $O(\log n)$-strong-diameter network decompositions, in a time complexity that is slower by a few logarithmic factors. However, such a transformation was not known for the \congest model, until a recent work of Chang and Ghaffari~\cite{chang2021strong}. They gave a \congest-model reduction, which can transform the weak-diameter construction algorithm of Grunau et al.~\cite{GGR20} into a strong-diameter one, sacrificing some extra logarithmic factors. Concretely, they achieved a $O(\log n)$-color $O(\log^2 n)$-strong-diameter decomposition in $O(\log^{11} n)$ rounds. The time complexity of decomposition with these parameters was improved very recently by Elkin et al.~\cite{elkin2022deterministic}, obtaining a $O(\log n)$-color $O(\log^2 n)$-strong-diameter decomposition in $O(\log^5 n)$ rounds.

However, all these constructions are still far from building the arguably right object, i.e., an $O(\log n)$-color $O(\log n)$-strong-diameter decomposition, in the \congest model. This was true even if we significantly relax the time complexity, and as mentioned before, this sub-optimality spreads to all applications.

\paragraph{Our contribution.} In this paper, we present a novel deterministic construction of network decomposition which builds \emph{almost the right object}, achieving an $O(\log n)$-color $O(\log n \cdot \log\log\log n)$-strong-diameter decomposition, in $\poly(\log n)$ rounds. We note that all previous construction techniques seem to require diameter at least $\Omega(\log^2 n)$; see \cite{chang2021strong} for an informal discussion on this. Our algorithm breaks this barrier and reaches diameter $O(\log n \cdot \log\log\log n)$. The key novelty is in designing a new randomized algorithm that can be analyzed using only pairwise independence. We can thus derandomize this algorithm efficiently by using previously known network decompositions in a black-box manner, and in $\poly(\log n)$ time. 

Furthermore, if we want faster algorithms, by a black-box combination of our new construction with the technically-independent recent work of Faour, Ghaffari, Grunau, Kuhn, and Rozho\v{n}~\cite{Faour2022} on locally derandomizing pairwise-analyzed randomized algorithms (roughly speaking, their approach works by a specialized weighted defective coloring, instead of using network decompositions), our construction becomes much faster than all the previous constructions, and therefore provides the new state-of-the-art: 

\begin{theorem}\label{thm:NetDecomp} There is a deterministic algorithm that, in any $n$-node network, computes an $O(\log n)$-color $O(\log n \cdot \log\log\log n)$-strong-diameter decomposition in $\tO(\log^3 n)$ rounds\footnote{We use the notation $\tO(f(x)) = O(f(x) \cdot \poly\log f(x))$. } of the \congest model. The algorithm performs $\tO(m)$ computations in total, where $m$ denotes the number of edges.
\end{theorem}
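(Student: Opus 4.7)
The plan is to develop a randomized single-color carving algorithm that carves clusters of strong diameter $O(\log n \cdot \log\log\log n)$ containing at least half of the still-uncolored vertices and whose analysis only requires pairwise-independent random bits, iterate it $O(\log n)$ times to get the full decomposition, and finally derandomize in the \congest model.

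For the randomized carving, my first attempt would follow the Miller--Peng--Xu paradigm: each candidate center $u$ draws a random shift $\delta_u$, and a vertex $v$ joins the cluster of the center minimizing $\dist(u,v) - \delta_u$. The standard MPX analysis, which uses exponential shifts and full independence, guarantees cluster radius $O(\log n / \beta)$ while cutting only a $\beta$-fraction of edges. To run the same style of argument under only pairwise independence, I would quantize the shift distribution to a discrete family representable with $O(\log n)$ bits drawn from a pairwise-independent hash family, and bound both (a) the probability that a given vertex ends up in a cluster of large radius and (b) the probability that a given edge crosses between clusters by second-moment / Chebyshev-type calculations that only involve pairs of shifts at a time.

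The main obstacle is exactly this weaker analysis: without Chernoff/union-bound concentration, one cannot directly ensure that \emph{every} still-uncolored vertex is captured by a small-radius cluster in a single carving step. My plan to recover is to tolerate a $1/\polylog n$ per-vertex failure probability and then to boost coverage inside a single ``super-color'' by repeating the carving $O(\log\log\log n)$ times, each repetition catching the vertices missed by the previous ones; the surviving alive fraction drops by a constant factor overall, the resulting strong diameter is $O(\log n \cdot \log\log\log n)$, and the number of super-colors stays at $O(\log n)$. The exact $\log\log\log n$ factor should emerge from balancing the pairwise-independent failure probability against the per-carving radius.

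For the derandomization, since the analysis only uses pairwise-independent bits, a seed of $O(\log n)$ bits suffices and the sample space has polynomial size. One can then apply the method of conditional expectations by searching through this polynomial space with a previously known deterministic network decomposition as a black-box subroutine---e.g.\ the $\tO(\log^5 n)$-round $O(\log^2 n)$-strong-diameter construction of Elkin, Haeupler, Rozho\v{n}, and Grunau~\cite{elkin2022deterministic}, which already gives a $\poly(\log n)$-round deterministic version of a single carving. Combining instead with the local-rounding technique of Faour, Ghaffari, Grunau, Kuhn, and Rozho\v{n}~\cite{Faour2022}, which derandomizes pairwise-independent algorithms via a specialized weighted defective coloring and avoids the recursion overhead, trims logarithmic factors and, summed over $O(\log n)$ super-colors, yields the claimed $\tO(\log^3 n)$ round complexity and $\tO(m)$ total work in \congest.
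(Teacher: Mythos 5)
Your high-level strategy matches the paper's---derandomize an MPX-style shifted carving whose analysis uses only pairwise independence, iterate over $O(\log n)$ colors, and use the local rounding of \cite{Faour2022} to avoid the overhead of a black-box network decomposition---but the core of your randomized analysis has a genuine gap. You propose to draw all shifts from a single pairwise-independent family and control, by a second-moment argument, ``the probability that a given vertex ends up in a cluster of large radius'' and ``the probability that a given edge crosses between clusters.'' Neither of these is a pairwise event: whether a vertex is captured by a well-separated cluster depends jointly on the shifts of \emph{all} nearby candidate centers (it is the event that exactly one shifted center wins the race to that vertex by a margin of $2s$), and Chebyshev over pairs of shifts does not bound it. This is exactly the obstruction the paper identifies. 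Its resolution is structural rather than a matter of failure-probability bookkeeping: the one-shot MPX sampling is replaced by $O(\log n)$ phases, each consisting of $O(\log\log n)$ pairwise-independent subsampling steps (\cref{thm:low_degree_delay_main}), tied together by inner and outer potential functions; and the per-carving guarantee is deliberately weakened from ``separated'' to ``$s$-hop degree at most $O(\log\log n)$'' (\cref{thm:low_degree_main}), which \emph{is} controllable by a pairwise estimator. True separation is then restored by a separate pairwise-independent subsampling of whole clusters (\cref{thm:subsampling_main}), at the cost of clustering only a $1/\Theta(\log\log n)$ fraction of the nodes.

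Relatedly, your accounting for the $\log\log\log n$ factor and for the ``boosting'' inside a super-color does not work as stated. Because the separation-restoring step only retains a $1/\Theta(\log\log n)$ fraction of clustered nodes, the carving must be repeated $\Theta(\log\log n)$ times per color (not $O(\log\log\log n)$), and the clusters produced in different repetitions of the same color must be kept non-adjacent without discarding too many vertices as boundary. The paper achieves this with a ball-growing expansion (\cref{thm:clusteringmorenodesmain}, \cref{alg:expanding}) that requires the input clustering to have separation $10x$ with $2^x \gtrsim \log\log n$, i.e., $x = \Theta(\log\log\log n)$; the strong diameter $O(s \log n)$ with $s = 10x$ is where the $\log\log\log n$ in the theorem actually comes from. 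Your proposal never explains how clusters from different repetitions remain non-adjacent, nor why the diameter should grow by only a $\log\log\log n$ factor; without the separation-parameter mechanism these claims do not follow.
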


\subsection{Hitting Set}
While network decomposition is a generic tool for derandomization in the \local model, and also a key tool for derandomization in the \congest model with extensive applications, a more basic tool that captures the usage of randomness in a range of distributed algorithms is \emph{hitting set}, as we describe next.

\paragraph{The Hitting Set Problem (basic case).} Given a collection of \lq\lq large\rq\rq\, sets in a ground set of elements, randomness gives us a very simple way of selecting a ``small\rq\rq\, portion of the elements such that we have at least one member of each set. The most basic variant is this: consider a bipartite graph $G=(A\sqcup B, E)$ where each node on one side $A$ has degree at least $k$. By using randomness, we can easily define a small subset $B'\subseteq B$ which, with high probability, has size $O(|B|\log n/k)$ and hits/dominates $A$, that is, each node $a\in A$ has a neighbor in $B'$. For that, simply include each element of $B$ in $B'$ with probability $p=O(\log n/k)$. This randomized selection in fact works in zero rounds. Finding such a small subset $B'$ in a deterministic manner is a key challenge in designing efficient deterministic distributed algorithms for many problems. For instance, Ghaffari and Kuhn~\cite{ghaffari2018congest-derandomizing} pointed out that this is the only use of randomness in some classic randomized algorithms for the construction of spanners and approximations of set cover. Indeed, a variant of this hitting set problem is a key ingredient even in our construction of the network decompositions mentioned above.

\paragraph{The Hitting Set Problem (general case).} Generalizing the problem allows us to capture a much wider range of applications. In some applications, we need to consider different sizes of the sets. Furthermore, we may not need to hit all sets, but instead, we would like to minimize the number, or the total cost, of those not hit. Following the bipartite graph terminology mentioned above, suppose each node $a\in A$ has a cost $c_a$, and its degree is denoted by $d_a$. Randomized selection with probability $p$ picks a subset $B'\subseteq B$ of size $p|B|$, in expectation, where the total cost of $A$-nodes that do not have a neighbor in $B'$ is $\sum_{a\in A} c_a (1-p)^{\deg(a)}$, in expectation. As a side comment, we note that in all applications that we are aware of, we may assume that $c_a\in [1, \poly(n)]$. Because of this, essentially without loss of generality, we can assume that for each node $a\in A$ we have $\deg(a)\leq O(\frac{1}{p} \cdot \log n)$. This is because the total expected cost of higher degree nodes is $1/\poly(n)$, which is negligible.

As an instructive example application, by defining $c_a:=\deg(a)$, we get that the total number of edges incident on $A$-nodes that are not hit is at most $O(|A|/p)$, in expectation. This particular guarantee is the sole application of randomness in some distributed constructions, e.g., the celebrated spanner construction of Baswana and Sen~\cite{baswana2007simple}. 

\paragraph{Prior deterministic distributed algorithms for hitting set.} There are two known distributed constructions for hitting set~\cite{ghaffari2018congest-derandomizing, bezdrighin2022deterministic}, as we review next. Both of these algorithms are based on showing that a small collection of random bits are sufficient for the randomized algorithm and then using the conditional expectation method to derandomize this. However, both algorithms are computationally inefficient and use superpolynomial-time computations. 

Ghaffari and Kuhn~\cite{ghaffari2018congest-derandomizing} observed that $O(\log n)$-wise independence is sufficient for the randomized algorithm in the basic hitting set problem, and thus $O(\log^2 n)$ bits of randomness are sufficient for the algorithm. Then, given a network decomposition with $c$ colors and strong diameter $d$, we can derandomize these bits one by one in a total of $O(cd \log^2 n)$ rounds, by processing the color classes one by one and fixing the bits in each color class in $O(d)$ time. However, the resulting algorithm is not computationally efficient: Each node has to perform $n^{O(\log n)}$-time local computations to calculate the conditional probabilities needed in the method of conditional expectation. 

Parter and Yogev~\cite{parter2018congested} pointed out that one can replace the $O(\log n)$-wise independence with a pseudorandomness generator for read-once DNFs and this reduces the number of bits to $O(\log n (\log\log n)^3)$--this was presented in a different context of spanners in the congested clique model of distributed computing. More recently, Bezdrighin et al.~\cite{bezdrighin2022deterministic} further reduced that bound to $O(\log n \log\log n)$, by applying a pseudorandomness generator of Gopalan and Yeudayoff~\cite{gopalan2020concentration}, which is particularly designed for hitting events. This decreased the round complexity slightly to $O(cd\log n\log\log n)$. However, the conditional probability computations still remain quite inefficient: they are $n^{O(\log\log n)}$-time, which is still super-polynomial. 

\paragraph{Our contribution.} 
Instead of viewing the randomized hitting set algorithm as a one-shot process, we turn it into a more gradual procedure. Concretely, we show that one can turn the natural randomized algorithm for the general hitting set problem into a number of randomized algorithms (bounded by $O(\log n)$), in such a way that pairwise independence is sufficient for analyzing each step. Because of this, we can derandomize each step separately (using an overall potential function that ensures that the result after derandomizing all steps has the same guarantees as discussed above for the randomized algorithm). Thanks to this, in contrast to the prior algorithms~\cite{ghaffari2018congest-derandomizing, bezdrighin2022deterministic} which required super-polynomial computations, our algorithm uses only $\tO(m)$ computations, summed up over the entire graph, where $m$ denotes the number of edges. Hence, our distributed algorithm directly provides a near-linear time low-depth deterministic parallel algorithm for the hitting set problem. 

\begin{theorem} [Informal] There is a deterministic distributed algorithm that in $\poly(\log n)$ rounds and using $\tO(m)$ total computations solves the generalized hitting set problem. That is, in the bipartite formulation mentioned above, the selected subset $B'$ has size $O(p \cdot |B|)$ and the total weight of nodes of $A$ not hit by $B'$ is $O(\sum_{a\in A} c_a (1-p)^{\deg(a)})$. 
\end{theorem}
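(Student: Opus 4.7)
The proof plan is to replace the one-shot randomized sampling (include each $b \in B$ independently with probability $p$) by a gradual procedure of $L = O(\log n)$ phases, analyze each phase using only pairwise independence, and derandomize each phase with a short seed via the method of conditional expectation, orchestrated in the network by the deterministic decomposition of Theorem~\ref{thm:NetDecomp}.

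\textbf{Gradual sampling.} Pick probabilities $q_1, \ldots, q_L$ with $\prod_{i=1}^L (1-q_i) = 1 - p$, arranged on a geometric scale so that every $a \in A$ admits some phase $i^*(a)$ with $q_{i^*} \cdot \deg(a) = O(1)$. In phase $i$, for each $b \in B$ not yet placed in $B'$, add $b$ to $B'$ with probability $q_i$, requiring only that this sampling be pairwise independent across $B$ within the phase. If the phases were fully independent (as in the randomized algorithm), the probability that a fixed $a$ remains unhit at the end would be exactly $\prod_i(1-q_i)^{\deg(a)} = (1-p)^{\deg(a)}$, matching the randomized target.

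\textbf{Per-phase potential analysis.} Maintain a potential
\[
\Phi_i \;=\; \lambda \cdot |B'| \;+\; \sum_{a \text{ unhit}} c_a \cdot w_i,
\]
where $\lambda$ and the phase-weights $w_i$ are tuned so that $\Phi_0 = O(p|B|) + O\bigl(\sum_a c_a(1-p)^{\deg(a)}\bigr)$ while $\Phi_L$ upper bounds $|B'| + \sum_{a \text{ unhit}} c_a$ up to constants. Pairwise independence within phase $i$ suffices because the analysis uses only (a) the first-moment identity $\E[\Delta |B'|] = q_i |B_{\text{rem}}|$, and (b) the inclusion--exclusion (Bonferroni) inequality
\[
\Pr[a \text{ hit in phase } i] \;\geq\; q_i \deg_i(a) \;-\; \binom{\deg_i(a)}{2} q_i^2,
\]
which, whenever $q_i \deg_i(a) \leq 1$, matches $1 - (1 - q_i)^{\deg_i(a)}$ up to a constant factor (here $\deg_i(a)$ counts neighbors of $a$ still in $B$ at the start of phase $i$). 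Combining (a) and (b) with the chosen weights yields $\E[\Phi_i - \Phi_{i-1}] \leq 0$, so $\Phi$ is a supermartingale in expectation.

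\textbf{Derandomization and distributed implementation.} A pairwise-independent Bernoulli vector on $|B|$ coordinates with marginal $q_i$ is generated from an $O(\log n)$-bit seed via a standard $\mathrm{GF}(2^{O(\log n)})$ construction. In each phase we fix the seed one bit at a time by conditional expectation: each node's contribution to $\E[\Phi_i \mid \text{bits fixed so far}]$ is a degree-two polynomial in the remaining seed bits and is locally computable in $\tO(\deg(\cdot))$ time per bit. Deciding which of the two bit-values keeps $\E[\Phi_i]$ non-increasing reduces to evaluating one network-wide scalar sum, which can be done deterministically in $\poly(\log n)$ rounds using the strong-diameter decomposition of Theorem~\ref{thm:NetDecomp} (or the local derandomization framework of~\cite{Faour2022}). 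Summed over $O(\log n)$ phases and $O(\log n)$ seed bits per phase, the total round complexity is $\poly(\log n)$ and the total work is $\tO(m)$.

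\textbf{Main obstacle.} The delicate point is calibrating the schedule $(q_i)$ and weights $(w_i)$ so that the constant-factor slack introduced by using pairwise (rather than full) independence in each phase does not compound into a $\poly\log n$ loss across the $L$ phases, while a \emph{single} potential simultaneously controls all degree scales. Concretely, $w_i$ must grow across phases in inverse proportion to the per-phase hit probability guaranteed by the Bonferroni bound for the degree class matched to phase $i$; getting this calibration tight enough to yield the claimed $O\bigl(\sum_a c_a(1-p)^{\deg(a)}\bigr)$ expected unhit cost, without any $\poly\log$ blow-up, is the main technical challenge.
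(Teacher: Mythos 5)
Your high-level architecture matches the paper's: iterative pairwise-independent sampling, the Bonferroni pessimistic estimator $Y^t_i=\sum_j X_j-\sum_{j<k}X_jX_k$, a potential that interpolates between the randomized target and the final guarantee, and bit-by-bit conditional-expectation derandomization of an $O(\log(1/p)\cdot\log n)$-bit seed, orchestrated by network decomposition in \congest. However, the one point you flag as the ``main obstacle''---calibrating the schedule $(q_i)$ and weights $(w_i)$---is a genuine gap, and the geometric schedule you propose does not close it. With $q_i$ geometrically decreasing and $\sum_i q_i\approx p$, a node $a$ of degree $d$ gets a useful Bonferroni bound only in the phases with $q_i d\lesssim 1$, and those phases contribute a total hitting ``mass'' of $d\cdot\sum_{i\ge i^*(a)}q_i=O(1)$; the earlier phases with $q_id\gg 1$ give nothing, since $q_id-\binom{d}{2}q_i^2$ is then negative. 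So your schedule only guarantees $\Pr[a\text{ unhit}]\le c<1$ for a constant $c$, whereas the target $(1-p)^{d}=e^{-pd}$ can be as small as, say, $1/\log n$ in the regime $1\ll pd\ll\log n$. No choice of weights $w_i$ repairs this, because the loss is in the underlying probability bound, not in the bookkeeping.

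The paper's resolution (\Cref{thm:hittingset-sampling}, \Cref{lem:hittingset-pessimistic-bound}) is to use the \emph{same} probability $q=4p/T$ in every one of $T=\lceil 8p\Delta\rceil$ iterations. Since $q|S_i|\le 1/2$ for \emph{every} set simultaneously, the Bonferroni bound is within a constant of first order in every iteration, and the constant-factor oversampling ($4p/T$ instead of $p/T$) absorbs that loss once per iteration: each iteration multiplies each unhit set's weight by at most $e^{-p|S_i|/T}$, so the $T$ iterations compound exactly to $e^{-p|S_i|}$ with no accumulating slack, while the total sample size is $T\cdot nq=O(np)$. The potential $f^t$ then credits each still-unhit set with the factor $e^{-|S_i|(T-t)p/T}$ it is still owed. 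To keep $T=O(\log N)$ when $p\Delta$ is large, the paper separately truncates each large set to its first $O(\log N/p)$ elements and assigns it weight $N^2$ so that it is forced to be hit (\Cref{cor:hittingset-sampling-logn}); your proposal has no analogue of this step, which is also needed to bound the number of phases and the work.
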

We present the formal version of this theorem in \Cref{thm:hittingset-congest} for the \congest model of distributed computing, and in \Cref{thm:hittingset-pram} for the $\mathsf{PRAM}$ model of parallel computation.

\paragraph{Applications of hitting set.} This efficiently derandomized hitting set has significant applications for a number of graph problems of interest. In this paper, as two examples, we discuss spanners and distance oracles. In the case of spanners, this deterministic hitting set leads to the first deterministic spanner algorithm with the best-known stretch-size trade-off, polylogarithmic round complexity, that has near-linear time computations. The best previously known deterministic constructions required superpolynomial computations~\cite{bezdrighin2022deterministic} (and in the \cite{ghaffari2018congest-derandomizing} case, had extra logarithmic factors in size). The formal statements are as follows, and the proofs are presented in \Cref{subsec:spanners}.

\begin{corollary} (\textbf{general stretch spanners, unweighted and weighted})
There is deterministic distributed algorithm that, in $\poly(\log n)$ rounds of the \congest model and with total computations $\tO(m)$, for any integer $k\geq 1$, computes a $(2k-1)$-spanner with $O(nk + n^{1 + 1/k} \log k)$ and $O(nk + n^{1 + 1/k} k)$ edges for unweighted and weighted graphs, respectively.
\end{corollary}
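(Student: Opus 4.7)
The plan is to derandomize the celebrated Baswana--Sen $(2k-1)$-spanner construction by plugging in our deterministic hitting set algorithm (the formal \congest version of the informal theorem above) in place of the random cluster-sampling step at each of its $k$ phases. Since the paper has already identified that sampling as the \emph{only} use of randomness in Baswana--Sen, the reduction is structural rather than problem-specific.

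First, I would recall the Baswana--Sen framework. It proceeds in $k$ phases $i=0,1,\dots,k-1$, maintaining a collection $\fC_i$ of vertex-disjoint clusters, with $\fC_0$ consisting of singletons. In phase $i$, each cluster of $\fC_i$ is independently ``sampled'' with probability $p = n^{-1/k}$ to form $\fC_{i+1}$. Every vertex $v$ outside the sampled clusters either has no neighboring sampled cluster, in which case $v$ adds one spanner edge per distinct neighboring cluster of $\fC_i$ and drops out, or $v$ joins its closest sampled cluster, adding the connecting edge (and, in the standard unweighted/weighted variants, one edge per strictly-closer neighboring cluster). The randomized analysis bounds the expected number of edges added in phase $i$ by $O(n^{1+1/k})$ (for the weighted variant) via the identity $\E[\deg_{\fC_i}(v)\cdot(1-p)^{\deg_{\fC_i}(v)}] = O(1/p)$.

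Next, I would cast each phase as a single instance of the general hitting set problem. Let $B := \fC_i$ and $A := \{$non-sampled vertices-to-be$\}$, with the bipartite edge $(v,C)$ iff $C \in \fC_i$ is adjacent to $v$ in $G$, and sampling probability $p = n^{-1/k}$. Setting $c_v := \deg_{\fC_i}(v)$ (the number of distinct $\fC_i$-clusters incident to $v$) recovers exactly the quantity $\sum_{v} c_v (1-p)^{\deg(v)}$ that our deterministic hitting set theorem controls, and in turn bounds the edges contributed by unhit vertices; the hit vertices contribute one edge per strictly-closer cluster, which is bounded in the standard way. By the informal theorem above, in $\poly(\log n)$ rounds and $\tO(m)$ total computation per phase we obtain a deterministic $B'\subseteq \fC_i$ with $|B'| = O(p\,|\fC_i|)$ and with $\sum_{v\notin \text{hit}} c_v = O(\sum_v c_v (1-p)^{\deg(v)})$, matching the expected Baswana--Sen guarantee up to constants. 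Summing over the $k$ phases yields $O(kn^{1+1/k})$ edges in the weighted case, plus at most $O(nk)$ ``trivial'' edges (one per surviving vertex per phase, to connect it to its assigned cluster), giving the $O(nk + n^{1+1/k}k)$ bound. For the unweighted case, I would apply the Roditty--Thorup--Zwick refinement that keeps only a carefully pruned set of inter-cluster edges per vertex, replacing the linear factor $k$ by $\log k$; this refinement is purely combinatorial and consumes the same hitting-set guarantee. For $k = \Omega(\log n)$ the claimed size bound is already achieved by a BFS spanning tree, so we may assume $k = O(\log n)$ phases, keeping the total round complexity at $\poly(\log n)$ and total computation at $\tO(m)$.

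The main obstacle is the bookkeeping at the interface between the hitting-set guarantee and the Baswana--Sen charging scheme: one must choose the cost function $c_v$ and the bipartite instance so that $\sum_a c_a (1-p)^{\deg(a)}$ translates cleanly into the per-phase edge count, and separate the ``assignment edges'' (paid for by the $O(nk)$ term) from the ``neighborhood edges'' (paid for by the hitting-set cost). A secondary subtlety is implementing the cluster-merging and per-vertex cluster-edge selection within the \congest model: at phase $i$ each cluster has strong diameter $O(i) = O(\log n)$, and each vertex needs only to learn which neighboring clusters were sampled and pick the closest one, all of which fit inside the $\poly(\log n)$ per-phase budget of the hitting-set routine. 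Formal details go into \Cref{subsec:spanners}.
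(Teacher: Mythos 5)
Your overall route coincides with the paper's: each Baswana--Sen phase is cast as one hitting-set instance whose universe is the current cluster set $\fC_i$, with one set $S_v$ per clustered vertex $v$ consisting of its neighboring clusters, and the deterministic hitting set replaces the random sampling. The unweighted accounting and the $O(nk)$ ``assignment edge'' term are fine. There is, however, a concrete gap in your weighted case. When $v$'s cluster is unsampled and the first \emph{sampled} cluster in $v$'s weight-order of neighboring clusters has rank $j$, Baswana--Sen adds $j$ edges at $v$; the randomized $O(1/p)$ bound on $\E[j]$ uses the geometric distribution of the rank of the first sampled cluster. A plain hit/not-hit guarantee on $S_v$ with a single weight $c_v=\deg_{\fC_i}(v)$ says nothing about \emph{where} in the order the first hit occurs, so ``bounded in the standard way'' is precisely the step that fails with your formulation. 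The paper resolves this by introducing the \emph{hitting ordered set} problem and reducing it to the plain weighted problem via nested prefix sets $S_v^j=\{\pi_v(1),\dots,\pi_v(2^j)\}$ of weight $|S_v^j|$ (\Cref{lem:hittingset-ordered-reduction}), so that the total weight of unhit prefixes is within a factor $3$ of the rank of the first hit element; you need this (or an equivalent device) to get $O(nk+n^{1+1/k}k)$. Relatedly, you have the two cases reversed: it is the \emph{weighted} case that needs the ordered machinery, while the unweighted $\log k$ factor comes from the threshold argument in the size lemma of Bezdrighin et al.\ (nodes with fewer than $O(n^{1/k}\log k)$ neighboring clusters are charged directly), which the paper invokes as a black box, not from a Roditty--Thorup--Zwick pruning.

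Two smaller points. First, the elements of each hitting-set instance are clusters, not vertices, so the bit-fixing must be coordinated on the contracted cluster graph; the paper handles this by contracting clusters and running the network decomposition there, at a further $O(k)=O(\log n)$ overhead, which your $\poly(\log n)$ budget absorbs but which should be stated. Second, your reduction to $k=O(\log n)$ via ``a BFS spanning tree'' is incorrect as written (a spanning tree need not have stretch $2k-1$); the standard fix is simply to run the construction with $k'=\lceil\log n\rceil$ whenever $k>\log n$.
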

\begin{corollary} (\textbf{ultra-sparse spanners})
There is deterministic distributed algorithm that, in $\poly(\log n)$ rounds of the \congest model and with total computations $\tO(m)$, computes a spanner with size $(1+o(1)) n$ and with stretch $\log n \cdot 2^{O(\log^* n)}$ in weighted graphs.
\end{corollary}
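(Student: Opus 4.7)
The plan is to obtain the ultra-sparse spanner by iteratively applying the general-stretch spanner construction from the previous corollary, in the spirit of Pettie's recursive framework. Set $G_0 := G$ and, for $i = 1, \ldots, t$, let $G_i$ be a $(2k_i - 1)$-spanner of $G_{i-1}$ produced by the previous corollary with parameter $k_i$. Since a $\beta$-spanner of an $\alpha$-spanner of $G$ is an $(\alpha\beta)$-spanner of $G$, the cumulative stretch of $G_t$ relative to $G$ is $\prod_{i=1}^{t}(2k_i - 1)$. Taking $k_1 = \Theta(\log n)$ produces an intermediate spanner of $O(n \log n)$ edges and stretch $O(\log n)$; then iterate $t = O(\log^* n)$ additional steps, choosing each $k_i$ so that each step multiplies the stretch by a small constant while shrinking the edge count by a logarithmic factor.

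To drive the edge count all the way down to $(1+o(1))n$ rather than merely $O(n)$, combine this iteration with a deterministic spanning forest $F$ of $G$ (contributing the $n-1$ backbone edges) and a hitting-set-based shortcut step at each scale. Using our deterministic hitting set algorithm, pick a small set of ``landmark'' vertices at each scale and connect each non-landmark vertex to a nearest landmark along $O(\log n \cdot 2^{O(\log^* n)})$-stretch paths inherited from $F$ and the current $G_i$. Tuning the hitting-set densities so that the total number of non-forest edges added across all $O(\log^* n)$ scales is $o(n)$ yields a spanner of size $(n-1) + o(n) = (1+o(1))n$ with cumulative stretch $\log n \cdot 2^{O(\log^* n)}$.

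Each iteration runs in $\poly(\log n)$ rounds of \congest and uses $\tO(m)$ total computation: the spanner construction comes from the previous corollary, a spanning forest is computed deterministically in $\poly(\log n)$ rounds, and the hitting sets are supplied by our hitting set theorem. Since $t = O(\log^* n)$, summing over all iterations preserves the overall $\poly(\log n)$-round and $\tO(m)$-computation bounds, also using the fact that after the first iteration subsequent iterations operate on graphs of $\tO(n)$ edges.

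The main obstacle is the sharp final size: the additive $n k_i$ term in the Baswana--Sen-style edge bound prevents iteration alone from reaching $(1+o(1))n$, so one must intertwine the forest backbone with the hitting-set shortcuts and show that the shortcut contributions, summed across all $O(\log^* n)$ levels, amount to only $o(n)$ edges. This reduces to a careful inductive accounting of edge contributions per scale, together with verifying that the distributed implementation of the Pettie-style hierarchy preserves the round and computational complexity at each level — exactly where the efficient, near-linear-work derandomized hitting set of our prior theorem is essential.
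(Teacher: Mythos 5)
There is a genuine gap in the central composition step. Your plan composes the general-stretch corollary as a black box, claiming that after the first $k_1=\Theta(\log n)$ application each further application ``multiplies the stretch by a small constant while shrinking the edge count by a logarithmic factor.'' But the size bound of that corollary is $O(nk + n^{1+1/k}k)$ \emph{independently of how sparse the input graph is}: every invocation with parameter $k_i$ outputs $\Omega(nk_i)\ge \Omega(n)$ edges, because in each of the $k_i$ Baswana--Sen levels every vertex adds an edge toward its cluster. So iterating the corollary never drives the edge count below $n$, let alone to $o(n)$ non-forest edges, and the claimed per-step logarithmic shrinkage is false. Pettie's construction (which is what the paper derandomizes) is not a black-box composition of spanners of spanners: it maintains a single cluster hierarchy across $O(\log^* n)$ phases in which the number of clusters decays tower-fast, and the edges added in phase $i$ are charged to the clusters alive at that phase rather than to all $n$ vertices; that accounting is what yields $O(n)$ total edges, and the $2^{O(\log^* n)}$ stretch factor is the multiplicative accumulation over those phases. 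The hitting-set instances that arise are the same as in Baswana--Sen, which is why the derandomization carries over, but the size analysis does not reduce to iterating the previous corollary.

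The second half of your argument --- the spanning forest plus ``landmark'' shortcuts tuned so that the non-forest edges sum to $o(n)$ --- is not a proof but a sketch of an unspecified construction, and it is doing double duty: it is being asked both to repair the failed composition (getting to $O(n)$) and to sharpen $O(n)$ to $(1+o(1))n$. The paper handles the latter step by invoking an existing \emph{deterministic} black-box reduction from $O(n)$-size spanners to $(1+\eps)n$-size spanners (Theorem~1.2 of Bezdrighin et al.~\cite{bezdrighin2022deterministic}), so no new landmark machinery is needed there; what is needed, and what your write-up is missing, is the correct Pettie-style hierarchy for the $O(n)$ bound in the first place. You also do not address the weighted case, which requires the modification of Pettie's (originally unweighted) algorithm from \cite{bezdrighin2022deterministic} and accounts for the $4^{\log^* n}$ versus $2^{\log^* n}$ constant absorbed into the $2^{O(\log^* n)}$ in the statement.
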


\medskip
By slight generalizations of our hitting set, we also obtain an efficient parallel derandomization of approximate distance oracles constructions: 
\begin{corollary} (\textbf{approximate distance oracle})
Given an undirected weighted graph $G = (V,E)$, a set of sources $S \subseteq V$ with $s=|S|$, and stretch and error parameters $k$ and $\eps > 0$, there is a deterministic algorithm that solves the source-restricted distance oracle problem with $\tO_{\eps}(ms^{1/k})$ work and $\tO_{\eps}(\poly(\log n))$ depth in the \pram model. The data structure has size $O(nk s^{1/k})$ and for each query $(u,v)$, the oracle can return a value $q$ in $O(k)$ time that satisfies
\begin{equation*}
    d(u,v) \leq q \leq (2k-1)(1 + \eps)d(u,v).
\end{equation*}
\end{corollary}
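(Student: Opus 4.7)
The plan is to derandomize the source-restricted Thorup--Zwick approximate distance oracle by recasting each random sampling step of its hierarchy as an instance of the hitting set problem and invoking \Cref{thm:hittingset-pram}. I would build a chain $A_0 = S \supseteq A_1 \supseteq \cdots \supseteq A_{k-1} \supseteq A_k = \emptyset$ in which the randomized version picks each element of $A_{i-1}$ for $A_i$ independently with probability $s^{-1/k}$. For every $v \in V$ I would record the pivot $p_i(v) \in A_i$ (the nearest element of $A_i$) and the bunch $B(v) = \bigcup_{i=0}^{k-1} \{w \in A_i \setminus A_{i+1} : d(v,w) < d(v,A_{i+1})\}$. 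Together with the standard Thorup--Zwick query procedure, which scans levels until a pivot of one endpoint lies in the bunch of the other, this yields a data structure of size $O(nks^{1/k})$, query time $O(k)$, and stretch $2k-1$.

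To derandomize level $i$, I would view the choice of $A_i \subseteq A_{i-1}$ as a bipartite hitting set instance with $p = s^{-1/k}$: take $B := A_{i-1}$, and for each vertex $v \in V$ add a constraint set consisting of its $\Theta(s^{1/k} \log n)$ nearest vertices in $A_{i-1}$, with an appropriately scaled cost so that an unhit constraint directly accounts for the level-$(i-1)$ contribution to the bunch size of $v$. Applying \Cref{thm:hittingset-pram} returns an $A_i$ of size $O(s^{-1/k}|A_{i-1}|)$ for which the total unhit cost $\sum_v c_v (1-p)^{\deg(v)}$ is suitably controlled; by choice of the constraints this forces the level-$(i-1)$ contribution to $|B(v)|$ to be $\tO(s^{1/k})$ for every $v$. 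Summing over the $k$ levels gives total bunch size $\tO(nks^{1/k})$ deterministically, while the per-level hitting set invocation costs $\tO(m)$ work and $\poly(\log n)$ depth.

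To set up each hitting set instance and to populate the pivots and bunches once $A_i$ is fixed, I would run a parallel $(1+\eps)$-approximate multi-source shortest path computation from $A_i$, truncated once each vertex has identified its $\tO(s^{1/k})$ nearest level-$(i-1)$ vertices; an off-the-shelf approximate multi-source SSSP routine performs this in $\tO_\eps(ms^{1/k})$ work and $\poly(\log n)$ depth. This is the source of the $\eps$-dependence and explains why the output stretch degrades to $(2k-1)(1+\eps)$: the query procedure itself is unchanged, but distances carry a multiplicative $(1+\eps)$ slack that propagates through the standard Thorup--Zwick correctness argument. Summing SSSP and hitting set costs over the $k$ levels remains within $\tO_\eps(ms^{1/k})$ work and $\poly(\log n)$ depth.

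The main obstacle is translating the aggregate unhit-cost guarantee of \Cref{thm:hittingset-pram}, which is essentially an expectation-style bound, into a \emph{per-vertex} bunch-size bound strong enough to meet the $O(nks^{1/k})$ target in the corollary. The original randomized proof relies on concentration of truncated geometric random variables, which is not directly available after derandomization. I would work around this by partitioning each vertex's candidate pool into $O(\log n)$ distance shells of geometrically increasing radius and invoking hitting set simultaneously on all shells, so that the aggregate guarantee yields the desired per-shell bound up to polylogarithmic loss; these extra polylog factors can be absorbed into the hidden $\tO_\eps(\cdot)$ notation in the work bound, and to bring the size back down to $O(nks^{1/k})$ I would add a cheap $O(\log n)$-round peeling step that discards any vertex whose current bunch exceeds the target before proceeding to the next level.
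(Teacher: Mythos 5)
Your overall route is the same as the paper's: follow the Roditty--Thorup--Zwick hierarchy $A_0=S\supseteq A_1\supseteq\cdots\supseteq A_k=\emptyset$, derandomize each level's sampling via the hitting set machinery of \Cref{thm:hittingset-pram} applied to the sets of $\Theta(s^{1/k}\log n)$ nearest $A_{i-1}$-vertices, and replace exact SSSP by the $(1+\eps)$-approximate parallel SSSP of Rozho\v{n} et al., which is exactly where the $(1+\eps)$ in the stretch comes from. (The paper additionally handles the hash tables via Alon--Naor, a detail you omit but which is routine.)

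There are, however, two genuine problems in how you account for the bunch size. First, the ``main obstacle'' you identify is not an obstacle: the $O(nks^{1/k})$ bound is on the \emph{total} size $\sum_v|B(v)|$, not on each $|B(v)|$ individually, and the level-$i$ contribution to this total is precisely the aggregate quantity $\sum_{v}|\{w\in A_{i-1}\setminus A_i : d(w,v)<d(A_i,v)\}|$ --- an expectation-style aggregate guarantee is exactly what is needed, so no per-vertex concentration, shell partitioning, or peeling is required. Worse, your peeling step that ``discards any vertex whose current bunch exceeds the target'' would break correctness of the query procedure, which relies on every vertex storing its full bunch. Second, and more substantively, the plain weighted hitting set objective $\sum_v c_v\,\mathbb{1}[A_i\cap S_v=\emptyset]$ does not control the bunch size: a vertex $v$ whose constraint set \emph{is} hit, but only at a late position in the distance order, still contributes all the closer unhit elements of $A_{i-1}$ to $B(v)$, and the plain objective charges nothing for this. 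This is exactly why the paper formulates the \emph{hitting ordered set} problem, where the cost charged to $v$ is the rank of the first hit element in the distance order $\pi_{i,v}$, and reduces it to the plain problem via geometrically growing \emph{rank prefixes} $\{\pi_{i,v}(1),\ldots,\pi_{i,v}(2^j)\}$ weighted by their sizes (\Cref{lem:hittingset-ordered-reduction}). Your distance-radius shells are a step in this direction but do not recover the rank-of-first-hit cost, which is the quantity that actually bounds $|B(v)|$; replacing them by the rank-prefix reduction closes the gap.
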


The proof is presented in \Cref{subsec:distance-oracles}. The corresponding centralized randomized construction was presented in the celebrated work of Thorup and Zwick~\cite{thorup2005approximate}. A centralized derandomization was given by Roddity, Thorup, and Zwick~\cite{roditty2005deterministic} but that approach does not appear to be applicable in parallel/distributed settings of computation.

	\section{Preliminaries}
	\label{sec:preliminaries}
	
	We use standard graph theoretic notation throughout the paper. 
	All graphs are undirected and unweighted. 
	For a graph $G = (V, E)$, we use $d_G$ or just $d$ to denote the distance metric induced by its edges. For sets of nodes $U,W \subseteq V(G)$, we generalize $d$ by $d(U, W) = \min_{u \in U, w \in W} d(u,w)$.

	\paragraph{Clustering}
	
	Given a graph $G$, its cluster $C$ is simply a subset of nodes of $V(G)$. 
	The strong-diameter $\diam(C)$ of a cluster $C$ is defined as $\diam(C) = \max_{u,v \in C} d_{G[C]}(u,v)$. 
	We note that there is a related notion of \emph{weak-diameter} of a cluster $C$ which is defined as the smallest $D$ such that $\forall u,v \in C : d_G(u,v) \le D$. That is, $C$ can even be disconnected, but there has to be a short path between any two nodes if we are allowed to use all nodes of $G$, not just nodes of $C$. 
	
	Although a cluster is simply a subset of $V(G)$, during the construction of a clustering, we keep its \emph{center} node $v \in C$ and often we work with an arbitrary breadth first search tree of $C$ from $v$. 
	
	The basic object we construct in this paper is \emph{separated clusterings}, which we formally define next.

	\begin{definition}[$s$-separated clustering]
	\label{def:clustering}
	Given an input graph $G$, a \emph{clustering} $\fC$ is a collection of disjoint \emph{clusters} $C_1, \dots, C_t$, such that for each $i$ we have $C_i \subseteq V(G)$. We say that the clustering has \emph{(strong-)diameter} $D$ whenever the diameter of each graph $G[C_i]$, $1 \le i \le t$, is at most $D$. We say that the clustering is $s$-separated if for every $1 \le i < j \le t$ we have $d_G(C_i, C_j) \ge s$. We sometime refer to this by saying that the clustering has separation $s$.  
	\end{definition}
	
	We will also need the following non-standard notion of $s$-hop degree of a cluster defined as follows. 
	\begin{definition}[$s$-hop degree]
	\label{def:shop_degree}
	Let $\fC$ be some clustering and $C \in \fC$ be a cluster with a fixed spanning tree $T_C$ rooted at $r \in C$. 
	The $s$-hop degree of $C$ in $\fC$ is the minimum number  $d$ such that for each $u \in C$ and the unique path $P_u$ from $u$ to $r$ in $T_C$ the following holds: The number of different clusters $C' \in \fC$ such that $d(P_u, C') \le s$ is at most $d$. 
	\end{definition}
	The $s$-hop degree of a clustering $\fC$ is the maximum $s$-hop degree over all clusters $C \in \fC$. 
	

	\section{Improved Network Decomposition, Outline}
	\label{sec:outline}
	To prove \Cref{thm:NetDecomp}, our core result is captured by the following low-diameter clustering statement, which clusters at least half of the vertices. \Cref{thm:NetDecomp} follows directly by repeating this clustering for $O(\log n)$ iterations, each time in the graph induced by the nodes that remain unclustered in the previous iterations. 

 \begin{theorem}\label{thm:Clustering-main} There is a deterministic \congest algorithm that runs in $\tO(\log^2 n)$ rounds and computes a clustering of at least $\frac{n}{2}$ nodes, with strong diameter $O(\log n \cdot \log\log\log n)$, and separation $2$.
 \end{theorem}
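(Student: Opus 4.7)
The plan is to cast \Cref{thm:Clustering-main} as a derandomized Miller-Peng-Xu / Elkin-Neiman style ball-growing clustering, designed so that the analysis needs only \emph{pairwise-independent} randomness. The randomized template I have in mind is as follows. Each vertex $v$ independently draws an integer shift $\delta_v$ from a bounded distribution on $\{0,1,\ldots,R\}$ with $R = O(\log n)$. Every node $u$ joins the cluster of the center $v^\star$ maximizing $\delta_v - d(u,v)$ (ties broken by ID), and is peeled off (left unclustered) whenever two different candidate centers are within two hops, which enforces separation $s = 2$. Under full independence this template would give strong diameter $O(R)$ and constant coverage; the task is to recover those guarantees using only pairwise-independent shifts over an $O(\log n)$-bit seed, and then to derandomize.

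To push a pairwise-independent analysis through, I would split the construction into $t = O(\log\log\log n)$ \emph{stages}. In stage $i$, starting from the clustering $\fC_{i-1}$ produced so far, each existing cluster grows outward by up to $O(\log n)$ hops using the ball-growing rule above with a fresh pairwise-independent seed; equivalently, each still-unclustered vertex decides whether to join some neighboring cluster. The per-stage analysis is a second-moment (Chebyshev) bound on the number of newly-clustered vertices minus those lost to the separation-peeling step; pairwise independence suffices because this quantity is a sum of indicators whose variance involves only pairs of vertices. Each stage increases the strong diameter by at most $O(\log n)$, preserves separation $2$, and keeps the $s$-hop degree from \Cref{def:shop_degree} polylogarithmic. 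After $t$ stages, the final clustering has strong diameter $O(\log n \cdot \log\log\log n)$ and covers at least $n/2$ nodes.

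Each stage is derandomized by the method of conditional expectation on its $O(\log n)$-bit pairwise-independent seed. The potential function is a lower bound on ``new coverage minus peel-off loss'' that decomposes into a sum of pair contributions; each pair contribution depends on the randomness of only two vertices and is therefore locally computable. To fix the seed bit by bit while aggregating conditional expectations globally in \congest, I would invoke the $O(\log n)$-color $O(\log^2 n)$-strong-diameter decomposition of Elkin et al.~\cite{elkin2022deterministic} as a black-box scaffold. Processing color classes in sequence and aggregating along each decomposition-cluster's BFS tree fixes one bit in $\tO(\log^2 n)$ rounds; amortizing the scaffold across the $O(\log n)$ bits of the stage's seed and across the $t$ stages gives an overall $\tO(\log^2 n)$ round complexity, matching the theorem.

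The main obstacle is showing that pairwise independence really suffices per stage without blowing up the per-stage radius beyond $O(\log n)$. Classical MPX-style analyses use the memoryless property of exponential shifts, which is inherently full-independence; a naive replacement with a Chebyshev bound would give only $1/\poly(\log n)$ fraction coverage per stage, forcing $\Omega(\log n)$ stages and returning us to the $\Omega(\log^2 n)$ diameter barrier of~\cite{chang2021strong}. The delicate step will be choosing the shift distribution and the peeling rule so that pairwise independence alone yields coverage $1 - 1/\poly(\log\log n)$ per stage, so that $O(\log\log\log n)$ stages suffice. A secondary hurdle is keeping the $s$-hop degree polylogarithmic across stages, which is what makes the construction---and, crucially, its conditional-expectation derandomization---implementable in \congest\!\!.
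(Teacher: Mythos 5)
There is a genuine gap, and it sits exactly where you flag the ``delicate step.'' Your plan enforces separation $2$ \emph{inside} each randomized stage (via the peeling rule) and then hopes that some choice of shift distribution makes a pairwise-independent analysis yield coverage $1-1/\poly(\log\log n)$ per stage. No such distribution is exhibited, and this is precisely the obstruction the paper works around rather than through: demanding separation during the randomized step is what forces the coverage loss under limited independence. The paper's resolution decouples the three goals. First (\cref{thm:low_degree_main}) it clusters half the nodes with strong diameter $O(s\log n)$ but \emph{no separation at all} --- only a bound of $O(\log\log n)$ on the $s$-hop degree; this relaxation is what makes a pairwise analysis of the MPX delays go through (each geometric coin flip is simulated by $k=O(\log\log n)$ rounds of pairwise-independent subsampling with probability $1/(4k)$, controlled by an inner/outer potential). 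Second (\cref{thm:subsampling_main}) it subsamples whole clusters pairwise-independently to restore separation $s$, paying a $1/O(\log\log n)$ factor in coverage. Third (\cref{thm:clusteringmorenodesmain}) a purely deterministic ball-growing reduction boosts coverage back to $n/2$ at separation $2$, provided the earlier separation was $s=\Theta(\log\log\log n)$. The $O(\log n\cdot\log\log\log n)$ diameter thus comes from a \emph{single} application of the clustering with separation parameter $s=O(\log\log\log n)$, not from iterating $O(\log\log\log n)$ stages of radius $O(\log n)$ each; your staged accumulation of diameter is solving a different (and unproven) problem.

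The round-complexity accounting is also off. Fixing one seed bit by conditional expectation over an $O(\log n)$-color, $O(\log^2 n)$-diameter network-decomposition scaffold costs $(\text{colors})\times(\text{diameter})$ rounds, and with $O(\log n)$ seed bits per stage this is $\Omega(\log^4 n)$ rounds before any amortization; the paper itself only claims $s\cdot\poly(\log n)$ for the global-derandomization route. To reach $\tO(\log^2 n)$ the paper instead invokes the local rounding framework of Faour et al.~\cite{Faour2022} (\cref{lemma:long-range-d2rounding}), which rounds a pairwise utility/cost objective over a long-range multigraph using only the $\tO(s\log n)$-round communication primitives of \cref{lem:low_degree_delays_communication}, with no global aggregation. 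Without that ingredient (or something equivalent), your construction does not meet the stated round bound.
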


There are three ingredients in proving \Cref{thm:Clustering-main}, as we discuss next:

\paragraph{(A) Low-Degree Clustering.} The most important ingredient, captured by \Cref{thm:low_degree_main} and proven in \Cref{sec:low_degree}, is a clustering that manages to cluster half of the vertices but in which we have relaxed the separation/non-adjacency requirement of the clustering. Instead, we want each cluster to have $\separation$-hop degree of at most $\lceil 100 \log \log (n) \rceil$. See \cref{def:shop_degree} for the definition. 
For this ingredient, we present a randomized algorithm with pairwise analysis and then we derandomize it. 
	
	\begin{restatable*}{theorem}{lowdegreemain}
		\label{thm:low_degree_main}
		Let $\separation \geq 2$ be arbitrary. There exists a deterministic \congest algorithm running in $\tO(\separation \log^2(n))$ rounds which computes a clustering $\fC$ with
		
		\begin{enumerate}
			\item strong diameter $O(\separation \log(n))$,
			\item $\separation$-hop degree of at most $\lceil 100 \log \log (n) \rceil$, and
			\item the number of clustered nodes is at least $n/2$.
		\end{enumerate}
	\end{restatable*}

\paragraph{(B) From Low-Degree to Isolation.} The second ingredient, captured by \Cref{thm:subsampling_main} and proven in \Cref{sec:LowDegtoIsolation} is able to receive the clustering algorithm of (A) and turn it into a true clustering with separation $s$, but at the expense of reducing the number of clustered nodes by an $O(\log\log n)$ factor. For this ingredient as well, we first present a simple randomized algorithm with pairwise analysis, and then we derandomize it.

	\begin{restatable*}{theorem}{subsamplingmain}
		\label{thm:subsampling_main}
		Assume we are given a clustering $\fC$ with 
		
		\begin{enumerate}
			\item strong diameter $O(\separation \log (n))$ and
			\item $\separation$-hop degree of at most $\lceil 100 \log \log (n) \rceil$.
        \end{enumerate}
		There exists a deterministic \congest algorithm running in $\tO(\separation \log^2(n))$ rounds which computes a clustering $\fC^{out}$ with
		
		\begin{enumerate}
			\item strong diameter $O(\separation \log (n))$,
			\item separation of $\separation$ and
			\item the number of clustered nodes is $\frac{|\fC|}{1000 \log \log (n)}$.
		\end{enumerate} 
	\end{restatable*}
	
 \paragraph{(C) Improving Fraction of Clustering Nodes.} The third and last ingredient, captured by \Cref{thm:clusteringmorenodesmain} and proven in \Cref{sec:clusteringmorenodes}, receives the clustering algorithm of part (B) with a suitably high separation parameter (which is at least logarithmically related to the fraction of nodes clustered) and transforms it into a clustering of at least half of the nodes, at the expense of reducing the separation to simply $2$. This ingredient is a deterministic reduction and needs no derandomization and explains the final logarithm in the guarantees of \cref{thm:Clustering-main} (the first two logarithms are coming already from \cref{thm:low_degree_main}). 

\begin{restatable*}{theorem}{clusteringmorenodesmain}
    \label{thm:clusteringmorenodesmain}
	Let $x \geq 2$ be arbitrary.
	Assume there exists a deterministic \congest algorithm $\mathcal{A}$ running in $R$ rounds which computes a clustering $\fC$ with
	
	\begin{enumerate}
		\item strong diameter $O(x \log n)$,
		\item separation $10 \cdot x$ and
		\item clustering at least $\frac{n}{2^x}$ nodes.
	\end{enumerate}
	
	Then, there exists a deterministic \congest algorithm $\mathcal{A}'$ running in $O(2^x (R  + x \log n))$ rounds which computes a clustering $\fC'$ with
	
	\begin{enumerate}
		\item strong diameter $O(x \log n)$,
		\item separation $2$ and
		\item clustering at least $\frac{n}{2}$ nodes.
	\end{enumerate}
\end{restatable*}
	
	\vspace{1em}
Having all three ingredients \cref{thm:low_degree_main,thm:subsampling_main,thm:clusteringmorenodesmain}, we simply put them all together to prove \cref{thm:Clustering-main}. 

\begin{proof}[Proof of \Cref{thm:Clustering-main}]
Let $x=\lceil \log(2000 \log\log n)\rceil$. First, from \Cref{thm:low_degree_main}, we get a clustering of $n/2$ nodes with strong diameter $O(\log n\cdot \log\log\log n)$ and $10x$-hop degree at most $\lceil 100 \log \log (n) \rceil$, in $\tO(\log^2 n)$ rounds. Feeding this clustering algorithm to \Cref{thm:subsampling_main} produces a clustering algorithm that clusters $\frac{n}{2000 \log\log n}$ nodes with strong diameter $O(\log n\cdot \log\log\log n)$ and separation $10 x$, in $\tO(\log^2 n)$ rounds. Hence, this clustering can be put as input for \Cref{thm:clusteringmorenodesmain}, which as a result gives a clustering of at least $n/2$ nodes with strong diameter $O(\log n \cdot \log\log\log n)$, and separation $2$, in $\tO(\log^2 n)$ rounds.
\end{proof}

	\section{Low-Degree Clustering}

	\label{sec:low_degree}
	   
	   This section is devoted to proving the following theorem discussed in \cref{sec:outline}. 
	   
	\lowdegreemain
	
	\paragraph{Intuition Behind the Proof of \Cref{thm:low_degree_main}.} 
	
	In this paragraph, we give a brief intuition behind the proof of \cref{thm:low_degree_main}. %
	Our clustering algorithm can be viewed as derandomization of the randomized clustering algorithm of Miller, Peng, and Xu~\cite{miller2013parallel} (MPX). This is an algorithm that can cluster $n/2$ nodes with strong diameter $O(s \log n)$ such that the $s$-hop degree of the constructed clustering is in fact $1$, or in other words, the clustering is $s$-separated. 
	
	In the MPX algorithm, we simply run a breadth first search from all nodes of $V(G)$ at once, but every node starts the search only after a random delay computed as follows. Every node $v \in V(G)$ starts with the delay $\del(v) = O(s \log n)$. Next, each node starts flipping a coin and each time it comes up heads, it decreases its delay by $5s$. If it comes up tails, it stops the process. That is, the delays come from an exponential distribution; even more precisely, each node gets a head start coming from an exponential distribution, we talk about delays and add $O(s \log n)$ to make all numbers positive with high probability. 
	
	The guarantees of the MPX algorithm stem from the following observation. Let $u \in V(G)$ be arbitrary and let $\wait(u)$ be the first time $u$ is reached by above breadth first search with delays. Let $\frontier^{2s}(u)$ be the number of nodes $v \in V(G)$ such that $\del(v) + d(v,u) \le \wait(u) + 2s$. That is, $\frontier^{2s}(u)$ contains nodes who can reach $u$ after at most $2s$ additional steps after $u$ is reached for the first time. 
	We claim that with positive constant probability $\frontier^{2s}(u) = 1$, i.e., after the first node reaches $u$, it takes at least $2s$ additional steps until the next node reaches $u$. 
	
	To see this, replace each node $v \in V(G)$ by a runner on a real line who starts at position $d(u,v) + O(s \log n)$ (and may move toward left, as we soon discuss). Then, the exponential distribution that defines the delays corresponds to each runner flipping her coin until it comes up tails. For each heads, the runner runs distance $5s$ to the left. 
	We now let the runners flip the coins one by one. When a runner $r_j$ is flipping her coin, we consider the leftmost runner $r'_j$ out of the runners $r_1, \dots, r_{j-1}$ that already flipped their coins. 
	We observe that if $r_j$ at some point reaches a position at most $5s$ to the right from $r'_j$, we also have that $r_j$ runs to the distance $5s$ to the left of $r'_j$ with positive constant probability. 
	
	{\bf Derandomization: } Let us now explain the intuitive reason why we lose a factor of $O(\log \log n)$ in \cref{thm:low_degree_main}. Our derandomized algorithm simulates the coin flipping procedure step by step, for $O(\log n)$ steps, until every runner finally flips a tail and finishes. 
	In contrast to the previous simple algorithm, we now have to track our progress after every step. 
	So, our analysis is a derandomization of the following, different, randomized analysis of the same running process. 
	In this new randomized analysis, in each step $i$ and for each node $u$, we consider, very informally speaking, the event that the coin of all the runners that are currently at distance at most $2s$ from the leading runner comes up tail, where $t$ is a parameter we compute later. 
	The probability of this event is $2^{-t}$. 
	This means that the probability of this bad event happening in one of the $O(\log n)$ steps is at most $O(\log n) \cdot 2^{-t}$. 
	Choosing $t = O(\log \log n)$ makes this probability constant. Going back to the analysis of MPX, we get that at least half of nodes $u$ have $|\frontier^{2s}(u)| = O(\log  \log n)$. 
	
	Although this new randomized analysis loses a factor of $O(\log \log n)$, we can derandomize it in this section by setting up suitable potentials and derandomizing the coin flips of each step. To do so, we in fact simulate one fully-independent coin flip of each node in $O(\log\log n)$ steps where in each step we only use pairwise-independent random bits. 
	
    The rest of the section is structured as follows. In \cref{lem:pairwise_del_clustering}, we show how computing suitable delays gives rise to the final clustering. This step is simple and does not rely on derandomization. \cref{thm:low_degree_delay_main} then shows how to compute the node delays that simulate the MPX analysis as discussed above. 
    To sample even one ``coin flip'' of MPX, we need to invoke $O(\log\log n)$ times the local derandomization lemma of \cite{Faour2022}. One call of this lemma corresponds to \cref{thm:deterministic-goodSet-selection}.

	\paragraph{Basic Definitions}
	
		To prove \Cref{thm:low_degree_main}, we first need to define the notions of delay, waiting time, and a frontier:
	
	\begin{definition}[delay $\del$, waiting time $\wait_\del(u)$, and frontier $\frontier^{D}_{\del}(u)$]
	    \label{def:low_degree_del}
		A delay function $\del$ is a function assigning each node $u \in V$ a \emph{delay} $\del(u) \in \{0,1,\ldots,O(\separation \log (n))\}$.
		The waiting time of a node $u \in V$, with respect to a delay function $\del$,  is defined as
		\[\wait_{\del}(u) = \min_{v \in V} \left( \del(v) + d(v,u)\right).\] 
		The intuition behind $\wait_{\del}(u)$ is as follows: Assume that each node $v$ starts sending out a token at time $\del(v)$. Then, $\wait(u)$ is the time it takes until $u$ receives the first token.
		
		Furthermore, for every parameter $D \geq 0$, the \emph{frontier of width $D$} of a node $u \in V$, with respect to a delay function $\del$, is defined as
		
		\[\frontier^{D}_\del(u) = \{v \in V \colon \del(v) + d(v,u) \leq \wait_{\del}(u) + D\}.\] Informally, $\frontier^D(u)$ contains each node $v$ whose token arrives at $u$ at most $D$ time units after $u$ receives the first token.
	\end{definition}

 \smallskip

	 %
	   %
	  %
	  %

       \paragraph{Clustering from given delays.} The delay of each vertex is computed by a procedure provided in \Cref{alg:low_degree_delay_alg}. Before discussing that, we first explain how each delay function $\del$, along with a separation parameter $s$, give rise to a clustering $\fC^{\del}$: The clustering $\fC^{\del}$ clusters all the nodes that have a small frontier of width $2 \separation$. 
	  In particular, each node $u \in V(G)$ satisfying $|\frontier^{2\separation}(u)| \leq \lceil 100 \log \log(n)\rceil$ is included in some cluster of $\fC^{\del}$.
	  More concretely, each clustered node $u$ gets clustered to the cluster corresponding to the node with the smallest identifier in the set $\frontier^0(u)$. In other words, $u$ gets clustered with the cluster of the minimizer of $\wait(u)$, where we use the smallest identifier to break ties.  In the following text, we denote this node by $c_u$. See \cref{fig:mpx} for an illustration of this clustering. 
	  
	  \begin{figure}
	      \centering
	      \includegraphics[width = .6\textwidth]{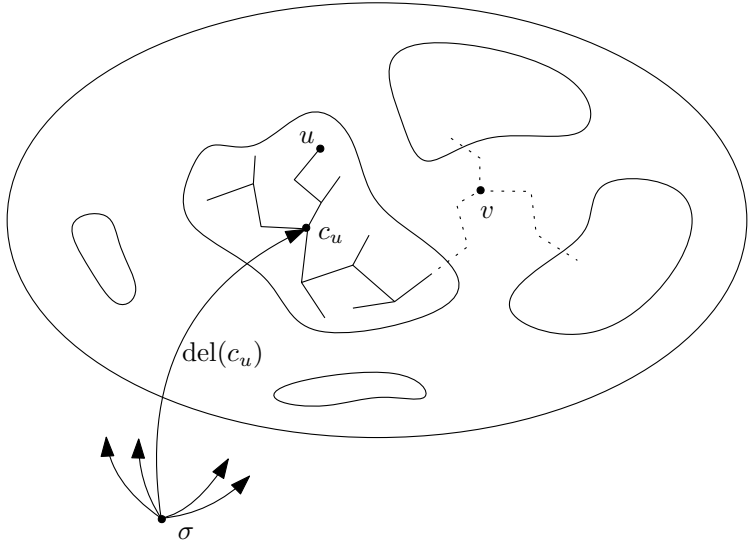}
	      \caption{The figure shows the run of the clustering algorithm constructing $\fC^{\del}$. 
	      The algorithm can be seen as starting a breadth first search from  a single node $\sigma$ connected to every node $u \in V(G)$ with an edge of length $\del(u)$ (the \congest implementation of the algorithm does not need to simulate any such node $\sigma$). 
	      The value $\wait(u)$ is the time until the search reaches the node $u$. The node that reaches $u$ the first is denote $c_u$. 
	      Moreover, we cluster only nodes such that the size of their frontier of width $2s$ is at most $O(\log\log n)$. For example, the node $v$ is not clustered because after it is reached by the first node $c_v$, it is reached by $\Omega(\log \log n)$ other nodes in the following $2s$ steps. 
	      It can be seen that for any $w$ on the path from $c_u$ to $u$, we have $\frontier^{2s}(w) \subseteq \frontier^{2s}(u)$, hence the constructed clusters are connected.  
	      }
	      \label{fig:mpx}
	      
	  \end{figure}

	  \begin{lemma}
	  	\label{lem:pairwise_del_clustering}
	  	Let $\del$ be a delay function and $\fC^{\del}$ the corresponding clustering, as described above. Then, the clustering $\fC^{\del}$ has
	  	\begin{enumerate}
	  		\item strong diameter $O(\separation \log n)$,
	  		\item $\separation$-hop degree at most $\lceil 100 \log \log(n)\rceil$ and
	  		\item the set of clustered nodes is equal to $|\{u \in V \colon \frontier^{2\separation}(u)| \leq \lceil 100 \log \log(n)\rceil\}|$.
	  	\end{enumerate}
	  	Moreover, the clustering $\fC^{\del}$ can be computed in $O(\separation \log n \log\log n)$ \congest rounds. 
	  \end{lemma}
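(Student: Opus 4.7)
The plan is to derive the three structural guarantees (1)--(3) from a single monotonicity property of the $2\separation$-frontier along shortest paths in the delay-augmented BFS, and then sketch the CONGEST simulation for (4). Part~(3) is essentially true by construction: the algorithm places $u$ in a cluster exactly when $|\frontier^{2\separation}(u)| \le \lceil 100 \log\log(n)\rceil$, with cluster identity determined by $c_u$, the smallest-identifier element of $\frontier^{0}(u)$.

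The technical core is the following claim, which I would establish first: if $w$ lies on a shortest path in $G$ from $c_u$ to $u$, then $c_w = c_u$ and $\frontier^{2\separation}(w) \subseteq \frontier^{2\separation}(u)$. Observing that $\wait(u) \le \wait(w) + d(w,u)$ always holds by triangle inequality applied to the min-formula for $\wait$, while the reverse bound $\wait(w) \le \del(c_u) + d(c_u, w) = \wait(u) - d(w, u)$ follows from $w$ lying on a shortest $c_u$-to-$u$ path, gives $\wait(w) = \wait(u) - d(w, u)$ and shows that $c_u$ is a minimizer at $w$ as well. Any other $v$ achieving the minimum at $w$ satisfies $\del(v) + d(v, u) \le \wait(w) + d(w, u) = \wait(u)$, so it also minimizes at $u$; the smallest-identifier tie-break therefore forces $c_w = c_u$. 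For the frontier containment, any $v \in \frontier^{2\separation}(w)$ satisfies $\del(v) + d(v, u) \le \del(v) + d(v, w) + d(w, u) \le \wait(w) + 2\separation + d(w, u) = \wait(u) + 2\separation$.

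From here, parts (1) and (2) are short. For (1), I would take the BFS tree rooted at $c_u$ as the spanning tree of each cluster: the claim shows every node $w$ on the shortest $u$-to-$c_u$ path is itself clustered to $c_u$ (it inherits a small frontier), so the cluster is connected in $G[C]$, and the diameter bound $O(\separation \log n)$ follows from $d(u, c_u) \le \wait(u) \le O(\separation \log n)$, using that $\del$ takes values in $\{0, \ldots, O(\separation \log n)\}$. For (2), fix $u \in C$, the path $P_u$ from $u$ to $c_u$ in this tree, and a distinct cluster $C'$ with center $c'$ at distance at most $\separation$ from some $w \in P_u$ via a witness $v' \in C'$. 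Then
\[
\del(c') + d(c', w) \le \del(c') + d(c', v') + d(v', w) = \wait(v') + d(v', w) \le \wait(w) + 2\separation,
\]
where the last step combines $\wait(v') \le \wait(w) + d(v', w)$ with $d(v', w) \le \separation$. Hence $c' \in \frontier^{2\separation}(w) \subseteq \frontier^{2\separation}(u)$ by the key claim, and since distinct clusters have distinct centers, there are at most $\lceil 100 \log\log(n)\rceil$ such $C'$.

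The main remaining (and minor) obstacle is the CONGEST round bound in (4). I would simulate a BFS from a virtual super-source connected to each $v$ by an edge of length $\del(v)$, running for $O(\separation \log n) + 2\separation$ BFS layers. Each node needs to remember only the first $\lceil 100 \log\log(n)\rceil + 1$ centers whose tokens arrive, so over the entire execution each edge ever needs to forward only $O(\log\log n)$ distinct center identifiers; pipelining these identifiers (with ties broken by smallest identifier in each round) over the $O(\separation \log n)$-depth BFS yields the claimed $O(\separation \log n \cdot \log\log n)$-round bound.
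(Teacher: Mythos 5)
Your proposal is correct and follows essentially the same route as the paper: the same key claim that along a shortest $u$-to-$c_u$ path both the frontier containment $\frontier^{D}(w) \subseteq \frontier^{D}(u)$ and $c_w = c_u$ hold, the same triangle-inequality argument placing the centers of nearby clusters into $\frontier^{2\separation}(u)$ for the hop-degree bound, and the same token-limited delayed BFS for the \congest implementation.
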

	  
	  To prove \Cref{lem:pairwise_del_clustering}, we first observe that the frontiers have the following property. 
	  
	  \begin{claim}
	  \label{cl:frontier}
	  Let $w$ be any node on a shortest path from $u$ to $c_u$ and let $D \ge 0$. Then, we have (I) $\frontier^D(w) \subseteq \frontier^D(u)$, and (II) $c_w = c_u$.  
	  \end{claim}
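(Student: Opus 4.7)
The plan is to reduce both parts of the claim to a single identity,
\[
\wait_\del(w) \;=\; \wait_\del(u) - d(u,w),
\]
together with the observation that $c_u$ is itself a minimizer realizing this value. Once these are established, both (I) and (II) are immediate consequences of the triangle inequality.

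I would prove the identity in two directions. The upper bound comes from plugging $v = c_u$ into the definition of $\wait_\del(w)$: since $w$ lies on a shortest path from $u$ to $c_u$, we have the exact equality $d(c_u, u) = d(c_u, w) + d(w, u)$, and hence
\[
\wait_\del(w) \;\le\; \del(c_u) + d(c_u, w) \;=\; \del(c_u) + d(c_u, u) - d(u, w) \;=\; \wait_\del(u) - d(u, w).
\]
The lower bound is the generic triangle inequality applied to an arbitrary $v$: $\del(v) + d(v, w) \ge \del(v) + d(v, u) - d(w, u) \ge \wait_\del(u) - d(u, w)$. Since equality in the upper bound is realized by $c_u$, the node $c_u$ belongs to the set of minimizers defining $\wait_\del(w)$.

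For part (I), given $v \in \frontier^D(w)$, I would chain the triangle inequality with the identity:
\[
\del(v) + d(v, u) \;\le\; \del(v) + d(v, w) + d(w, u) \;\le\; \wait_\del(w) + D + d(w, u) \;=\; \wait_\del(u) + D,
\]
which gives $v \in \frontier^D(u)$. For part (II), any minimizer $v^\star$ of $\wait_\del(w)$ satisfies $\del(v^\star) + d(v^\star, u) \le \wait_\del(w) + d(w, u) = \wait_\del(u)$, and because $\wait_\del(u)$ is itself a minimum this inequality must be tight. So every minimizer of $\wait_\del(w)$ is also a minimizer of $\wait_\del(u)$. The set of minimizers defining $c_w$ is therefore a subset of the set defining $c_u$, and it still contains $c_u$ by the previous paragraph; since $c_u$ is the smallest-ID element of the larger set, it is the smallest-ID element of the subset as well, giving $c_w = c_u$.

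There is essentially no obstacle here: the claim is purely a statement about the shortest-path metric, and the only thing to notice is that the hypothesis ``$w$ lies on a shortest $u$-to-$c_u$ path'' is used precisely to turn the generic inequality $d(c_u, w) + d(w, u) \ge d(c_u, u)$ into the equality needed to make $c_u$ realize the lower bound at $w$. Conceptually, the claim records that both the frontier and the cluster-center assignment behave monotonically as one walks from $u$ back toward $c_u$, which is exactly the fact that will later be used in \Cref{lem:pairwise_del_clustering} to conclude that every cluster of $\fC^\del$ is connected in its induced subgraph and has the claimed strong diameter.
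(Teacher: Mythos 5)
Your proof is correct and follows essentially the same route as the paper: both arguments hinge on the shortest-path equality $d(c_u,u)=d(c_u,w)+d(w,u)$ to show $c_u\in\frontier^0(w)$, use the triangle inequality to get $\frontier^D(w)\subseteq\frontier^D(u)$, and then conclude $c_w=c_u$ from the minimizer-subset containment together with the smallest-identifier tie-breaking. Your packaging of the key facts into the single identity $\wait_\del(w)=\wait_\del(u)-d(u,w)$ is just a cleaner presentation of the same two inequalities the paper proves inline.
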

	  \begin{proof}
	  First, we prove (I) $\frontier^D(w) \subseteq \frontier^D(u)$. Consider any $v \in \frontier^D(w)$. We prove that $v \in \frontier^D(u)$. Since $v \in \frontier^D(w)$, we have
	  	$$\del(v) + d(v, w) \le \del(c_u) + d(c_u, w) + D.$$ 
	  	Since $w$ lies on a shortest path from $c_u$ to $u$, we can add $d(w, u)$ to both sides of the equation to conclude that 
	  	$$\del(v) + d(v, w) + d(w, u) \le \del(c_u) + d(c_u, u) + D = \wait(u) + D,$$
	  	 and thus we have
	  	$$\del(v) + d(v, u) \le \wait(u) + D.$$
	  	Hence, we have $v \in \frontier^D(u)$ and (I) is proven. 

    Next, we prove (II) $c_w = c_u$.  In view of the above proof of (I), it suffices to show that $c_u \in \frontier^0(w)$. 
    To prove $c_u \in \frontier^0(w)$, we use the fact that $c_u \in \frontier^0(u)$ and write  %
	\[\del(c_u) + d(c_u,u) = \wait(u) \leq \wait(w) + d(w,u)\]
	%
	 Subtracting $d(w, u)$ from both sides of the equation and using that $w$ lies on a shortest path from $u$ to $c_u$ gives
	  	\[\del(c_u) + d(c_u,w) \leq \wait(w).\]
	  	Thus $c_u \in \frontier^0(w)$ and we are done. 
	  \end{proof}
	  \medskip
   
	  Having \Cref{cl:frontier}, we now go back to present a proof of \cref{lem:pairwise_del_clustering}.
	  \begin{proof}[Proof of \Cref{lem:pairwise_del_clustering}]
	  	We start with the first property. Let $u$ be an arbitrary clustered node and recall that $c_u$ is its cluster center.
	  	As $c_u \in \frontier^0(u)$, we have $d(c_u,u) = \wait(u)$.
	  	Moreover, 
	  	\[\wait(u) = \min_{v \in V} \del(v) + d(v,u) \leq \del(u) + d(u,u) = \del(u) = O(\separation \log n).\]
	  	Hence, we have $d(c_u,u) = O(\separation \log n)$. 
	  	Moreover, \cref{cl:frontier} gives that all nodes on a shortest path from $c_u$ to $u$ are also clustered to $c_u$, implying that the diameter of the cluster is $O(s \log n)$.

	  	Next, we prove the second property.
	  	Consider an arbitrary clustered node $w$. We first show that for an arbitrary clustered node $y$ with $d(w,y) \leq \separation$, it holds that $c_y \in \frontier^{2\separation}(w)$.
	  	To see this, we first use the definition of $c_w$ to write
	  	\begin{align*}
	  	    \del(c_y) + d(c_y,y)  
	  		\leq \del(c_w) + d(c_w,y)
	  	\end{align*}
	  	On one hand, we can use triangle inequality to lower bound the left-hand side by
	  	\begin{align*}
	  		  	    \del(c_y) + d(c_y,y) \ge \del(c_y) + d(c_y, w) - d(w,y) 
\end{align*}
On the other hand, we can use triangle inequality to upper bound the right hand side by
\begin{align*}
	  	    \del(c_w) + d(c_w,y)
	  	    \le \del(c_w) + d(c_w, w) + d(w, y). 
	  	\end{align*}
	  	Putting the two bounds together, we conclude that 
	  	\begin{align*}
	  	    \del(c_y) + d(c_y, w) \le \del(c_w) + d(c_w, w) + 2d(w,y) \le \wait(w) + 2s, 
	  	\end{align*}
	  	where we used our assumption $d(w,y) \le 2s$. Thus, $c_y \in \frontier^{2\separation}(w)$.
	  	
	  	Now, let $u$ be an arbitrary clustered node and $P_u$ the unique path between $u$ and $c_u$ in the tree associated with the cluster. Furthermore, let $C \in \fC^{\del}$ be a cluster with $d(P_u,C) \leq \separation$. Then there exists $w \in P_u$ and $y \in C$ with $d(w,y) \leq \separation$ and the discussion above implies $c_y \in \frontier^{2\separation} (w)$. 
	  	
        We now use \cref{cl:frontier}, which implies that $\frontier^{2\separation} (w) \subseteq \frontier^{2\separation}(u)$. 
	  	Hence, for each cluster $C$ with $d(P_u,C) \leq \separation$, the corresponding cluster center is contained in $\frontier^{2\separation}(u)$. As $u$ is clustered, we know that $|\frontier^{2\separation}(u)| \leq \lceil100 \log \log n\rceil$. Therefore, the $\separation$-hop degree of $\fC^{\del}$ is at most $\lceil100 \log \log (n)\rceil$.
	  
	  The third property follows directly from the definition. 	
	  To finish the proof, we need to show that the algorithm can be implemented in $O(s \log n \cdot \log\log n)$ rounds. 
	  To see this, note that we can compute for each node $u$ whether $|\frontier^{2s}(u)| \le \lceil 100 \log \log(n)\rceil$ or not, as follows: We run a variant of breadth first search that takes into account the delays, where each node $v$ starts sending out a BFS token at time $\del(v)$. Recall that in classical breadth first search, after a node $u$ is reached for the first time by a token (sent from $c_u$), it broadcasts this token to all its neighbors and then it does not redirect any other tokens sent to it. 
    In our version of the search, each node stops redirecting only after at least $\lceil 100 \log \log(n)\rceil$ tokens arrived (we do not take into account tokens that have already arrived earlier) or after it counts $2s$ steps from the arrival of the first token. 
    It can be seen that this algorithm can be implemented in the desired number of rounds. Moreover, every node $u$ learns the value of $|\frontier^{2s}(u)|$ whenever the value is at most $\lceil 100 \log \log(n)\rceil$ and otherwise, it learns the value is larger than this threshold. 
	  \end{proof}

	  In view of \cref{lem:pairwise_del_clustering}, to prove the randomized variant of \Cref{thm:low_degree_main} with pairwise analysis, it suffices to show that \cref{alg:low_degree_delay_alg} stated next computes a delay function $\del$ such that the expected number of nodes $u$ with $|\frontier^{2\separation}(u)| \leq \lceil 100 \log \log(n)\rceil$ is at least $n/2$. We later discuss how this is derandomized.

	\subsection{Computing Delays}

	This subsection is dedicated to proving the following theorem that asserts that we can compute a suitable delay function that can be plugged in \cref{lem:pairwise_del_clustering} that constructs a clustering from it. 
	
	\begin{theorem}
	    \label{thm:low_degree_delay_main}
        \cref{alg:low_degree_delay_alg} runs in $\tO(\separation \log^2(n))$ \congest rounds and computes a delay function $\del$ that satisfies 
        \begin{align}
        \label{eq:houkacka}
        |\{u \in V \colon |\frontier_{\del}^{2s}(u)| \leq \lceil 100 \log \log (n) \rceil\}| \geq n/2.    
        \end{align}
         
    \end{theorem}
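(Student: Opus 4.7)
The plan is to realize \cref{alg:low_degree_delay_alg} as a step-by-step derandomization of the Miller--Peng--Xu exponential-delay process sketched in the intuition paragraph. Every node $u$ starts with initial delay $\del(u)=\Theta(\separation\log n)$, and then, across $R=O(\log n)$ macro iterations, every still-alive node either decreases its delay by $5\separation$ (a \emph{heads}) or becomes dead and freezes its delay (a \emph{tails}). Since the initial delay is $O(\separation\log n)$ and a dead node never updates again, $R=O(\log n)$ macro iterations suffice to freeze every delay with high probability under truly independent coins. \cref{alg:low_degree_delay_alg} executes these macro iterations one at a time, replacing the fair coin of each alive node by a bit produced by a derandomized sampling subroutine, and finally outputs the resulting delay function.

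The correctness argument is driven by a pessimistic estimator $\Phi=\sum_{u\in V}\Phi(u)$ of the quantity in \eqref{eq:houkacka}, where $\Phi(u)$ is a conservative surrogate, given the bits fixed so far, for the conditional probability that $|\frontier_{\del}^{2\separation}(u)|\le t$ at the end of the process, with $t=\lceil 100\log\log n\rceil$. Under fully independent coins, the MPX isolation argument from the intuition paragraph, combined with a union bound of $O(\log n)\cdot 2^{-\Omega(t)}$ across macro iterations, yields $\E[\Phi]\ge n/2$: the leading runner at each $u$ is isolated by $2\separation$ with constant probability, and the slack $t$ absorbs the union bound over iterations. The design task is to pick $\Phi$ so that it (i) is non-increasing in expectation when a step's coins are replaced by pairwise-independent bits, (ii) can be aggregated inside balls of radius $O(\separation\log n)$ in the \congest model, and (iii) certifies \eqref{eq:houkacka} once the process is finished. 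Since $|\frontier^{2\separation}(u)|\le t$ is a threshold on a sum of competing-runner indicators, $\Phi(u)$ is naturally defined by bounding the tail via pair-wise collision indicators, whose expectations depend only on pairwise marginals.

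For a single macro iteration we cannot use pairwise-independent fair coins directly, because pairwise independence is too weak to drive the exponential-delay analysis. Instead, as previewed in the intuition, we simulate one near-unbiased coin per alive node using $O(\log\log n)$ nested rounds of pairwise-independent bit assignments, each obtained by invoking the local derandomization primitive of Faour et al.\ (\Cref{thm:deterministic-goodSet-selection}) against the pessimistic estimator. Each invocation runs on a constraint graph of locality $O(\separation\log n)$ and costs $\tO(\separation\log n)$ \congest rounds, so one macro iteration costs $\tO(\separation\log n)$ and all $R$ of them cost $\tO(\separation\log^2 n)$ in total, matching the claim. The main obstacle I expect is item~(i): constructing a pessimistic estimator whose expected non-increase is verifiable from pairwise marginals only, which forces the frontier event to be decomposed into pair-wise collision events and summed in a way the Faour-et-al.\ primitive can handle, ensuring that in each sub-iteration it outputs an assignment of pairwise-independent bits that does not increase $\Phi$ in expectation.
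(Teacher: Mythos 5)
Your overall architecture is the paper's: derandomize MPX by running $R=O(\log n)$ phases, simulate each node's single unbiased coin by $k=O(\log\log n)$ sub-iterations of pairwise-independent sampling, drive each sub-iteration by a pessimistic estimator built from pairwise collision indicators (the paper's $Y_{i,j}(u)=1-|\alive_{i-1}(u)\cap S_{i,j}|+\binom{|\alive_{i-1}(u)\cap S_{i,j}|}{2}$), and realize each sub-iteration with the Faour et al.\ rounding primitive at locality $\tO(\separation\log n)$, giving $\tO(\separation\log^2 n)$ total. The round-complexity accounting is correct.

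However, there is a genuine gap: you never construct the potential, and the desiderata you list for it omit one of the two things it must control. Your sketch only tracks the frontier-size event at each $u$, and you dispose of termination by saying that $R=O(\log n)$ iterations "suffice to freeze every delay with high probability under truly independent coins." In a deterministic algorithm this has no force: nothing in your estimator prevents the good-set subroutine from selecting essentially all alive nodes in every sub-iteration, in which case no node ever dies, the active set never empties, and the output delays are not those of a completed MPX process (indeed the delays would hit their floor and the frontier analysis breaks). The paper resolves this by making the potential two-headed: the outer potential is $\Phi_i=\sum_u e^{|\dead_i(u)|/10}+2^i|V^{\act}_i|$, whose second term forces $V^{\act}_R=\emptyset$ once $\Phi_R\le 10n\log n$, and the good-set condition correspondingly charges $|S_{i,j}|\cdot 2^i$ against a budget of $\frac{2^{i-1}}{k}|V^{\act}_{i-1}|$, so that survival is rationed at rate roughly $1/2$ per phase while still guaranteeing each $u$ a $\Omega(|\alive_{i-1}(u)|/k)$ chance that some nearby runner survives (which is what empties $\dead_i(u)$). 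Balancing these two competing objectives inside a single quantity that the pairwise rounding primitive can certify is the actual content of the proof, and your proposal defers exactly this step. A smaller issue: you ask for a potential that is non-increasing in expectation yet also want its final value to certify $\ge n/2$ good nodes starting from $\E[\Phi]\ge n/2$; the directions are inconsistent as written (the paper keeps an upper-bound-type potential small, so that smallness of $\Phi_R$ implies few bad nodes).
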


	\begin{restatable}{algorithm}{delays}
	\caption{Computing Delay Function $\del$}
	\label{alg:low_degree_delay_alg}
	Input: A parameter $s$, an algorithm $\fA_{i,j}$ computing a \emph{good set} from \cref{def:low_degree_del_good_set} in $\tO(s \log n)$ rounds\\
	Output: A delay function $\del$ from \cref{def:low_degree_del} satisfying \cref{eq:houkacka}
	\begin{algorithmic}[1] 
		\Procedure{Delays}{} 
		\State $V_0^ { \act} \leftarrow V$
		\State $R \leftarrow \lfloor 2 \log(n) \rfloor$
		\State $k \leftarrow \lceil 100 \log \log (n) \rceil$ 
		\State $\forall u \in V \colon \del_0(u) \leftarrow 5 \separation R$
		\For{$i \leftarrow 1,2, \ldots, R$}
		    \State $W_{i,0} \leftarrow \emptyset$
		    \For{$j \leftarrow 1,2,\ldots,k$}
		        \State  $S_{i,j} \leftarrow  \mathcal{A}_{i,j}(\del_{i-1}, W_{i,j-1})$ \Comment{$S_{i,j} \subseteq V_{i-1}^ { \act}$}  
		        \State $W_{i,j} \leftarrow W_{i,j-1} \cup S_{i,j}$ 
		       
		    \EndFor
			\State $V_i^{ \act} \leftarrow W_{i,k}$
			\For{$\forall u \in V$}
			    \If{$u \in V^\act_i$}
			        \State $\del_i(u) \leftarrow \del_{i-1}(u) - 5s$
			    \Else
			        \State $\del_i(u) \leftarrow \del_{i-1}(u)$
                \EndIf			    
			\EndFor
		\EndFor
		\State $\del \leftarrow \del_R$
		\State \Return $\del$
		\EndProcedure
	\end{algorithmic}
\end{restatable}

\paragraph{Intuitive Description of \cref{alg:low_degree_delay_alg}.} 
The algorithm runs in $R = \lfloor 2 \log(n)\rfloor$ phases and each phase consists of $k = \lceil 100 \log \log n\rceil$ iterations.
In iteration $j$ of phase $i$, algorithm $\mathcal{A}_{i,j}$ is a deterministic algorithm which computes a good set $\mathcal{S}_{i,j} \subseteq V^{\act}_{i-1}$ as defined later in \cref{def:low_degree_del_good_set}. The algorithm description of $\mathcal{A}_{i,j}$ is deferred to \cref{sec:low_degree_local_derandomization}. The high-level intuition is that $\mathcal{A}_{i,j}$ derandomizes the randomized process which obtains $S_{i,j}$ from $V^{\act}_{i-1}$ by including each vertex with probability $\frac{1}{4k}$, pairwise independently.
Repeating this pairwise independent sampling process $k$ times then simulates including each vertex from $V^{\act}_{i-1}$ to $V^{\act}_{i}$ with positive probability. 
The derandomization of the pairwise independent process is done efficiently using a novel local derandomization procedure introduced in \cite{Faour2022} which essentially allows to efficiently derandomize algorithms that only rely on pairwise analysis.

 Throughout the algorithm, each node is assigned a delay. At the beginning, each node $u$ is assigned a delay of $\del_0(u) = 5 \separation R = O(\separation \log n)$. In each subsequent phase, for each node $u$, we have two possibilities: if $u \in V^{\act}_i$, the delay of node $u$ is decreased by $5 \separation$, i.e., $\del_i(u) = \del_{i-1}(u) - 5 \separation$ if $u \in V^{\act}_i$; if $u \notin V^{\act}_i$, then its delay stays the same, i.e., $\del_i(u) = \del_{i-1}(u)$ if $u \notin V^{\act}_i$. 

For every $u \in V$, we define the shorthand $\wait_i(u) = \wait_{\del_i}(u)$ and for every $D \geq 0$, we define $\frontier_i^D(u) = \frontier_{\del_i}^D(u)$.

	\paragraph{Communication Primitives.} 
    For the deterministic algorithm $\mathcal{A}_{i,j}$ which computes the set $S_{i,j}$, it is important that each node $u$ can efficiently compute the set $\alive_{i-1}(u)$ and $\dead_{i-1}(u)$ that are defined next. 
    Let us give a brief intuition behind the definition. In the ``runner intuition'' from the beginning of the section, we want to know in every step all runners that are currently at distance at most $2s$ after the front runner. For these runners, we want to ensure that not all of them stop running in one step. In the reality of the distributed \congest model, we however cannot compute even the size of $\frontier^{2s}(u)$. 
    
    Fortunately, for our purposes if the number of ``runners'' that are distance at most $2s$ from the front runner is larger than $O(\log\log n)$, it roughly speaking suffices to work with the first $O(\log \log n)$ runners in the analysis. This is formalized by the following definition of alive and dead nodes (dead nodes are runners that stopped flipping coins).  
    
\begin{restatable}{definition}{alivedead}[$\alive_i(u)$/$\dead_i(u)$] 
For every vertex $u \in V$ and $i \in \{0,1,\ldots,R\}$, let $\dead_i(u) \subseteq \frontier^{2 \separation}_i(u) \setminus V_i^{\act}$ be an arbitrary subset of size $\min(k, |\frontier^{2 \separation}_i(u) \setminus V_i^{\act}|)$ and $\alive_i(u) \subseteq \frontier^{2 \separation}_i(u) \cap V_i^{\act}$ be an arbitrary subset of size $\min(k - |\dead_i(u)|, |\frontier^{2 \separation}_i(u) \cap V_i^{\act}|)$. 
\end{restatable}
Note that $|\alive_i(u)| + |\dead_i(u)| \leq \min(k, |\frontier^{2 \separation}_i(u)|)$ and $|\dead_i(u)| \leq |\alive_{i-1}(u)| + |\dead_{i-1}(u)|$.
For each node $v \in V$, let $M_{i-1}(v) = \{u \in V \colon v \in \alive_{i-1}(u)\}$. Then, we need some simultaneous and efficient communication, that allows each $v\in V$ to broadcast a message to all nodes in $M_{i-1}(v)$, and for $v$ to receive an aggregate of messages prepared for $v$ in nodes $M_{i-1}(v)$.

    \begin{lemma}
      \label{lem:low_degree_delays_communication}
      Suppose that we are at the beginning of some phase $i \in [R]$. Given delay function $\del_{i-1}$, and given the set $V_{i-1}^{\act}$, there exists a \congest algorithm running in $\tO(\separation \log n)$ rounds which computes for each node $u \in V$ the sets $\alive_{i-1}(u)$ and $\dead_i(u)$. Moreover, let $M_{i-1}(v) = \{u \in V \colon v \in \alive_{i-1}(u)\}.$ Then, there exists an $\tO(\separation \log n)$ round \congest algorithm that allows each node $v$ to send one $O(\log n)$-bit message that is delivered to all nodes in $M_{i-1}(v)$. Similarly, there also exists an $\tO(\separation \log n)$ round \congest algorithm that given $O(\log n)$-bit messages prepared at nodes in $M_{i-1}(v)$ specific for node $v$, it allows node $v$ to receive an aggregation of these messages, e.g., the summation of the values, in $\tO(\separation \log n)$ rounds. 
    \end{lemma}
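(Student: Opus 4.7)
The plan is to prove the lemma by extending the modified multi-source BFS with delays used in the proof of \Cref{lem:pairwise_del_clustering}: a single pipelined run will let every node $u$ learn its sets $\alive_{i-1}(u)$ and $\dead_{i-1}(u)$ and, via the parent pointers recorded during the BFS, will set up a shortest-path route between each $v$ and every $u \in M_{i-1}(v)$ that can be reused for both the broadcast and the aggregation primitives.

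First, I run a BFS in which each node $v$ emits a token $(\mathrm{id}(v),\del_{i-1}(v),b_v)$ at local time $\del_{i-1}(v)$, where $b_v$ encodes whether $v \in V^{\act}_{i-1}$. When a token arrives at an intermediate node $w$ at local time $t$, $w$ records it together with the neighbor that delivered it, provided $t \leq \wait_{i-1}(w)+2\separation$. Each node maintains a buffer of at most $k=\lceil 100\log\log n\rceil$ tokens, keeping the earliest dead tokens first (up to $k$ of them) and then filling the remaining slots with the earliest alive tokens, evicting later tokens as needed; in each local time step it forwards only the tokens currently in its buffer. Since the maximum delay is $O(\separation\log n)$ and each edge ever carries at most $O(k)$ distinct tokens, standard pipelining implements each local time step in $\tO(1)$ \congest rounds, giving $\tO(\separation\log n)$ rounds in total. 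Using \Cref{cl:frontier} along any shortest path from a token's source to $u$, one shows inductively that the tokens forwarded by every intermediate $w$ include a valid choice of $\alive_{i-1}(u)$ and $\dead_{i-1}(u)$: any token that gets evicted at $w$ is dominated by a surviving one whose source still lies in $\frontier^{2\separation}(u)$ and which is at least as preferred (dead over alive, earlier over later).

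Next, each token retained at $u$ records the neighbor who delivered it, so by following these parent pointers back we obtain for every source $v$ an implicit shortest-path tree rooted at $v$ that spans exactly $M_{i-1}(v)$. To broadcast a single $O(\log n)$-bit message from every $v$ to every $u\in M_{i-1}(v)$, each $v$ injects its message into the edges of its tree; symmetrically, each $u$ sends its own $O(\log n)$-bit message toward $v$ along the same path, with intermediate nodes combining incoming aggregates before forwarding. The dilation of this routing is at most the length of the longest recorded path, namely $O(\separation\log n)$, and the congestion on any edge is at most the number of tokens that crossed it during the BFS, i.e.\ $O(k)=\tO(1)$. Standard pipelined routing in the \congest model then completes both primitives in $\tO(\separation\log n)$ rounds.

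The main obstacle I expect is the first step: carefully proving that the local eviction rule is consistent with some valid choice of $\alive_{i-1}(u)$ and $\dead_{i-1}(u)$ at every endpoint. The policy can in principle discard tokens that a downstream node would have liked to see, so one must track both the arrival order at intermediate nodes and the dead/alive preference, and show via the shortest-path containment of \Cref{cl:frontier} that enough of the right tokens survive. Once this bookkeeping is in place, the routing step reduces to a congestion-$\tO(1)$, dilation-$\tO(\separation\log n)$ scheduling problem, which is essentially routine.
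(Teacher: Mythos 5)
Your proposal matches the paper's proof essentially step for step: the same delayed multi-source BFS with per-node caps of $k=\lceil 100\log\log n\rceil$ tokens prioritizing dead (non-active) sources, the same observation that each time step costs $\tO(1)$ extra rounds so the whole search runs in $\tO(\separation\log n)$ rounds, and the same reuse of the resulting per-source trees (forward for broadcast, reversed for aggregation). The eviction-consistency point you flag is handled at the same level of detail in the paper (which relies on the fact that $\alive_{i-1}(u)$ and $\dead_{i-1}(u)$ are defined as \emph{arbitrary} subsets of the prescribed sizes, so any $k$ surviving tokens with the dead-first priority suffice), so no gap remains.
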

    \begin{proof}[Proof of \Cref{lem:low_degree_delays_communication}] We run a variant of breadth first search (BFS) that takes into account the delays, and runs in $\tO(s\log n)$ rounds: Each node $v$ starts sending out a BFS token at time $\del_{i-1}(v)$, where the token also includes the information whether $v\in V_{i}^{\act}$ or not. During the entire process, each node $u$ forwards per time step at most $k= \lceil100\log\log n \rceil$ BFS tokens, breaking ties in favoring of tokens coming from nodes $v\notin V_{i}^{\act}$. That is, all tokens that arrive at the same time step are forwarded in the next time step, except that the node forwards at most $k$ tokens in this time step, and moreover, the node first includes all tokens from nodes $v\notin V_{i}^{\act}$ (up to $k$) before including tokens from nodes $v\in V_{i}^{\act}$. Since per time step each node forwards at most $k$ tokens, each time step can be implemented in at most $k$ rounds of the \congest model. Furthermore, node $u$ starts counting time from the moment that it received the very first token (while forwarding any received tokens, up to $k$ per time step), and after $2s$ time steps have passed, node $u$ does not forward any other tokens. 
    
    Let us think of one tree for each node $v$, which is rooted at $v$ and includes all nodes $u$ that have received the token of node $v$. Every node $u$ receives the tokens of all nodes in $\frontier_{i-1}^{2s}(u)$, if there are at most $k$ of them. If there are more than $k$, node $u$ learns at least $k$ of them, with the following guarantee: The set of learned tokens includes all tokens from $v\notin V_{i}^{\act}$, up to $k$ (if there were more).
    
    Hence, given the received tokens, each node $u$ can form $\dead_{i-1}(u) \subseteq \frontier^{2 \separation}_{i-1}(u) \setminus V_{i-1}^{\act}$, which is subset of size $\min(k, |\frontier^{2 \separation}_{i-1}(u) \setminus V_{i-1}^{\act}|)$. Furthermore, node $u$ can form $\alive_{i-1}(u) \subseteq \frontier^{2 \separation}_{i-1}(u) \cap V_{i-1}^{\act}$, which is a subset of size $\min(k - |\dead_{i-1}(u)|, |\frontier^{2 \separation}_{i-1}(u) \cap V_{i-1}^{\act}|)$.
    
    By repeating the same communication, each node $v$ is able to send one message which is delivered to all nodes $M_{i-1}(v) = \{u \in V \colon v \in \alive_{i-1}(u)\}$, all simultaneously in $\tO(s\log n)$ rounds. Moreover, by repeating the same communication pattern but in the reverse direction of time, we can do an aggregation along each tree, again all simultaneously in $\tO(s\log n)$ rounds, allowing each node $v$ to receive an aggregation of the messages prepared for $v$ in nodes $M_{i-1}(v)$.
    \end{proof}

\paragraph{Potential Functions}

In this paragraph, we define an outer potential $\Phi_i$ for every phase $i$ and an inner potential $\phi_{i,j}$ for every
iteration $j$ within phase $i$. The inner potential satisfies that if $\phi_{i,j-1} \leq \phi_{i,j}$ in each iteration $j$, then $\Phi_i \leq \Phi_{i-1} + n$. The outer potential satisfies that $\Phi_0 = 2n$ and if $\Phi_R \leq 10n \log(n)$, then $|\{u \in V \colon |\frontier^{2 \separation}(u)|\}| \geq \frac{9n}{10}$.

\begin{definition}[Outer Potential]
\label{def:low_degree_delays_outer_potential}

For every $i \in \{0,1,\ldots,R\}$, the outer potential of a node $u$ after phase $i$ is defined as

\[\Phi_i(u) = e^{\frac{|\dead_i(u)|}{10}}.\]
The outer potential after phase $i$ is defined as

\[\Phi_i = \sum_{u \in V} \Phi_i(u) + 2^i|V^{\act}_i|.\]
\end{definition}
    Here, "after phase $0$" should be read as "the beginning of phase $1$".
    \cref{alg:low_degree_delay_alg} will make sure that the outer potential is sufficiently small. A small outer potential after phase $i$ implies on one hand that there are not too many nodes $u$ for which $|\dead_i(u)|$ is large and on the other hand ensures that there are not too many nodes in $V^{\act}_i$, i.e., $|V^{\act}_i| \lesssim\frac{n}{2^i}$.
    
    The following lemma captures the usefulness of the outer potential.

	\begin{lemma}[Outer Potential Lemma]
		\label{lem:low_degree_delays_outer_potential_lemma}
		We have $\Phi_0 \leq 2n$. Moreover, if $\Phi_R \leq 10 n \log(n)$, then $|\{u \in V \colon |\frontier_{\del}^{2\separation}(u)| \leq 100 \log \log (n)\}| \geq \frac{9n}{10}$.
	\end{lemma}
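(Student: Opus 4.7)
The first assertion $\Phi_0 \leq 2n$ is an immediate unpacking of the definitions. At initialization, $V^{\act}_0 = V$, so for every $u$ the set $\frontier^{2s}_0(u)\setminus V^{\act}_0$ is empty, forcing $\dead_0(u)=\emptyset$ and $\Phi_0(u)=e^{0}=1$. Summing and adding the $2^0\cdot|V^{\act}_0| = n$ term gives $\Phi_0 = n + n = 2n$.

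For the second assertion, assume $\Phi_R \leq 10 n\log n$. I will exploit the two summands of $\Phi_R$ separately. First, since $\Phi_R$ contains the term $2^R |V^{\act}_R|$, we get $|V^{\act}_R| \leq 10 n\log n / 2^R$. Because $R = \lfloor 2\log n\rfloor$, the quantity $2^R$ is polynomially larger than $n\log n$ (roughly $n^2$ up to constants, independent of which standard logarithm base is meant, provided $n$ is sufficiently large), so $|V^{\act}_R| < 1$, i.e.\ $V^{\act}_R = \emptyset$. Consequently, for every $u$ we have $\frontier^{2s}_R(u)\cap V^{\act}_R = \emptyset$, and the definition of $\dead_R(u)$ simplifies to $|\dead_R(u)| = \min(k,\,|\frontier^{2s}_R(u)|)$ where $k = \lceil 100\log\log n\rceil$.

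Now I bound the number of \emph{bad} nodes $u$, i.e.\ those with $|\frontier^{2s}_R(u)| > 100\log\log n$. For any such $u$, integrality of $|\frontier^{2s}_R(u)|$ together with $k = \lceil 100\log\log n\rceil$ yields $|\frontier^{2s}_R(u)| \geq k$, hence $|\dead_R(u)| = k$ by the previous step, and therefore
\[
\Phi_R(u) \;=\; e^{|\dead_R(u)|/10} \;\geq\; e^{10\log\log n}.
\]
A Markov-type counting argument then bounds the number of bad $u$ by $\Phi_R / e^{10\log\log n} \leq 10 n\log n / e^{10\log\log n}$. Since $e^{10\log\log n}$ grows super-polylogarithmically in $n$, this quantity is at most $n/10$ for all sufficiently large $n$, so at least $9n/10$ nodes satisfy $|\frontier^{2s}_R(u)| \leq 100\log\log n$, as claimed.

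The argument is essentially two applications of Markov's inequality glued to the definition of $\dead_i(u)$; the only subtlety to watch is that the two pieces of $\Phi_R$ must both be used (one to force $V^{\act}_R = \emptyset$, the other to control the $e^{|\dead_R(u)|/10}$ weights), and that the integrality of $|\frontier^{2s}_R(u)|$ must be combined with the ceiling in the definition of $k$ to conclude $|\dead_R(u)| = k$ for every bad node. No other step looks like a real obstacle.
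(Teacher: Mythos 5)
Your proof is correct and follows essentially the same route as the paper's: the $2^R|V^{\act}_R|$ term forces $V^{\act}_R=\emptyset$, and then each node with a large frontier contributes $e^{k/10}\gg\log n$ to $\Phi_R$, so a Markov-type count bounds the bad nodes by $n/10$. You additionally spell out the (omitted in the paper) verification that $\Phi_0=2n$, which is a nice touch.
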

	\begin{proof}
		First, note that $\Phi_R \geq 2^R|V_R^ { \act}| > 10n \log(n)|V_R^{ \act}|$.
		As we assume that $\Phi_R \leq 10 n \log(n)$, this directly implies $V_R^ { \act} = \emptyset$.
		In particular, every $u \in V$ with $|\frontier^{2\separation}_{\del}(u)| > 100 \log \log (n)$ contributes
		\[\Phi_R(u) = e^{\frac{|\dead_R(u)|}{10}}\geq e^{\frac{\min(k,|\frontier_{\del}^{2\separation}(u)|}{10})} \geq  100 \log(n) \]
		to the potential. Hence, there can be at most $\Phi_R/(100 \log(n)) \leq n/10$ such nodes and therefore $\{u \in V \colon |\frontier_{\del}^{2\separation}(u)| \leq 100 \log \log (n)\} \geq \frac{9n}{10}$, as desired.
	\end{proof}
	
	\begin{restatable}{definition}{pessimisticprobability}[Pessimistic Estimator Probability $p_{i,j}(u)$]
	For $i \in [R]$ and $j \in \{0,1,\ldots,k\}$, the pessimistic estimator probability of a node $u$ after iteration $j$ within phase $i$ is defined as
\[p_{i,j}(u) = I(\alive_{i-1}(u) \cap W_{i,j} = \emptyset) \cdot \left(1 - \frac{|\alive_{i-1}(u)|}{10k}\right)^{k-j}.\]
\end{restatable}
Here, "after iteration $0$", should be read as "the beginning of iteration $1$".
Let us briefly elaborate on the definition of $p_{i,j}(u)$. Assume that we would compute $S_{i,j}$ by sampling each vertex in $V^{\act}_{i-1}$ with probability $\frac{1}{4k}$ pairwise independently.
By a simple pairwise analysis, one can show that this would imply $\Pr[S_{i,j} \cap \alive_{i-1}(u) \neq \emptyset] \geq \frac{|\alive_{i-1}(u)|}{10k}$. Hence, if we are currently at the beginning of iteration $j$ within phase $i$ just prior to sampling the set $S_{i,j}$, then $p_{i,j-1}(u)$ is an upper bound on the probability that no node in $\alive_{i-1}(u)$ is contained in $ V^{\act}_i$ (which one should think of as a bad event). 
	
	\begin{restatable}{definition}{innerpotential}[Inner Potential]
	\label{def:low_degree_del_inner}

    The inner potential of a node $u$ after iteration $j$ of phase $i$ is defined as

\[\phi_{i,j}(u) = p_{i,j}(u) e^{\frac{|\dead_{i-1}(u)| + |\alive_{i-1}(u)|}{10}}.\]
The inner potential after iteration $j$ of phase $i$ is defined as
\[\phi_{i,j} = \sum_{u \in V} \phi_{i,j}(u) + |W_{i,j}|2^i + \frac{k-j}{k}2^{i-1} |V_{i-1}^{\act}|.\]
\end{restatable}

Again, assume for a moment that we would compute $S_{i,j}$ by sampling each vertex in $V^{\act}_{i-1}$ with probability $\frac{1}{4k}$ pairwise independently. Assume we are at the beginning of iteration $j$ within phase $i$ just prior to sampling $S_{i,j}$. Then, using the fact that $\E[p_{i,j}(u)] \leq p_{i,j-1}(u)$, one directly gets that $\E[\phi_{i,j}(u)] \leq \phi_{i,j}(u)$ and it also follows that $\E[\phi_{i,j}] \leq \phi_{i,j-1}$. Moreover, one can also show that  $\E[\Phi_i(u)] \leq \phi_{i,j-1}(u) + 1$ and $\E[\Phi_i] \leq \phi_{i,j-1} + n$.
In more detail, if at least one node in $\alive_{i-1}(u)$ is contained in $V^{\act}_i$, one can show that this implies $\dead_i(u) = \emptyset$ and therefore $\Phi_i(u) = 1$. On the other hand, in the previous discussion we mentioned that with probability at most $p_{i,j-1}(u)$ no node in $\alive_{i-1}(u)$ is included in $V^{\act}_i$, and as $|\dead_{i-1}(u)| + |\alive_{i-1}(u)| \leq |\dead_i(u)|$, we have $\ p_{i,j-1}(u)\Phi_i(u) \leq \phi_{i,j-1}$.

\begin{lemma}[Inner Potential Lemma]
\label{lem:low_degree_delays_inner}
For $i \in [R]$ and $j \in \{0,1,\ldots,k\}$,
Assume that in every iteration $j$ of phase $i$, $S_{i,j}$ is computed in such a way that $\phi_{i,j} \leq \phi_{i,j-1}$. Then, $\Phi_i \leq \Phi_{i-1} + n$ and $\Phi_R \leq 4n \log(n)$.

\end{lemma}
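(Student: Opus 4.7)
\medskip
\noindent\textbf{Proof plan for \Cref{lem:low_degree_delays_inner}.} The plan is to establish the bound $\Phi_i \leq \Phi_{i-1} + n$ by comparing the inner potential at the start of phase $i$ to $\Phi_{i-1}$, chaining the inner potentials across the $k$ iterations using the hypothesis, and then comparing the inner potential at the end of phase $i$ to $\Phi_i$. The second statement, $\Phi_R \leq 4n\log n$, will then follow immediately by induction, using $\Phi_0 \leq 2n$ from \Cref{lem:low_degree_delays_outer_potential_lemma} and $R \leq 2\log n$, since $2n + R\cdot n \leq 4n\log n$.

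\smallskip
\noindent\textbf{Boundary at the start ($\phi_{i,0} \leq \Phi_{i-1}$).} Since $W_{i,0} = \emptyset$, the indicator in $p_{i,0}(u)$ equals $1$, and so $\phi_{i,0}(u) = \bigl(1 - \tfrac{|\alive_{i-1}(u)|}{10k}\bigr)^{k}\cdot e^{(|\dead_{i-1}(u)|+|\alive_{i-1}(u)|)/10}$. Using $(1-x)^k \leq e^{-xk}$, the first factor is at most $e^{-|\alive_{i-1}(u)|/10}$, which cancels the $|\alive_{i-1}(u)|$ contribution in the exponential, giving $\phi_{i,0}(u) \leq e^{|\dead_{i-1}(u)|/10} = \Phi_{i-1}(u)$. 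Summing over $u$ and noting that the tail terms match ($\phi_{i,0}$ has the tail $\tfrac{k}{k}\cdot 2^{i-1}|V^{\act}_{i-1}| = 2^{i-1}|V^{\act}_{i-1}|$, which is exactly the tail of $\Phi_{i-1}$, and $|W_{i,0}|2^i = 0$), we conclude $\phi_{i,0} \leq \Phi_{i-1}$. Chaining with the hypothesis $\phi_{i,j} \leq \phi_{i,j-1}$ for $j = 1,\dots,k$ yields $\phi_{i,k} \leq \Phi_{i-1}$.

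\smallskip
\noindent\textbf{Boundary at the end ($\Phi_i \leq \phi_{i,k} + n$).} I plan to show the per-vertex inequality $\Phi_i(u) \leq \phi_{i,k}(u) + 1$, then sum over $u$ and observe that the tail terms of $\Phi_i$ and $\phi_{i,k}$ match (both equal $2^i|V^{\act}_i|$ since $W_{i,k}=V^{\act}_i$ and $(k-k)/k = 0$). The per-vertex inequality splits in two cases. If $\alive_{i-1}(u) \cap V^{\act}_i = \emptyset$, the indicator in $p_{i,k}(u)$ is $1$ and the geometric factor is $(1-\cdot)^0 = 1$, so $\phi_{i,k}(u) = e^{(|\dead_{i-1}(u)|+|\alive_{i-1}(u)|)/10}$; combined with the bound $|\dead_i(u)| \leq |\dead_{i-1}(u)|+|\alive_{i-1}(u)|$ noted right after \cref{def:low_degree_del_good_set}'s precursor (the definition of $\alive/\dead$), this gives $\Phi_i(u) \leq \phi_{i,k}(u)$ directly.

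\smallskip
\noindent\textbf{The main obstacle: the other case.} The key geometric step, which I expect to be the crux, is the following. If $\alive_{i-1}(u) \cap V^{\act}_i \neq \emptyset$, pick any $v$ in this intersection. Because $v$ was active, its delay drops by $5s$: $\del_i(v) = \del_{i-1}(v) - 5s$. Combined with $\del_{i-1}(v) + d(v,u) \leq \wait_{i-1}(u) + 2s$ (from $v \in \frontier_{i-1}^{2s}(u)$), this forces $\wait_i(u) \leq \wait_{i-1}(u) - 3s$. Now suppose for contradiction some $w \in \frontier_i^{2s}(u) \setminus V^{\act}_i$ exists. Since $w \notin V^{\act}_i$, we have $\del_i(w) = \del_{i-1}(w)$, and thus $\del_{i-1}(w) + d(w,u) = \del_i(w) + d(w,u) \leq \wait_i(u) + 2s \leq \wait_{i-1}(u) - s$, contradicting the definition of $\wait_{i-1}(u)$ as the minimum of $\del_{i-1}(\cdot) + d(\cdot, u)$. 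Hence $\frontier_i^{2s}(u) \subseteq V^{\act}_i$ and $\dead_i(u) = \emptyset$, giving $\Phi_i(u) = 1$ while $\phi_{i,k}(u) = 0$ due to the vanishing indicator. So $\Phi_i(u) \leq \phi_{i,k}(u) + 1$ in this case as well.

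\smallskip
Combining the three bounds yields $\Phi_i \leq \phi_{i,k} + n \leq \phi_{i,0} + n \leq \Phi_{i-1} + n$, and the induction $\Phi_R \leq \Phi_0 + Rn \leq 2n + 2n\log n \leq 4n\log n$ completes the proof.
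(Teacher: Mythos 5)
Your proposal is correct and follows essentially the same route as the paper's proof: the boundary inequality $\phi_{i,0}(u)\leq\Phi_{i-1}(u)$ via $(1-x/(10k))^k\leq e^{-x/10}$, the chaining through the hypothesis, the per-vertex bound $\Phi_i(u)\leq\phi_{i,k}(u)+1$ split into the same two cases, and in particular the same key geometric step showing that an active node in $\alive_{i-1}(u)\cap V^{\act}_i$ forces $\wait_i(u)\leq\wait_{i-1}(u)-3\separation$ and hence $\dead_i(u)=\emptyset$. The only cosmetic difference is that you phrase that last step as a proof by contradiction against the minimality of $\wait_{i-1}(u)$, whereas the paper argues directly that every inactive $w$ falls outside $\frontier_i^{2\separation}(u)$; the content is identical.
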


\begin{proof}
For each node $u \in V$, we have

    \[\phi_{i,0}(u) = p_{i,0}(u)e^{\frac{|\dead_{i-1}(u)| + |\alive_{i-1}(u)|}{10}} = \left( 1 - \frac{|\alive_{i-1}(u)|}{10k}\right)^{k-0} e^{\frac{|\dead_{i-1}(u)| + |\alive_{i-1}(u)|}{10}} \leq e^{\frac{|\dead_{i-1}(u)|}{10}} = \Phi_{i-1}(u).\]
Therefore,

\[\phi_{i,0} = \sum_{u \in V} \phi_{i,0}(u) + |W_{i,0}|2^i + \frac{k-0}{k}2^{i-1} |V^{\act}_{i-1}| \leq \sum_{u \in V} \Phi_{i-1}(u)   + 2^{i-1}|V^{\act}_{i-1}| = \Phi_{i-1}.\]

Consider an arbitrary $u \in V$. Next, we show that

\[e^{\frac{|\dead_i(u)|}{10k}}=: \Phi_i(u) \leq \phi_{i,k}(u) + 1 = I(\alive_{i-1}(u) \cap V_i^\act = \emptyset)e^{\frac{|\dead_{i-1}(u)| + |\alive_{i-1}(u)|}{10k}} + 1.\]

    It is easy to verify that the inequality is satisfied if $\alive_{i-1}(u) \cap V_i^\act = \emptyset$, as $|\dead_i(u)| \leq |\dead_{i-1}(u)| + |\alive_{i-1}(u)|$.
	Therefore, it remains to consider the case that there exists at least one node $v \in \alive_{i-1}(u) \cap V_i^{\act}$. The existence of such a node $v$ implies
	\[\wait_i(u) \leq  \del_i(v) + d(v,u) =  \del_{i-1}(v) + d(v,u) - 5 \separation \leq \wait_{i-1}(u) + 2 \separation - 5 \separation = \wait_{i-1}(u) - 3 \separation.\]
	For every node $w \notin V^{\act}_i$, we have $ \del_i(w) =  \del_{i-1}(w)$ and therefore
	\[ \del_i(w) + d(w,u) =  \del_{i-1}(w) + d(w,u) \geq \wait_{i-1}(u) \geq \wait_i(u) + 3 \separation > \wait_i(u) + 2 \separation\]
	and thus $w \notin \frontier_i^{2 \separation}(u)$.
	Hence, $\dead_i(u) = \emptyset$ and the inequality is satisfied.
Therefore,

\[\phi_{i,k} = \sum_{u \in V} \phi_{i,k}(u) + |W_{i,k}|2^i + \frac{k-k}{k}2^{i-1} |V^{\act}_{i-1}| \leq \left(\sum_{u \in V} \Phi_i(u) - 1 \right)   + |V^{\act}_i|2^i = \Phi_i - n.\]

A simple induction implies $\phi_{i,k} \leq \phi_{i,0}$. 
Therefore,
\[\Phi_i \leq \phi_{i,k} + n \leq \phi_{i,0} + n = \Phi_{i-1} + n.\]
As $\Phi_0 \leq 2n$ according to \cref{lem:low_degree_delays_outer_potential_lemma}, a simple induction implies
\[\Phi_R \leq \Phi_0 + Rn \leq (2+R)n \leq 4n \log n.\]
\end{proof}

	\paragraph{Good Set $S_{i,j}$:}
	
	We are now going to define the good set of nodes $S_{i,j}$. Note that this is the part of \cref{alg:low_degree_delay_alg} whose definition we postponed. 
	
	\begin{definition}[Good Set $S_{i,j}$]
	    \label{def:low_degree_del_good_set}
	    For a set $S_{i,j} \subseteq V^{\act}_{i-1}$  and $u \in V$, let
	    
	    \[Y_{i,j}(u) = 1 - |\alive_{i-1}(u) \cap S_{i,j}| + \binom{|\alive_{i-1}(u) \cap S_{i,j}|}{2}.\]
	    
	    We refer to the set $S_{i,j}$ as good if 
	    
	    \[  \sum_{u \in V} Y_{i,j}(u) \frac{\phi_{i,j-1}(u)}{1 - (|\alive_{i-1}(u)|/(10k))} + |S_{i,j}|\cdot 2^i \leq \sum_{u \in V} \phi_{i,j-1}(u) + \frac{2^{i-1}}{k}|V^{\act}_{i-1}|.\]
	\end{definition}
	
	\begin{lemma}
	\label{lem:low_degree_del_potential_nonicreasing}
	If $S_{i,j}$ is a good set, then $\phi_{i,j} \leq \phi_{i,j-1}$.
	\end{lemma}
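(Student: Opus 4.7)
The plan is to unfold the definition of $\phi_{i,j}$ piece by piece and observe that the only part that depends on $S_{i,j}$ in a nontrivial way is the summation $\sum_u \phi_{i,j}(u)$; the remaining additive terms telescope cleanly against the additive terms in the goodness condition. Concretely, since $W_{i,j} = W_{i,j-1} \cup S_{i,j}$, the non-sum contribution to $\phi_{i,j} - \phi_{i,j-1}$ equals at most $|S_{i,j}| \cdot 2^i - \frac{1}{k} 2^{i-1} |V^{\act}_{i-1}|$, which is exactly the quantity controlled by the extra terms of \Cref{def:low_degree_del_good_set}.

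The main per-vertex step is to show, for every $u \in V$, the pointwise bound
\[
\phi_{i,j}(u) \;\le\; Y_{i,j}(u) \cdot \frac{\phi_{i,j-1}(u)}{1 - |\alive_{i-1}(u)|/(10k)}.
\]
First, I would check the easy case $\alive_{i-1}(u) \cap W_{i,j-1} \neq \emptyset$, where $p_{i,j-1}(u) = p_{i,j}(u) = 0$ and the inequality is trivial. In the complementary case, the indicator in $p_{i,j-1}(u)$ equals $1$, so $p_{i,j-1}(u) = \bigl(1 - |\alive_{i-1}(u)|/(10k)\bigr)^{k-j+1}$, and $p_{i,j}(u) = I\!\bigl(\alive_{i-1}(u) \cap S_{i,j} = \emptyset\bigr) \cdot \bigl(1 - |\alive_{i-1}(u)|/(10k)\bigr)^{k-j}$. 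Multiplying by the common exponential factor $e^{(|\dead_{i-1}(u)| + |\alive_{i-1}(u)|)/10}$ therefore reduces the claim to the elementary combinatorial inequality
\[
I\!\bigl(\alive_{i-1}(u) \cap S_{i,j} = \emptyset\bigr) \;\le\; 1 - |\alive_{i-1}(u) \cap S_{i,j}| + \binom{|\alive_{i-1}(u) \cap S_{i,j}|}{2} \;=\; Y_{i,j}(u),
\]
which I would verify by casework on $t := |\alive_{i-1}(u) \cap S_{i,j}|$: both sides equal $1$ at $t = 0$, both equal $0$ at $t = 1$, and for $t \ge 2$ the right-hand side $\tfrac12 t^2 - \tfrac32 t + 1$ is nonnegative while the left-hand side is $0$.

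Summing the pointwise bound over $u$ and adding $|W_{i,j}| \cdot 2^i + \frac{k-j}{k} 2^{i-1} |V^{\act}_{i-1}|$ to both sides yields
\[
\phi_{i,j} \;\le\; \sum_{u} Y_{i,j}(u) \frac{\phi_{i,j-1}(u)}{1 - |\alive_{i-1}(u)|/(10k)} + |S_{i,j}| \cdot 2^i + |W_{i,j-1}| \cdot 2^i + \frac{k-j}{k} 2^{i-1} |V^{\act}_{i-1}|.
\]
Now I apply the goodness hypothesis of \Cref{def:low_degree_del_good_set} to the first two terms, which bounds them by $\sum_u \phi_{i,j-1}(u) + \frac{2^{i-1}}{k} |V^{\act}_{i-1}|$. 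Combining the two $|V^{\act}_{i-1}|$ contributions gives $\frac{k-j+1}{k} 2^{i-1} |V^{\act}_{i-1}| = \frac{k-(j-1)}{k} 2^{i-1} |V^{\act}_{i-1}|$, so the right-hand side is exactly $\phi_{i,j-1}$ by \Cref{def:low_degree_del_inner}, concluding the proof.

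I do not expect a real obstacle here: the only nontrivial piece is the discrete inequality $I(t = 0) \le 1 - t + \binom{t}{2}$, which is precisely why $Y_{i,j}(u)$ is defined as it is; this form is what makes $Y_{i,j}$ a degree-two polynomial in the selection indicators, enabling the pairwise-independent analysis used downstream. Everything else is bookkeeping matching the three additive terms defining $\phi_{i,j}$ against the three additive terms packaged into the goodness condition.
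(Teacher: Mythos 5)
Your proof is correct and follows essentially the same route as the paper's: establish the identity $p_{i,j}(u) = I(\alive_{i-1}(u)\cap S_{i,j}=\emptyset)\,p_{i,j-1}(u)/(1-|\alive_{i-1}(u)|/(10k))$, bound the indicator by $Y_{i,j}(u)$, sum over $u$, and invoke the goodness condition, with the additive terms telescoping via $|W_{i,j}|\le|W_{i,j-1}|+|S_{i,j}|$ and $\frac{1}{k}+\frac{k-j}{k}=\frac{k-(j-1)}{k}$. Your explicit casework on $t=|\alive_{i-1}(u)\cap S_{i,j}|$ is a slightly more detailed verification of the inequality $I(t=0)\le 1-t+\binom{t}{2}$ that the paper states without proof, but the argument is identical in substance.
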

	\begin{proof}
	
	For each $u \in V$, we have
	\begin{align*}
	\frac{I(\alive_{i-1}(u) \cap S_{i,j} = \emptyset)p_{i,j-1}(u)}{1 - (|\alive_{i-1}(u)|/(10k))} &= 
	    \frac{I(\alive_{i-1}(u) \cap S_{i,j} = \emptyset)I(\alive_{i-1}(u) \cap W_{i,j-1} = \emptyset) \cdot \left(1 - \frac{|\alive_{i-1}(u)|}{10k}\right)^{k-(j-1)}}{1 - (|\alive_{i-1}(u)|/(10k))} \\
	    &= I(\alive_{i-1}(u) \cap W_{i,j} = \emptyset) \left(1 - \frac{|\alive_{i-1}(u)|}{10k}\right)^{k-j} \\
	    &= p_{i,j}(u).
	\end{align*}
	
	It also holds that $I(\alive_{i-1}(u) \cap S_{i,j} = \emptyset) \leq Y_{i,j}(u)$. Therefore,

\begin{align*}
    Y_{i,j}(u)  \frac{\phi_{i,j-1}(u)}{1 - (|\alive_{i-1}(u)|/(10k))} &\geq I(\alive_{i-1}(u) \cap S_{i,j} = \emptyset)\frac{p_{i,j-1}(u)e^{\frac{|\dead_{i-1}(u)| + |\alive_{i-1}(u)|}{10}}}{1 - (|\alive_{i-1}(u)|/10k)} \\
    &= p_{i,j}(u) e^{\frac{|\dead_{i-1}(u)| + |\alive_{i-1}(u)|}{10}} \\
    &= \phi_{i,j}(u).
\end{align*}

Thus, we get

\begin{align*}
    \phi_{i,j} &= \sum_{u \in V} \phi_{i,j}(u) + |W_{i,j}|2^i + \frac{k-j}{k}2^{i-1}|V_{i-1}^{\act}| \\
    &\leq \sum_{u  \in V} Y_{i,j}(u)  \frac{\phi_{i,j-1}(u)}{1 - (|\alive_i(u)|/10k)} + (|S_{i,j}| + |W_{i,j-1}|)2^i + \frac{k-j}{k}2^i|V^{\act}_{i-1}| \\
    &\leq \sum_{u \in V} \phi_{i,j-1}(u) + \frac{2^{i-1}}{k}|V^{\act}_{i-1}| + |W_{i,j-1}|2^i + \frac{k-j}{k}2^{i-1}|V^{\act}_{i-1}| \\
    &= \phi_{i,j-1}.
\end{align*}
	\end{proof}
	
	We now combine all the pieces to prove the main theorem of this subsection.
	
	\paragraph{Proof of \cref{thm:low_degree_delay_main}}
	
	We assume that in iteration $j$ of phase $i$, $\mathcal{A}_{i,j}$ computes a good set $S_{i,j}$. Therefore, \cref{lem:low_degree_del_potential_nonicreasing} implies that $\phi_{i,j} \leq \phi_{i,j-1}$. According to \cref{lem:low_degree_delays_inner}, this implies that $\Phi_R \leq 4n \log(n)$. Therefore, \cref{lem:low_degree_delays_outer_potential_lemma} implies that $|\{u \in V \colon |\frontier^{2 \separation}_{\del}(u)| \leq \lceil 100 \log \log (n)\rceil\}| \geq n/2$, as desired. It remains to discuss the \congest round complexity.
	
	\cref{alg:low_degree_delay_alg} has $R \cdot k = \tO(\log n)$ iterations in total. In iteration $i$ of phase $j$, algorithm $\mathcal{A}_{i,j}$ runs in $\tO(\separation \log n)$ \congest rounds. Hence, the overall \congest complexity of \cref{alg:low_degree_delay_alg} is $\tO(\separation \log^2 n)$.
	
	\paragraph{Global Derandomization}
	
	Here, we informally sketch a variant of \cref{alg:low_degree_delay_alg} which performs a global derandomization using the method of conditional expectation. A more formal discussion of this approach, though in a different context, is discussed in \cref{subsec:hittingset-impl} where we derandomize our algorithm for the hitting set problem in the \congest model. See in particular \cref{thm:hittingset-congest}.
	
	\begin{definition}[Good Random Set $S_{i,j}$ (In Expectation)]
	    \label{def:low_degree_del_random_good_set}
	    For a set $S_{i,j} \subseteq V^{\act}_{i-1}$  and $u \in V$, let
	    
	    \[Y_{i,j}(u) = 1 - |\alive_{i-1}(u) \cap S_{i,j}| + \binom{|\alive_{i-1}(u) \cap S_{i,j}|}{2}.\]
	    We refer to a randomly computed subset $S_{i,j} \subseteq V^{\act}_{i-1}$ as good in expectation if 
	    \[  \E \left[\sum_{u \in V} Y_{i,j}(u) \frac{\phi_{i,j-1}(u)}{1 - (|\alive_{i-1}(u)|/(10k))} + |S_{i,j}|\cdot 2^i \right] \leq \sum_{u \in V} \phi_{i,j-1}(u) + \frac{2^{i-1}}{k}|V^{\act}_{i-1}|.\]
	\end{definition}
	Note that we can recover \cref{def:low_degree_del_good_set} if we drop the expectation. Assume we choose $S_{i,j}$ by including each node in $V^{\act}_{i-1}$ with probability $\frac{1}{4k}$, pairwise independently. One can show that the resulting set $S_{i,j}$ is good in expectation. Moreover, the pairwise distribution over the random set $S_{i,j}$ can be realized with a random seed length of $\tO(\log n)$ using the construction of~\cite{roditty2005deterministic, luby1993removing} that is described in \Cref{subsec:hittingset-impl}.
	The goal is now to fix the random seed one by one in such a way that the resulting deterministic set $S_{i,j}$ is a good set.
	
	For the following discussion, let $X = \sum_{u \in V} Y_{i,j}(u) \frac{\phi_{i,j-1}(u)}{1 - (|\alive_{i-1}(u)|/(10k))} + |S_{i,j}|\cdot 2^i$.
	The method of conditional expectation works by fixing the bits of the random seed one by one, each time fixing the $i$-th bit in such a way that 
	\[\E[X|\text{first $i$ bits are fixed to $b_0,\ldots,b_i$}] \leq \E[X|\text{first $i-1$ bits are fixed to $b_0,\ldots,b_{i-1}$}].\]
	In particular, this ensures that 
	\[\E[X|\text{all bits are fixed}] \leq \E[X] \leq \sum_{u \in V} \phi_{i,j-1}(u) + \frac{2^{i-1}}{k}|V^{\act}_{i-1}|\]
	and hence the corresponding deterministic set $S_{i,j}$ is indeed a good set.
	To find such a bit $b_i$, it suffices to compute two things:
    \begin{itemize} 
    \item $\E[X|\text{first $i$ bits are fixed to $b_0,\ldots,b_{i-1},0$}]$, and 
    \item $\E[X|\text{first $i$ bits are fixed to $b_0,\ldots,b_{i-1}$}]$. 
    \end{itemize}
	It is possible to decompose $X$ into $X = \sum_{u \in V} X_u$ such that each node $u$, when given $b_0,b_1,\ldots,b_i$, $\alive_{i-1}(u)$ and $\phi_{i-1,j}(u)$, can efficiently compute $\E[X_u|b_0,b_1,\ldots,b_i]$, without any further communication. This in turn allows us to compute $\E[X|b_0,b_1,\ldots,b_i]$ in $O(D)$ rounds.
	Hence, given that every node knows $\alive_{i-1}(u)$ and $\phi_{i-1,j}(u)$, one can find a good set in $\tO(D \log n)$ \congest rounds, where $D$ denotes the diameter of the network. Hence, computing $\alive_{i-1}(u)$ and $\phi_{i-1,j}(u)$ can be done in $\tO(\separation \log n)$ rounds at the beginning of phase $i$. Moreover, $\alive_{i-1}(u)$ and $\phi_{i-1,j}(u)$ can be computed at the beginning of phase $i$ in $\tO(\separation \log n)$ rounds according to \cref{lem:low_degree_delays_communication}. Hence, the overall resulting run-time of this variant of \cref{alg:low_degree_delay_alg} is $\tO((D + \separation) \log^2(n)) = \tO(D \log^2 n)$. This is the complexity for the setting where we have a low-diameter global tree of depth $D$. One can replace this by a standard application of network decomposition to reduce the round complexity to $\poly(\log n)$. In particular, given a $c$-color $d$-diameter network decomposition of $G^{O(s)}$, we can use independent randomness for the nodes of different colors, and for each color, we can perform the gathering and bit fixing in $(s+d)\poly(\log n)$ rounds. Hence, we can perform the same derandomization in $(s+d)\poly(\log n)$ rounds. With the algorithm of \cite{elkin2022deterministic} that computes a $O(\log n)$-color $\poly(\log n)$-strong-diameter network decomposition in $s \poly(\log n)$ rounds~\cite{elkin2022deterministic}, this becomes a complexity of $s \poly(\log n)$ rounds overall for the whole derandomization procedure. Please see the proof of \Cref{thm:hittingset-pram} where we perform such a global derandomization via network decomposition for the hitting set problem and provide more of the lower-order details. Instead of diving into those details here, in this section, we focus on the local derandomization which leads to a faster round complexity, as discussed in the next subsection.
	
	\subsection{Algorithm $\mathcal{A}_{i,j}$ via Local Derandomization}
	\label{sec:low_degree_local_derandomization}

	This subsection is dedicated to providing the description of $\mathcal{A}_{i,j}$, that is proving \Cref{thm:deterministic-goodSet-selection} stated below. We note that this is the final missing piece in the proof of \cref{thm:low_degree_main}. 
	
	\begin{theorem}\label{thm:deterministic-goodSet-selection}
	For every iteration $j$ of phase $i$, there exists a \congest algorithm $\mathcal{A}_{i,j}$ which computes a good set $S_{i,j} \subseteq V^{\act}_{i-1}$ in $\tO(\separation \log n)$ rounds.
	\end{theorem}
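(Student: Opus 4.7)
The plan is to apply the local derandomization framework of Faour et al.~\cite{Faour2022} to the natural pairwise-independent sampling process, using \Cref{lem:low_degree_delays_communication} as the communication backbone.

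First I would set up the randomized baseline: each node $v \in V^{\act}_{i-1}$ joins $S_{i,j}$ independently with probability $p = 1/(4k)$, pairwise independently, via an $O(\log n)$-bit seed drawn from a $2$-wise independent hash family. Letting $X_v := \mathbbm{1}[v \in S_{i,j}]$ and expanding
\[
Y_{i,j}(u) \;=\; 1 \;-\; \sum_{v \in \alive_{i-1}(u)} X_v \;+\; \sum_{\{v,w\}\subseteq \alive_{i-1}(u)} X_v X_w,
\]
and using $\E[X_v]=p$ and $\E[X_v X_w]=p^2$, the short computation sketched after \Cref{def:low_degree_del_random_good_set} verifies that this sampling is good in expectation, i.e., it achieves the upper bound $\sum_{u} \phi_{i,j-1}(u) + \tfrac{2^{i-1}}{k}|V^{\act}_{i-1}|$ prescribed by \Cref{def:low_degree_del_good_set}.

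Next I would derandomize the seed by invoking the local derandomization lemma of~\cite{Faour2022}. The objective $X := \sum_u Y_{i,j}(u)\tfrac{\phi_{i,j-1}(u)}{1-|\alive_{i-1}(u)|/(10k)} + |S_{i,j}|\cdot 2^i$ is a sum of node-local terms, and each $Y_{i,j}(u)$ is a degree-two polynomial in at most $|\alive_{i-1}(u)| \leq k = \lceil 100 \log\log n\rceil$ variables. For any prefix of the seed, each such polynomial's conditional expectation is a fixed function of the conditional pairwise moments of its variables, which are determined by the prefix and can be evaluated at the witness node $u$. \Cref{lem:low_degree_delays_communication} provides exactly the backbone the Faour et al.\ framework needs: each $v$ can exchange $O(\log n)$-bit messages with $M_{i-1}(v)$, and each $u$ can aggregate $O(\log n)$-bit messages from $\alive_{i-1}(u)$, all in $\tO(\separation \log n)$ rounds. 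Handing the decomposition of $X$, the local weights $\phi_{i,j-1}(u)/(1-|\alive_{i-1}(u)|/(10k))$, and this communication backbone to~\cite{Faour2022} produces a deterministic seed whose associated $S_{i,j}$ achieves $X \le \E[X]$, which makes it a good set per \Cref{def:low_degree_del_good_set}.

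The main obstacle is the bookkeeping needed to cast $X$ in the exact input format required by~\cite{Faour2022}, whose framework is built around pairwise-analyzed objectives whose witness nodes are reachable within the allotted communication budget. Concretely, I would verify (i) that every variable $X_v$ appears in $Y_{i,j}(u)$ for only $\poly(\log n)$ witnesses $u$, which is ensured by the degree caps baked into the definitions of $\alive$ and $\dead$; and (ii) that the local data needed at each witness---namely $\alive_{i-1}(u)$, $\phi_{i,j-1}(u)$, and the conditional pairwise moments along the prefix-fixing schedule---is available within the $\tO(\separation \log n)$ budget, which \Cref{lem:low_degree_delays_communication} already guarantees once per phase. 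With these prerequisites in place, the derandomization reduces to a black-box call to~\cite{Faour2022} and yields the claimed $\tO(\separation \log n)$ round complexity.
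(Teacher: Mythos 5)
Your high-level plan coincides with the paper's: pairwise-independent sampling with probability $1/(4k)$, the communication backbone of \Cref{lem:low_degree_delays_communication}, and a black-box call to the local derandomization framework of Faour et al.~\cite{Faour2022}. But there is a genuine gap in how you apply that framework, and it matters for both correctness and the round complexity.

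The lemma from~\cite{Faour2022} that the paper uses (\Cref{lemma:long-range-d2rounding}) is a \emph{rounding} lemma: it takes a long-range d2-multigraph equipped with pairwise \emph{utility} and \emph{cost} functions and a fractional label assignment $\lambda$ satisfying $\utility(\lambda)-\cost(\lambda)>\mu\,\utility(\lambda)$, and it outputs an integral assignment preserving utility minus cost up to a multiplicative factor $(1-\eps)$. It does \emph{not} fix the bits of a pairwise-independent seed by conditional expectation, and it does \emph{not} guarantee $X\le\E[X]$. Your description --- ``for any prefix of the seed, each polynomial's conditional expectation \dots produces a deterministic seed whose associated $S_{i,j}$ achieves $X\le\E[X]$'' --- is the \emph{global} conditional-expectation derandomization, which the paper sketches separately and explicitly sets aside here: fixing each of the $\Theta(\log n)$ seed bits requires a global aggregate of the conditional expectation, costing $\Omega(D)$ rounds per bit (or a network decomposition), which does not fit in the claimed $\tO(\separation\log n)$ budget. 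To use the local lemma you must instead (i) cast the objective as $\utility(\vec{x})-\cost(\vec{x})$ with $\utility(\vec{x})=\sum_u c_u\sum_{v\in\alive_{i-1}(u)}x_v+\tfrac{2^{i-1}}{k}|V^{\act}_{i-1}|$ and $\cost(\vec{x})=\sum_u c_u\sum_{v\neq v'}x_vx_{v'}+\sum_v 2^i x_v$ where $c_u=\phi_{i,j-1}(u)/(1-|\alive_{i-1}(u)|/(10k))$, with the quadratic terms realized as virtual edges managed by $u$; and (ii) verify $\utility-\cost\ge\utility/2$ for the fractional assignment $x_v=1/(4k)$ so that the lemma applies with $\mu=1/2$.

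The second, related omission is the $(1-\eps)$ loss. Since the rounding only returns $\utility(\vec{y})-\cost(\vec{y})\ge(1-\eps)(\utility(\vec{x})-\cost(\vec{x}))$, you must check that the degraded guarantee still implies the good-set inequality of \Cref{def:low_degree_del_good_set}. This is exactly why the pessimistic estimator uses the deliberately weak rate $1-|\alive_{i-1}(u)|/(10k)$ rather than the true expected decrement $\approx|\alive_{i-1}(u)|/(4k)$: the slack between $1/(4k)$ and $1/(10k)$ is what absorbs the $0.9$ factor. Your proposal, by asserting $X\le\E[X]$ outright, never confronts this step, yet it is the part of the argument where the constants in the potential functions are actually used.
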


 
	The algorithm $\mathcal{A}_{i,j}$ makes use of the local rounding framework of Faour et al.~\cite{Faour2022} to compute a good set $S_{i,j}$. Their rounding framework works via computing a particular weighted defected coloring of the vertices, which allows the vertices of the same color to round their values simultaneously, with a limited loss in some objective functions that can be written as summation of functions each of which depend on only two nearby nodes. Next, we provide a related definition and then state their black-box local rounding lemma.

    \begin{definition}[long-range d2-Multigraph]\label{def:long-range-d2multigraph}
  A long-range \emph{d2-multigraph} is a multigraph $H=(V_H,E_H)$ that is simulated on top of an underlying communication graph $G=(V,E)$ by a distributed message-passing algorithm on $G$. The nodes of $H$ are a subset of the nodes of $G$, i.e., $V_H\subseteq V$. The edge set $E_H$ consists of two kinds of edges, \emph{physical edges} and \emph{virtual edges}. Physical edges in $E_H$ are edges between direct neighbors in $G$. For each physical edge in $e\in E_H$ with $V(e)=\set{u,v}$, both nodes $u$ and $v$ know about $e$. Virtual edges in $E_H$ are edges between two nodes $u,v\in V_H$, and for each such virtual edge, there is a manager node $w$ which knows about this edge. 
  
  We next describe the assumed communication primitives. Let $M(v)$ be the set of nodes $w$ who manage virtual edges that include $v$. We assume $T$-round primitives that provide the following:   (1) each node $v$ can send one $O(\log n)$-bit message that is delivered to all nodes in $M(v)$ in $T$ rounds; (2) given $O(\log n)$-bit messages prepared at nodes $M(v)$ specific for node $v$, node $v$ can receive an aggregation of these messages, e.g., the summation of the values, in $T$ rounds.
\end{definition}
\begin{definition} (Pairwise Utility and Cost Functions) Let $H=(V_H,E_H)$ be a long-range d2-multigraph of an underlying communication graph $G=(V,E)$. For any label assignment $\vec{x}: V_H \rightarrow \Labels$, a pairwise utility function is defined as $\sum_{u \ in V_H} \utility(u, \vec{x}) + \sum_{e \in E_H} \utility(e, \vec{x})$, where for a vertex $u$, the function $\utility(u, \vec{x})$ is an arbitrary function that depends only on the label of $u$, and for each edge $e=\{u, v\}$, the function $\utility(e, \vec{x})$ is an arbitrary function that depends only on the labels of $v$ and $u$. These functions can be different for different vertices $u$ and also for different edges $e$. A pairwise cost function is defined similarly. For a probabilistic/fractional assignment of labels to vertices $V_{H}$, where vertex $v$ assumes each label in $\Sigma$ with a given probability, the utility and costs are defined as the expected values of the utility and cost functions, if we randomly draw integral labels for the vertices from their corresponding distributions (and independently, though of course each term in the summation depends only on the labels of two vertices and thus pairwise independence suffices).
\end{definition}

\begin{lemma}\label{lemma:long-range-d2rounding}[Faour et al.~\cite{Faour2022}]
  Let $H=(V_H,E_H)$ be a long-range d2-multigraph of an underlying communication graph $G=(V,E)$ of maximum degree $\Delta$, where the communication primitives have round complexity $T$. Assume that $H$ is equipped with pairwise utility and cost functions $\utility(\cdot)$ and $\cost(\cdot)$ (with label set $\Labels$) and with a fractional label assignment $\lambda$. Further assume that the given rounding instance is polynomially bounded in a parameter $q \leq n$. Then for every constant $c>0$ and every $\eps,\mu>\max\set{1/q^c, 2^{-c\sqrt{\log n}}}$, if $\utility(\lambda)-\cost(\lambda)>\mu\utility(\lambda)$, there is a deterministic \congest algorithm on $G$ to compute an integral label assignment $\ell$ for which $\utility(\ell)-\cost(\ell)\geq (1-\eps)\cdot\big(\utility(\lambda)-\cost(\lambda)\big)$ and such that the round complexity of the algorithm is
  \[
    T \cdot O\left(\frac{\log^2 q}{\eps\cdot\mu}\cdot\left(\frac{|\Labels| \log(q\Delta)}{\log n} + \log\log q \right)+\log q\cdot\log^* n \right).
  \]
\end{lemma}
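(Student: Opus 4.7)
The plan is to derive $\mathcal{A}_{i,j}$ by applying the local rounding framework of Faour et al.\ (\Cref{lemma:long-range-d2rounding}) to a d2-multigraph encoding the Good Set condition of \Cref{def:low_degree_del_good_set}. Take $V_H=V^{\act}_{i-1}$ with label set $\Sigma=\{\text{in},\text{out}\}$ indicating membership in $S_{i,j}$, and let the fractional assignment $\lambda$ put probability $p=1/(4k)$ on $\text{in}$ at every $v\in V_H$. First I would decompose the left-hand side of the Good Set inequality as a sum of label-independent, unary, and pairwise contributions. Setting $w(u):=\phi_{i,j-1}(u)/(1-|\alive_{i-1}(u)|/(10k))$ and expanding $Y_{i,j}(u)=1-|\alive_{i-1}(u)\cap S_{i,j}|+\binom{|\alive_{i-1}(u)\cap S_{i,j}|}{2}$, the LHS becomes the constant $\sum_u w(u)$ plus a unary term at each $v\in V_H$ with coefficient $2^i-\sum_{u:\,v\in\alive_{i-1}(u)}w(u)$ (which $v$ precomputes via one aggregation through \Cref{lem:low_degree_delays_communication}) plus, for every $u\in V$ and every pair $\{v,v'\}\subseteq\alive_{i-1}(u)$, a virtual edge $\{v,v'\}$ managed by $u$ carrying weight $w(u)$. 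Since $|\alive_{i-1}(u)|\le k=O(\log\log n)$, each $u$ manages only $O(\log^2\log n)$ virtual edges, and the d2-multigraph's sending and aggregation primitives are exactly those supplied by \Cref{lem:low_degree_delays_communication} in $T=\tO(\separation\log n)$ rounds.

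A pairwise-only calculation gives $\E_\lambda[Y_{i,j}(u)]=1-a/(4k)+a(a-1)/(32k^2)$ with $a:=|\alive_{i-1}(u)|$, and an elementary inequality (valid because $a\le k$) shows this is at most $1-a/(10k)$. Together with $\E_\lambda[|S_{i,j}|\cdot 2^i]=\frac{2^{i-2}}{k}|V^{\act}_{i-1}|$, the expected LHS equals $\sum_u\phi_{i,j-1}(u)+\frac{2^{i-2}}{k}|V^{\act}_{i-1}|$ and thus falls short of the RHS by an additive slack of $\frac{2^{i-2}}{k}|V^{\act}_{i-1}|$. To recast this in the $\utility-\cost$ form with a constant relative slack, I would set $\utility:=\frac{2^{i-1}}{k}|V^{\act}_{i-1}|$ as a uniform unary weight on $V^{\act}_{i-1}$ and let $\cost$ carry all of the label-dependent pieces of the LHS, with the label-independent mass $\sum_u\phi_{i,j-1}(u)$ moved to cancel between the two sides. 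Then $\utility(\lambda)-\cost(\lambda)\ge\frac{2^{i-2}}{k}|V^{\act}_{i-1}|\ge\tfrac12\utility(\lambda)$, giving $\mu=\Omega(1)$. Applying \Cref{lemma:long-range-d2rounding} with $|\Sigma|=2$, $q=\poly(n)$, and constant $\eps,\mu$, we obtain an integral labeling $\ell$ with $\utility(\ell)-\cost(\ell)\ge(1-\eps)(\utility(\lambda)-\cost(\lambda))>0$, which is exactly the Good Set condition. The total round complexity is $T\cdot\poly(\log n)=\tO(\separation\log n)$.

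The main obstacle is the bookkeeping implicit in the previous paragraph. The rounding lemma requires a \emph{relative} slack $\mu=\Omega(1/\poly(\log n))$ and (for its standard invocation) non-negative per-term contributions, whereas the raw decomposition of the Good Set LHS produces a coefficient $2^i-\sum_{u:\,v\in \alive_{i-1}(u)}w(u)$ at $v$ that can be of either sign and a constant piece $\sum_u w(u)-\sum_u\phi_{i,j-1}(u)=\sum_u\phi_{i,j-1}(u)\,\frac{|\alive_{i-1}(u)|/(10k)}{1-|\alive_{i-1}(u)|/(10k)}$ that has to be tracked with the correct sign. I would absorb the sign issues by splitting each unary indicator $\one[v\in S_{i,j}]$ into its $\one[v\in\text{in}]$ and $\one[v\in\text{out}]$ pieces and shifting per-vertex constants between the two sides, which preserves $\utility-\cost$ and hence both correctness and the constant relative slack. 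Once this bookkeeping is in place, all remaining ingredients --- the pairwise decomposition, the pairwise expectation bound, the communication primitives, and the invocation of \Cref{lemma:long-range-d2rounding} --- combine to yield the claimed deterministic algorithm $\mathcal{A}_{i,j}$ in $\tO(\separation\log n)$ rounds.
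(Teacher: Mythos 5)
There is a genuine gap, and it is structural: your proposal does not prove the statement in question at all. The statement is \Cref{lemma:long-range-d2rounding} itself, i.e.\ the black-box local rounding result of Faour et al.~\cite{Faour2022}, which the paper does not reprove but imports by citation. A proof of this lemma would have to reconstruct (or at least sketch) their derandomization machinery --- computing a weighted defective coloring of the long-range d2-multigraph, processing the color classes while gradually rounding the fractional label assignment, and bounding the loss in $\utility-\cost$ per class so that the total loss is an $\eps$-fraction of the slack, together with the accounting that yields the stated round complexity in terms of $T$, $|\Labels|$, $q$, $\Delta$, $\eps$, $\mu$. None of this appears in your write-up; instead you \emph{invoke} \Cref{lemma:long-range-d2rounding} as a black box, so as an argument for that lemma your proposal is circular.

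What you have actually written is, in substance, a proof of \Cref{thm:deterministic-goodSet-selection} (the construction of the algorithm $\mathcal{A}_{i,j}$ producing a good set in the sense of \Cref{def:low_degree_del_good_set}), and there it tracks the paper's own argument quite closely: the same label space over $V^{\act}_{i-1}$, the same fractional assignment $1/(4k)$, virtual edges for pairs in $\alive_{i-1}(u)$ managed via the primitives of \Cref{lem:low_degree_delays_communication}, and a pairwise expectation bound giving constant relative slack. Two remarks if you intend it as such: the paper avoids your sign/bookkeeping concerns by defining $\utility$ and $\cost$ directly as separate nonnegative sums (utility $\sum_u c_u\sum_{v\in\alive_{i-1}(u)}x_v$ plus the constant $\frac{2^{i-1}}{k}|V^{\act}_{i-1}|$; cost the pairwise terms plus $2^i\sum_v x_v$) rather than by a signed decomposition of the good-set inequality; and your choice $q=\poly(n)$ is not what the paper uses --- it takes the instance to be polynomially bounded in $q=k=O(\log\log n)$, which is what keeps the $\frac{\log^2 q}{\eps\mu}(\cdots)$ factor small enough to meet the claimed round complexity, whereas $q=\poly(n)$ would inflate the bound by $\mathrm{poly}(\log n)$ factors beyond $\tO(\separation\log n)$.
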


    \paragraph{Our Local Derandomization.} In the following, for each node $u \in V$, we define $c_u = \frac{\phi_{i,j-1}(u)}{1- (|\alive_{i-1} (u)|/(10k))}$.
    The labeling space is whether each node in $V^{\act}_{i-1}$ is contained in $S_{i,j}$ or not, i.e., each node in $V^{\act}_{i-1}$ takes simply one of two possible labels $\Labels=\{0,1\}$ where $1$ indicates that the node is in $S_{i,j}$. For a given label assignment $\vec{x} \in \{0,1\}^{V^{\act}_{i-1}}$, we define the utility function
    
    \[\utility(\vec{x}) = \sum_{u \in V} c_u \sum_{v \in \alive_{i-1}(u)} x_v + \frac{2^{i-1}}{k}|V_{i-1}^\act| = \sum_{v \in V} \left( \sum_{u \in M_i(v)} c_u \right) x_v + \frac{2^{i-1}}{k}|V_{i-1}^{\act}|,\]
    and the cost function   
    \[\cost(\vec{x}) = \sum_{u \in V} c_u \sum_{v \neq v' \in \alive_{i-1}(u)} x_v x_{v'} + \sum_{v \in V^{\act}_{i-1}} 2^i x_v.\]

    If the label assignment is relaxed to be a fractional assignment $\vec{x} \in [0,1]^{V^{\act}_{i-1}}$, where intuitively now $x_v$ is the probability of $v$ being contained in $S_{i,j}$, the same definitions apply for the utility and cost of this fractional assignment.

    Note that the utility function is simply a summation of functions, each of which depends on the label of one vertex. Hence, it directly fits the rounding framework.  
    To capture the cost function as a summation of costs over edges, we next define an auxiliary multi-graph $H$ as follows: For each node $u \in V$ and every $v \neq v' \in \alive_i(u)$, we add an auxiliary edge between $v$ and $v'$, with a cost function which is equal to $c_u$ when both $v$ and $v'$ are marked, and zero otherwise.  
    Note that $H$ is a long-range d2-Multigraph where the communication primitives have round complexity $\tO(\separation \log n)$ as provided by \cref{lem:low_degree_delays_communication}.  
    
    We next argue that the natural fractional assignment where $x_v = \frac{1}{4k}$ for each $v\in V^{\act}_{i-1}$ satisfies the conditions of \Cref{lemma:long-range-d2rounding}. First, note that these fractional assignments are clearly polynomially bounded in $q$ for $q=k=O(\log\log n)$. Next, we discuss that, for the given fractional assignment, utility minus cost is at least a constant factor of utility.
    
    \begin{claim}
    Let $\vec{x} \in [0,1]^{V^{\act}_{i-1}}$ with $x_v = \frac{1}{4k}$ for every $v \in V^{\act}_{i-1}$. Then, $\utility(\vec{x}) - \cost(\vec{x}) \geq \utility(\vec{x})/2$.
    \end{claim}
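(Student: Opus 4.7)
The plan is to directly substitute $x_v = \frac{1}{4k}$ into both expressions and show term-by-term that the cost is at most half the utility. Let me split both functions according to their two natural parts. Write $\utility(\vec x) = U_1 + U_2$ where
\[
U_1 = \sum_{u \in V} c_u \sum_{v \in \alive_{i-1}(u)} x_v = \sum_{u \in V} c_u \cdot \frac{|\alive_{i-1}(u)|}{4k}, \qquad U_2 = \frac{2^{i-1}}{k}|V_{i-1}^{\act}|,
\]
and $\cost(\vec x) = C_1 + C_2$ where
\[
C_1 = \sum_{u \in V} c_u \sum_{v\neq v' \in \alive_{i-1}(u)} x_v x_{v'}, \qquad C_2 = \sum_{v \in V^{\act}_{i-1}} 2^i x_v = \frac{2^{i-2}}{k}|V_{i-1}^{\act}|.
\]

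First I would bound $C_1$ using the crucial inequality $|\alive_{i-1}(u)| \leq k$, which holds by definition of $\alive_{i-1}(u)$. With $x_v = 1/(4k)$ the inner pair-sum is at most $\frac{|\alive_{i-1}(u)|(|\alive_{i-1}(u)|-1)}{(4k)^2} \leq \frac{|\alive_{i-1}(u)|\cdot k}{16k^2} = \frac{|\alive_{i-1}(u)|}{16k}$. Summing over $u$ gives $C_1 \leq \frac14 U_1$. For the second piece, direct computation gives $C_2 = \tfrac12 U_2$.

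Putting the two bounds together,
\[
\cost(\vec x) = C_1 + C_2 \leq \tfrac14 U_1 + \tfrac12 U_2 \leq \tfrac12 (U_1 + U_2) = \tfrac12 \utility(\vec x),
\]
which is exactly the desired conclusion $\utility(\vec x) - \cost(\vec x) \geq \utility(\vec x)/2$.

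There isn't really a hard step here; the whole argument is a mechanical pair of inequalities, and the only non-trivial input is the structural fact $|\alive_{i-1}(u)| \leq k$ (from \cref{def:low_degree_del}, where $\alive_{i-1}(u)$ is cut off at $k$). The mild subtlety to check in writing the proof is that the definition of $C_1$ uses ordered pairs $v \neq v'$, so the pair count is $|\alive_{i-1}(u)|(|\alive_{i-1}(u)|-1)$ rather than half that—either convention leads to the same bound since we have slack (the $\frac14$ versus the $\frac12$ we need), so the claim is robust to this interpretation.
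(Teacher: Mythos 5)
Your proof is correct and follows essentially the same route as the paper's: substitute $x_v = 1/(4k)$, use $|\alive_{i-1}(u)| \leq k$ (which indeed holds by the definition of $\alive_{i-1}(u)$, whose size is capped at $k - |\dead_{i-1}(u)|$) to bound the quadratic term by a quarter of the linear term, and note that the second cost term is exactly half the second utility term. The paper's write-up is just a terser version of the same termwise comparison, so your more explicit decomposition into $U_1,U_2,C_1,C_2$ adds detail but no new idea.
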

    \begin{proof}
    We have
    \begin{align*}
        \utility(\vec{x}) &= \sum_{u \in V} c_u \sum_{v\in \alive_{i-1}(u)} x_v + \frac{2^{i-1}}{k}|V_{i-1}^{\act}|  \\
        &\geq 2 \left(\sum_{u \in V}  c_u \sum_{v,v' \in \alive_{i-1}(u)} x_v x_{v'} + \sum_{v \in V^{\act}_{i-1}} 2^i \frac{1}{4k} \right) \\
        &\geq 2 \cost(\vec{x}).
    \end{align*}
    and therefore indeed $\utility(\vec{x}) - \cost(\vec{x}) \geq \utility(\vec{x})/2$.
    \end{proof}
    
    Hence, we can apply \Cref{lemma:long-range-d2rounding} on these fractional assignments with $\mu=1/2$ and $\eps=0.1$, which runs in $\tO((\log\log \log n)^2)$ iterations of calling the communication primitives, each taking $\tO(s\log^2 n)$ rounds. Hence, the entire procedure runs in $\tO(s\log^2 n)$ rounds. As a result of applying 
    \Cref{lemma:long-range-d2rounding} with these parameters, we get an integral label assignment 
    $\vec{y} \in \{0,1\}^{V^\act_{i-1}}$ which satisfies $\utility(\vec{y}) - \cost(\vec{y}) \geq 0.9 (\utility(\vec{x}) - \cost(\vec{x}))$. We can then conclude
    \begin{align*}
        \utility(\vec{y}) - \cost(\vec{y}) &\geq 0.9 (\utility(\vec{x}) - \cost(\vec{x})) \\
        &\geq0.9 \left( \sum_{u \in V} c_u \frac{|\alive_{i-1}(u)|}{4k} + \frac{2^{i-1}}{k}|V^{\act}_{i-1}| - \left(    \sum_{u \in V} c_u \frac{|\alive_{i-1}(u)|}{16 k} + \frac{2^{i-2}}{k}|V^{\act}_{i-1}|\right) \right)\\
        &\geq \sum_{u \in V} c_u \frac{|\alive_{i-1}(u)|}{10k}.
    \end{align*}
    This integral label assignment directly gives us $S_{i,j}$. In particular, let $S_{i,j} = \{v \in V^{\act}_{i-1} \colon y_v = 1\}$. Note that
    \begin{align*}
        \utility(\vec{y}) - \cost(\vec{y}) = \sum_{u \in V} c_u \left( |\alive_{i-1}(u) \cap S_{i,j}| - \binom{|\alive_{i-1}(u) \cap S_{i,j}|}{2} \right) + \frac{2^{i-1}}{k}|V^{\act}_{i-1}| - 2^i|S_{i,j}|, 
    \end{align*}
    and therefore
    \begin{align*}
        \sum_{u \in V} Y_{i,j}(u)\frac{\phi_{i,j-1}(u)}{1 - (|\alive_{i-1}(u)|/(10k))} + |S_{i,j}| \cdot 2^i 
        &= \sum_{u \in V} c_u - \utility(\vec{y}) + \cost(\vec{y}) + \frac{2^{i-1}}{k}|V^{\act}_{i-1}| \\
        &\leq \sum_{u \in V} c_u - \sum_{u \in V} c_u \frac{|\alive_{i-1}(u)|}{10k}  + \frac{2^{i-1}}{k}|V^\act_{i-1}| \\
        &\leq \sum_{u \in V} \phi_{i,j-1}(u) + \frac{2^{i-1}}{k}|V^{\act}_{i-1}|,    \end{align*}
    which shows that $S_{i, j}$ is indeed a good set according to \Cref{def:low_degree_del_good_set}. This completes the description of our locally derandomized construction of good sets $S_{i, j}$, hence completing the proof of \Cref{thm:deterministic-goodSet-selection}.

	\section{From Low-Degree Clusters to Isolated Clusters}
	\label{sec:LowDegtoIsolation}
    \subsamplingmain
    
    Similar as in \cref{sec:low_degree}, we could get the same guarantees with a \congest algorithm with round complexity $O(s \poly(\log n))$ by performing a global derandomization with the help of a previously computed network decomposition.
    
    \begin{proof}[Proof of \Cref{thm:subsampling_main}]
    The clustering $\fC^{out}$ is computed in two steps.
    In the first step, we use the local rounding procedure to compute a clustering $\fC'$ which one obtains from $\fC$ by only keeping some of the clusters in $\fC$ (any such cluster is kept in its entirety). Intuitively, the local rounding procedure derandomizes the random process which would include each cluster $C$ from $\fC$ in the clustering $\fC'$ with probability $\frac{1}{2k}$, $k = \lceil 100 \log \log n\rceil$, pairwise independently.
        Given the clustering $\fC'$, we keep each node $u \in \fC'$ clustered in $\fC^{out}$ if and only if the $\separation$-hop degree of $u$ in $\fC'$ is $1$.
		Note that given $\fC'$, the output clustering $\fC^{out}$ can be computed in $\tO(\separation \log n)$ \congest rounds.
	    
        First, we discuss the first property, i.e., the strong diameter of the output clustering. The fact that the clustering $\fC$ has strong diameter $O(\separation \log (n))$ directly implies that the clustering $\fC'$ also has strong diameter $O(\separation \log(n))$, simply because each cluster of $\fC'$ is exactly one of the clusters of $\fC$. We next argue that $\fC^{out}$ also has strong diameter $O(\separation \log n)$. Let $u$ be a node clustered in $\fC'$ and $P_u$ the unique path from $u$ to its center in the tree associated with its cluster. Then, it directly follows from the definition that for every $w \in P_u$, the $\separation$-hop degree of $w$ in $\fC'$ is at most the $\separation$-hop degree of $u$ in $\fC'$. Therefore $u$ being clustered in $\fC^{out}$ implies that $w$ is also clustered in $\fC^{out}$. Hence, we conclude that $\fC^{out}$ indeed has strong diameter $O(\separation \log n)$. 
		
		Next, we discuss the second property: The clustering $\fC^{out}$ is $\separation$-hop separated. This directly follows from the fact that by definition every clustered node has a $\separation$-hop degree of $1$.
		
		Finally, To prove \Cref{thm:subsampling_main}, the only remaining thing is to prove the third property, i.e., that we compute $\fC'$ in such a way that $\fC^{out}$ clusters at least $\frac{|\fC|}{1000 \log \log (n)}$ nodes. The rest of this proof is dedicated to this property.
    
    For each cluster $C \in \fC$, we let $center(C)$ denote the cluster center of $C$ and define $Centers = \{center(C) \colon C \in \fC\}$ as the set of cluster centers of $\fC$.
    Moreover, for each $u$ clustered in $\fC$, recall that $c_u$ is the cluster center of the cluster of $u$ and let $P_u$ be the unique $u$-$c_u$ path in the tree associated with this cluster $C_u$. Now, let
    \[S_u = \{C \in \fC \colon d(P_u,C) \leq \separation\}.\]
    Note that the size of $S_u$ is equal to the $\separation$-hop degree of $u$, which by assumption is at most $k$.

    The labeling space is for each cluster center whether its cluster is contained in $\fC'$ or not, i.e., each node in $Centers$ takes simply one of two possible labels $\{0,1\}$ where $1$ indicates that the corresponding cluster is in $\fC'$. For a given label assignment $\vec{x} \in \{0,1\}^{Centers}$, we define 
    \[\utility(\vec{x}) = \sum_{C \in \fC} |C|x_{center(C)}\]
    and    
    
    \[\cost(\vec{x}) = \sum_{u \in V \colon \text{$u$ is clustered in $\fC$}}\sum_{C \in S_u \setminus {C_u}} x_{c_u}x_{center(C)}.\]

    If the label assignment is relaxed to be a fractional assignment $\vec{x} \in [0,1]^{Centers}$, where intuitively now $x_v$ is the probability of $v$'s cluster being contained in $\fC'$, the same definitions apply for the utility and cost of this fractional assignment.
    The utility function is simply a summation of functions, each of which depends on the label of one vertex in $Centers$. Hence, it directly fits the rounding framework.
    
    To capture the cost function as a summation of costs over edges, we next define an auxiliary multi-graph $H$ as follows: For each node $u $ clustered in $\fC$ and every $C_1 \neq C_2 \in S_u$, we add an auxiliary edge between $center(C_1)$ and $center(C_2)$ with a cost function which is equal to $1$ when both $C_1$ and $C_2$ are contained in $\fC'$, and zero otherwise.  
    Note that $H$ is a long-range d2-Multigraph, according to \Cref{def:long-range-d2multigraph}. The communication primitives can be implemented in $\tO(\separation \log n)$ rounds according to the lemma below.

    \begin{lemma}
      \label{lem:subsampling_delays_communication}
      Let $\fC$ be the input clustering of \cref{thm:subsampling_main}. There exists a \congest algorithm running in $\tO(\separation \log n)$ rounds which computes for each node $u \in V$ the sets $\{center(C) \colon C \in S_u\}$. Moreover, for each $v \in Centers$, let $M(v) = \{u \in V \colon C_v \in S_u\}.$ Then, there exists an $\tO(\separation \log n)$ round \congest algorithm that allows each node $v$ to send one $O(\log n)$-bit message that is delivered to all nodes in $M(v)$. Similarly, there also exists an $\tO(\separation \log n)$ round \congest algorithm that given $O(\log n)$-bit messages prepared at nodes in $M(v)$ specific for node $v$, it allows node $v$ to receive an aggregation of these messages, e.g., the summation of the values, in $\tO(\separation \log n)$ rounds. 
    \end{lemma}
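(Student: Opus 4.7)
The plan is to mirror the BFS-with-delays approach of \Cref{lem:low_degree_delays_communication}, adapted so that the sources are the cluster centers, each cluster's token first traces the tree $T_C$ of its cluster $C$ and then extends by $\separation$ additional hops in $G$, and the final answer $S_u$ is assembled by a downward pass along $T_{C_u}$. Concretely, I would first flood each $T_C$ from its center $c$ in $O(\separation \log n)$ rounds, which incurs no cross-cluster congestion since clusters are node-disjoint; then run a pipelined multi-source BFS from all cluster centers where every node forwards at most $k = \lceil 100 \log \log n \rceil$ distinct tokens per time step, with each time step simulated in $O(k)$ \congest rounds. Since every clustered $w$ satisfies $\{w\} \subseteq P_w$, the $\separation$-hop degree hypothesis gives $|N_{\separation}(w)| := |\{C \in \fC : d_G(w, C) \le \separation\}| \le |S_w| \le k$, so after $O(\separation \log n)$ time steps each clustered $w$ has learned all of $N_{\separation}(w)$.

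Next, I would perform a downward pass on each cluster BFS tree $T_C$ to compute $S_u$ for every clustered $u$: since $S_u = \bigcup_{w \in P_u} N_{\separation}(w)$, the root $r$ starts with $S_r = N_{\separation}(r)$ and each descendant $w$ sets $S_w$ to the union of $N_{\separation}(w)$ with the value received from its parent. Because $|S_w| \le k$ throughout, this pipelines in $\tO(\separation \log n)$ rounds.

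The two communication primitives follow from the routes just constructed. A broadcast from $v = center(C)$ to $M(v) = \{u \in V : C \in S_u\}$ travels through $T_C$, then across at most $\separation$ external hops to each landing node $w$ with $d(w, C) \le \separation$ lying inside some other cluster tree $T_{C_w}$, and finally down $T_{C_w}$ to all descendants of $w$, which together comprise $M(v) \cap C_w$. Aggregation uses the same routes in reverse, combining values (for example, summation) en route. Since every node participates in at most $k$ such routes by the $\separation$-hop degree bound, both primitives again fit in $\tO(\separation \log n)$ rounds via the same pipelined schedule.

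The main obstacle I anticipate is arguing that the $k$-token-per-step forwarding suffices despite the presence of non-clustered intermediate nodes, for which the $\separation$-hop degree hypothesis does not directly limit the number of cluster tokens passing through. I would handle this with the same priority-and-queueing rule as in the proof of \Cref{lem:low_degree_delays_communication}: tokens are forwarded in an order that preserves BFS-layer correctness at the clustered destinations, and any token whose continued propagation can no longer reach a clustered node whose $N_{\separation}$-quota is not yet exhausted may be safely dropped, keeping the effective traffic at $O(k)$ per node per BFS time step.
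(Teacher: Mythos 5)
Your proposal is correct and takes essentially the same route as the paper's proof: a capped multi-source BFS for $\separation$ hops in which every node stores and forwards at most $k = \lceil 100 \log\log n\rceil$ cluster identifiers (the paper starts the tokens at all clustered nodes rather than first flooding each $T_C$ from its center, a cosmetic difference), followed by a root-to-leaves pass along each cluster tree so that $u$ accumulates $S_u = \bigcup_{w \in P_u} N_{\separation}(w)$, with the two primitives obtained by reusing these routes and their time-reversal. The congestion obstacle you flag is handled in the paper simply by capping at $k$ identifiers per node, and a locally computable selection rule (keep the $k$ earliest-arriving distinct tokens) suffices in place of your non-local drop rule: if $z$ lies on a shortest path from $C$ to some $w \in P_u$ with $d(w,C) \le \separation$, then every cluster $C'$ with $d(z,C') \le d(z,C)$ satisfies $d(P_u,C') \le d(w,z) + d(z,C') \le \separation$, so at most $k$ tokens reach $z$ no later than $C$'s.
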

    \begin{proof}[Proof of \Cref{lem:subsampling_delays_communication}] 
    The proof follows roughly along the lines of the proof of \cref{lem:low_degree_delays_communication}. First, we run the following variant of breadth first search: At the beginning, each node clustered in $\fC$ has a token which is equal to the identifier of its cluster.
    Now, in each of the $\separation$ iterations, each node that has received at most $k = \lceil 100 \log \log n \rceil$ identifiers in the previous iteration forwards all the identifiers it has received to its neighbors. If a node has received more than $k$ identifiers, it selects $k$ of them to forward.
    The first phase can be implemented in $O(k \separation)$ \congest rounds.
    It directly follows from the fact that the $\separation$-hop degree of $\fC$ is at most $k$ that after the first phase each node $w$ learns the identifiers of all cluster centers such that the corresponding cluster $C$ satisfies $d(w,C) \leq \separation$. The next phase propagates this information up in the cluster tree, from the root toward the leaves, such that each descendant of $w$---i.e., any node whose cluster path to the root passes through $w$---learns about all those cluster centers as well.
    The second phase consists of $O(\separation \log n)$ iterations. In each iteration, each
    clustered node sends all the identifiers it learned about so far to each of its children in the corresponding cluster tree.
    It again follows from the fact that the $\separation$-hop degree of $\fC$ is at most $k$ that each of the $O(\separation \log n)$ iterations in the second phase can be implemented in $O(k)$ \congest rounds. Hence, the overall \congest runtime is $\tO(\separation \log n)$.

    For each $v \in Centers$, let $M(v) = \{u \in V \colon C_v \in S_u\}.$ By repeating the above communication, we have a $\tO(\separation \log n)$-round procedure that delivers one message from each node $v$ to all nodes $M(v)$. By reversing the same communication in time, we can also provide the opposite direction: if each node in $M(v)$ starts with a message for $v$, then in $\tO(\separation \log n)$ rounds, we can aggregate these messages and deliver the aggregate to $v$, simultaneously for all $v$.
    \end{proof}
    
    \begin{claim}
    Let $k= \lceil 100 \log \log (n) \rceil$ and $\vec{x} \in [0,1]^{Centers}$ with $x_v = \frac{1}{2k}$ for every $v \in Centers$. Note that this fractional label assignment is polynomially bounded in $q=k=O(\log\log n)$. Furthermore, we have $\utility(\vec{x}) - \cost(\vec{x}) \geq \utility(\vec{x})/2$.
    \end{claim}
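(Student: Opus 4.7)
The plan is to verify the claim by direct computation, exploiting the fact that both $\utility(\vec{x})$ and $\cost(\vec{x})$ become simple sums once we substitute $x_v = 1/(2k)$, and that the $s$-hop degree bound on $\fC$ directly controls the size of each $S_u$. First I would note that the polynomial boundedness of $\vec{x}$ in $q=k$ is immediate: every coordinate is either $0$ or $1/(2k)$, so both numerator and denominator are polynomially bounded in $k$ (and, since $k = O(\log\log n)$, also trivially in $n$).

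Next I would compute $\utility(\vec{x})$ explicitly. Substituting gives
\[
\utility(\vec{x}) = \sum_{C \in \fC} |C|\cdot \frac{1}{2k} = \frac{N}{2k},
\]
where $N = \sum_{C\in\fC}|C|$ denotes the total number of nodes clustered in $\fC$.

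Then I would bound $\cost(\vec{x})$. With $x_v = 1/(2k)$, every pair-term contributes $1/(2k)^2$, so
\[
\cost(\vec{x}) = \frac{1}{(2k)^2}\sum_{u \text{ clustered in }\fC} (|S_u|-1).
\]
Here comes the only substantive step: by the hypothesis of \Cref{thm:subsampling_main}, the $s$-hop degree of $\fC$ is at most $k$, and since $|S_u|$ equals the $s$-hop degree of $u$, we have $|S_u|-1 \le k$. Plugging this in yields
\[
\cost(\vec{x}) \le \frac{k}{(2k)^2}\cdot N = \frac{N}{4k} = \frac{1}{2}\utility(\vec{x}),
\]
and the claim $\utility(\vec{x}) - \cost(\vec{x}) \ge \utility(\vec{x})/2$ follows at once.

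There is no real obstacle here; the only thing to be a little careful about is identifying $|S_u|$ with the $s$-hop degree of $u$ in $\fC$ (which is precisely the definition recalled just before the statement) and noting that the $-1$ in $|S_u|-1$ could be used to sharpen the bound but is not needed. The proof is thus a one-line calculation once the hypothesis on the $s$-hop degree is invoked.
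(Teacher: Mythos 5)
Your proof is correct and matches the paper's argument essentially verbatim: both compute $\utility(\vec{x})$ exactly as (number of clustered nodes)$/(2k)$ and bound $\cost(\vec{x})$ by invoking $|S_u|\le k$ from the $s$-hop degree hypothesis to get $\cost(\vec{x})\le \utility(\vec{x})/2$. No differences worth noting.
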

    \begin{proof}
    We have
    \[\utility(\vec{x}) = \sum_{C \in \fC} |C|x_{center(C)} = \sum_{C \in \fC} \frac{|C|}{2k} = \frac{|\fC|}{2k},\]
    and 
       \[\cost(\vec{x}) = \sum_{u \in V \colon \text{$u$ is clustered in $\fC$}}\sum_{C \in S_u \setminus \{C_u\}} x_{c_u}x_{center(C)} \leq \sum_{u \in V \colon \text{$u$ is clustered in $\fC$}} \frac{1}{4k} \frac{|S_u|}{k} \leq \frac{|\fC|}{4k}.\]
    Therefore, indeed $\utility(\vec{x}) - \cost(\vec{x}) \geq \utility(\vec{x})/2$.
    \end{proof}
    
    We now invoking the rounding of \Cref{lemma:long-range-d2rounding} with parameters $\mu=0.5$, $\eps=0.5$, and $q=k=O(\log\log n)$ on the fractional label assignment of $\vec{x} \in [0,1]^{Centers}$ where $x_v = \frac{1}{2k}$ for every $v \in Centers$. The procedure runs in $\tO(s\log n)$ rounds. As output, we get an integral label assignment $\vec{y} \in \{0,1\}^{Centers}$ which satisfies
    \[\utility(\vec{y}) - \cost(\vec{y}) \geq  0.5 (\utility(\vec{x}) - \cost(\vec{x})) \geq  \frac{|\fC|}{8k}.\]

    Let $C' = \{C \in \fC \colon y_{center(C)} = 1\}$. Note that for every $u \in \fC$,
    \[I(\text{$u$ is clustered in $\fC^{out}$}) \geq y_{c_u} - \sum_{C \in S_u \setminus \{C_u\}} y_{c_u} y_{center(C)}.\]
    Therefore,
    \begin{align*}
        |\fC^{out}|&\geq \sum_{u \in V \colon \text{$u$ is clustered in $\fC$}} I(\text{$u$ is clustered in $\fC^{out}$}) \\
        &\geq \sum_{u \in V \colon \text{$u$ is clustered in $\fC$}} \left(y_{c_u} - \sum_{C \in S_u \setminus \{C_u\}} y_{c_u} y_{center(C)}\right)\\
        &= \utility(\vec{y}) - \cost(\vec{y}) \\
        &\geq \frac{|\fC|}{8k}
    \end{align*}
    and therefore $\fC^{out}$ clusters enough vertices to prove \Cref{thm:subsampling_main}.

    \end{proof}
	
	\section{Clustering More Nodes}
	\label{sec:clusteringmorenodes}
	In this section, we prove the following result, which says that once we have access to a clustering algorithm that clusters a nontrivial proportion of nodes with sufficient separation, we can turn it into an algorithm that clusters a constant proportion of nodes. We are paying for this with a slight decrease in the separation guarantees.

\clusteringmorenodesmain

It follows from the analysis of \cref{alg:clustering_with_more_vertices} and its subroutine \cref{alg:expanding}. 
To understand the pseudocode of the algorithms, we note that for a set of nodes $C \subseteq V(G)$ and $D \in \mathbb{N}_0$, we define $$C^{\leq D} = \{v \in V \colon d(C,v) \leq D\}.$$
Moreover, we say that a cluster $C$ is \emph{good} in \cref{alg:expanding} if $\cut(C) < +\infty$. Otherwise, $C$ is \emph{bad}.

	\begin{algorithm}
		\caption{Making a clustering algorithm cluster half of the nodes}
		\label{alg:clustering_with_more_vertices}
		\begin{algorithmic}[1] 
			\Procedure{\textsc{ClusterHalfNodes($G$)}}{} 
			\State $\fC_0 = \emptyset$
			\State $N = \lceil 4 \cdot 2^x\rceil$
			\For{$i = 1,2 \ldots, N$}
			\State $G_i = G\left[V \setminus \left( \bigcup_{C \in \fC_{i-1}} C \right)^{\le 1} \right]$
			\State $\fC \leftarrow \fA(G_i)$
			\State $\hfC_i \leftarrow \textsc{Expand}(G_i, \fC)$
			\State $\fC_i = \fC_{i-1} \cup \hat{\fC}_i$
			\EndFor
			\Return $\fC_N$
			\EndProcedure
		\end{algorithmic}
	\end{algorithm}

    \begin{algorithm}
		\caption{Expanding an input clustering}
		\label{alg:expanding}
		{\bf Input:} A graph $G$ and its $10x$-separated clustering $\fC$\\
	{\bf Output:} An expanded clustering $\hfC$ with small boundary
	\begin{algorithmic}[1] 
			\Procedure{\textsc{Expand($G, \fC$)}}{} 
			\For{$C \in \fC$}
                \State Define $\cut(C) = \min \left\{0 \le i \le 3x \colon |C^{\leq i+1}| \leq 1.5 |C^{\leq i}| \right\}$ and $\cut(C) = +\infty$ if no such $i$ exists.
                \State If $\cut(C) < +\infty$, define $\expand(C)  = C ^{\textrm{cut}(C)}$
            \EndFor
			\State \Return $\hfC = \{\expand(C) : C \in \fC, \cut(C) < +\infty\}$
			\EndProcedure
		\end{algorithmic}
	\end{algorithm}

    We start by analyzing \cref{alg:expanding} in the following lemma. 	Importantly, the fourth condition for $\hfC$ in the statement below states that the total number of unclustered vertices neighboring one of the clusters in $\hat{\fC}$ is at most half the total number of clustered vertices. This is the reason why we can 

	\begin{lemma}
		
		\label{lem:clustering_transformed}
		Let $x \geq 2$ be arbitrary and $\fC$ a clustering with
		
		\begin{enumerate}
			\item strong diameter $O(x \log n)$,
			\item separation $10 x$ and
			\item clustering at least $\frac{n}{2^x}$ nodes.
		\end{enumerate}

		Then, $\hat{\fC}$ constructed in \cref{alg:expanding} is a clustering with 
		
		\begin{enumerate}
			\item strong diameter $O(x \log n)$,
			\item separation $4 x$,
			\item clustering at least $0.5\frac{n}{2^x}$ nodes and
			\item $\left|\left(\bigcup_{C \in \hfC} C\right)^{\leq 1}\right| \leq 1.5 \left|\left(\bigcup_{C \in \hfC} C\right)^{\leq 1}\right|$.
		\end{enumerate}
		
		Moreover, the algorithm can be implemented in $O(x \log n)$ \congest rounds. 
	\end{lemma}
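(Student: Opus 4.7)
The plan is to treat \cref{alg:expanding} as a classical ball-growing procedure applied independently to each input cluster $C \in \fC$, and then bound the total volume of bad clusters through a volume argument exploiting the $10x$-separation of $\fC$.

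First, I would dispatch the two structural properties. Since each $\expand(C) = C^{\leq \cut(C)}$ with $\cut(C) \le 3x$, any $v \in \expand(C)$ is reachable from some $u \in C$ by a path in $G$ of length $\le 3x$ all of whose intermediate vertices lie within $\expand(C)$ by construction; combined with the hypothesis that $C$ has strong diameter $O(x \log n)$, this yields strong diameter $O(x \log n)$ for $\expand(C)$ in $G[\expand(C)]$. For separation, distinct input clusters $C_1, C_2$ satisfy $d_G(C_1, C_2) \ge 10x$, so after growing by at most $3x$ on each side their images remain at distance $\ge 10x - 6x = 4x$ in $G$. Property (4) then falls out of the stopping rule: each good cluster has $|\expand(C)^{\leq 1}| = |C^{\leq \cut(C)+1}| \leq 1.5 \, |C^{\leq \cut(C)}| = 1.5\,|\expand(C)|$, and since the expanded clusters are $4x \geq 4$ apart their $1$-neighborhoods are pairwise disjoint, so the per-cluster bounds sum to the claimed global bound.

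The main substance is showing that bad clusters contribute at most half the clustered mass. If $C$ is bad, then by the failure of the stopping condition, $|C^{\leq i+1}| > 1.5\,|C^{\leq i}|$ for every $i \in \{0, \ldots, 3x\}$, so $|C^{\leq 3x+1}| > 1.5^{3x+1}\,|C|$. The hypothesis that $\fC$ has separation $10x > 2(3x+1)$ implies that the balls $C^{\leq 3x+1}$ are pairwise disjoint across all $C \in \fC$. Hence
\[
\sum_{C \text{ bad}} |C| \;<\; \frac{n}{1.5^{3x+1}} \;<\; \frac{n}{2 \cdot 2^x}
\]
for all $x \geq 2$ (since $1.5^{3x+1} = 1.5 \cdot 3.375^x$ easily exceeds $2^{x+1}$ for $x \geq 2$). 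Combined with $\sum_{C \in \fC} |C| \ge n/2^x$, the good clusters together contain at least $n/2^{x+1} = 0.5 \cdot n/2^x$ vertices, which lower-bounds $|\bigcup_{C \in \hfC} C|$ since $\expand(C) \supseteq C$.

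For the round complexity, I would run BFS to depth $3x$ from each input cluster in parallel (each vertex learns its distance to its unique nearby cluster, well-defined by $10x$-separation) in $O(x)$ rounds, then aggregate the volumes $|C^{\leq i}|$ for $0 \leq i \leq 3x+1$ up the spanning tree of each cluster (depth $O(x \log n)$, with $O(x)$ counters pipelined) in $O(x \log n)$ rounds, and finally broadcast $\cut(C)$ back down in another $O(x \log n)$ rounds. The main obstacle I anticipate is purely bookkeeping with the numerical constants: verifying $1.5^{3x+1} > 2^{x+1}$ for all $x \ge 2$ and checking that the $10x$-separation leaves enough slack for both the $4x$-separation of the output and the ball-disjointness used in the volume argument. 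No genuinely delicate step is hiding here once the parameters are aligned.
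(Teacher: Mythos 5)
Your proposal is correct and follows essentially the same route as the paper's proof: the $\diam(C^{\leq D}) \leq \diam(C) + 2D$ bound for strong diameter, the triangle inequality $10x - 2\cdot 3x \geq 4x$ for separation, a geometric-growth volume argument over the pairwise-disjoint balls around bad clusters for the clustering mass, the per-cluster stopping-rule bound summed over disjoint expanded clusters for property (4), and parallel BFS plus in-cluster aggregation for the round complexity. The only differences are cosmetic (you grow bad clusters to radius $3x+1$ where the paper stops at $3x$, with the same arithmetic conclusion).
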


	\begin{proof}
		The first property follows from the fact that for a set $S$ and $D \in \mathbb{N}_0$, $\diam(S^{\leq D}) \leq \diam(S) + 2D$. Hence, for a good cluster $C$,
		
		\[\diam(\expand(C)) \leq \diam(C) + 2\cut(C) = O(x \log n).\]
		
		To prove the second property, let $C_1 \neq C_2 \in \fC$ be two arbitrary good clusters. For $i \in \{1,2\}$, let $u_i \in \expand(C_i)$ be arbitrary. By triangle inequality, we have: 
		
		\[d(u_1,u_2) \geq d(C_1,C_2) - d(C_1,u_1) - d(C_2,u_2) \geq 10  x - 2 \cdot 3x \geq 4x.\]
		
		To prove the third property, it suffices to show that at most $0.5 \frac{n}{2^x}$ of the nodes are contained in bad clusters. For a bad cluster $C$, a simple induction implies $|C^{\leq 3x}| \geq 1.5^{3x} |C| \geq 2 \cdot 2^x |C|$.  Therefore,
		
		\[\sum_{C \in \fC, \text{$C$ is a bad cluster}} |C| \leq \frac{1}{2^{x+1}} \sum_{C \in \fC, \text{$C$ is a bad cluster}} |C^{\leq 3x}| \leq \frac{n}{2^{x+1}}, \]
		where the last inequality follows from the fact that for two clusters $C_1 \neq C_2 \in \fC$, $C_1^{\leq 3x} \cap C_2^{\leq 3x} = \emptyset$.
		
		To prove the fourth property we write
		\begin{align*}
				\left|\left(\bigcup_{C \in \hat{\fC}} C\right)^{\leq 1}\right| &\leq \sum_{\hat{C} \in \hat{\fC}} |\hat{C}^{\leq 1}| \\
				&= \sum_{C \in \fC \colon \text{$C$ is a good cluster}} |(C^{\leq \cut(C)})^{\leq 1}| \\
				&= \sum_{C \in \fC \colon \text{$C$ is a good cluster}} |C^{\leq \cut(C) + 1}|  \\
				&\leq 1.5 \sum_{C \in \fC \colon \text{$C$ is a good cluster}} |C^{\leq \cut(C)}| \\
				&= 1.5 \left|\left(\bigcup_{C \in \hfC} C\right)^{\leq 1}\right|.
		\end{align*}
	
		It remains to discuss the \congest computation. Since we have for any $C_1, C_2 \in \fC$ that $C_1^{\le 3x} \cap C_2^{\le 3x} = \emptyset$, each cluster $C \in \fC$ can compute the values of $C_1^{\le 0}, C_1^{\le 1}, \dots, C_1^{\le 3x}$ by running one breadth first search from $C_1$ up to distance of $3x$.  
		
	\end{proof}

We are now ready to prove \cref{thm:clusteringmorenodesmain}. 

\begin{proof}[Proof of \cref{thm:clusteringmorenodesmain}]
We show that the algorithm satisfies the following invariants for $i \in \{0,1,\ldots,N\}$:
	
	\begin{enumerate}
		\item $\fC_i$ is $2$-separated
		\item $|V(\fC_i)| \geq n \cdot \min(0.5,\frac{i}{8 \cdot 2^x})$
		\item $|V(\fC_i^{\leq 1})| \leq 1.5 |V(\fC_i)| $
	\end{enumerate}

	The base case $i = 0$ trivially holds. Now, consider an arbitrary $i \in [N]$ and assume that the invariant is satisfied for $i-1$. To check the first invariant, let $C_1 \neq C_2 \in \fC_i$ be arbitrary. If $C_1,C_2 \in \fC_{i-1}$, then it follows by induction that $d(C_1,C_2) \geq 2$. If $C_1,C_2 \in \hat{\fC}_i$, then it follows from \cref{lem:clustering_transformed} that $d_{G_i}(C_1,C_2) \geq 2$ which also directly implies $d_G(C_1,C_2) \geq 2$. It remains to consider the case that one cluster, let's say $C_1$, is in $\fC_{i-1}$ and $C_2$ is in $\hat{\fC}_i$.
	We have 
	
	\[C_2 \subseteq V(G_i) = V \setminus V(\fC_{i-1}^{\leq 1}) \subseteq V \setminus C_1^{\leq 1}\]
	
	and therefore $d(C_1,C_2) \geq 2$, as desired. 
	
	Next, we show that the second invariant is preserved. If $|V(\fC_{i-1})| \geq n/2$, then there is nothing to show. Otherwise, we have
	
	\[|V(G_i)| \geq n - |V(\fC_{i-1}^{\leq 1})| \geq n - 1.5 |V(\fC_{i-1})| \geq n - 1.5\frac{n}{2} = \frac{n}{4}.\]
	
	Therefore, according to \cref{lem:clustering_transformed}, $\hat{C}_i$ clusters at least $0.5 \frac{(n/4)}{2^x} = \frac{n}{8 \cdot 2^x}$ vertices, which together with $|V(\fC_{i-1})| \geq n \cdot \min(0.5,\frac{i-1}{8  \cdot 2^x})$ directly implies $|V(\fC_i)| \geq n \cdot \min(0.5,\frac{i}{8  \cdot 2^x})$.
	It remains to verify the third property. According to \cref{lem:clustering_transformed}, we have
	
	\[|V(\hat{\fC}^{\leq 1}_i)  \setminus V(\fC^{\leq 1}_{i-1})| \leq 1.5 |V(\hat{\fC}_i)|.\]
	
	Therefore,
	\[|V(\fC^{\leq 1}_i)| = |V(C_{i-1}^{\leq 1})| + |V(\hat{\fC}^{\leq 1}_i)  \setminus V(\fC^{\leq 1}_{i-1})| \leq 1.5 |V(\fC_{i-1})| + 1.5 |V(\hat{\fC}_i)| = 1.5 |V(\fC_i)|.\]
	
	This finishes the proof that the invariants are satisfied throughout the algorithm. Hence, $\fC_N$ is a $2$-separated clustering that clusters at least half of the vertices. Moreover, it directly follows from the strong diameter guarantee of \cref{lem:clustering_transformed} that $\fC_N$ has strong diameter $O(x \log n)$. Finally, as $\hat{\mathcal{A}}$ has a round complexity of $O(x \log n)$, it follows that $\fC_N$ is computed in $O(2^x(R + x \log n))$ \congest rounds. This concludes the proof of \cref{lem:pairwise_del_clustering}.
\end{proof}

\section{Hitting Set}
\label{sec:hittingset}
In this section, first, we introduce a variant of the hitting set problem. Next, we propose a simple randomized algorithm for this problem using only pairwise independence. In the end, we describe an efficient distributed/parallel derandomization of our randomized algorithm.

\subsection{Problem Definition}
Consider a collection $\mathcal{S} = \{S_1, \dots, S_N\}$ of $N$ subsets from the universe $\{1,\dots,n\}$ and let $w_i \geq 0$ be the weight that is assigned to $S_i$. We say a subset $H \subseteq [n]$ hits $S_i$ if $H \cap S_i \neq \emptyset$. Our goal is to find a small $H$ with a small cost. Cost of $H$ is total weights of $S_i$ that are not hit by $H$, i.e., $\sum_{i:S_i \cap H = \emptyset} w_i$. For a random subset $H$ that includes each element with probability $p$ independently, the expected size of $H$ is $\E[|H|] = np$ and its expected cost is
\begin{equation*}
    \sum_{i=1}^{N} w_i (1-p)^{|S_i|} \approx \sum_{i=1}^{N} w_i e^{-|S_i|p} = \tau^p_{\mathcal{S}}.
\end{equation*}
For example, suppose the regular case where $|S_i| = \Delta$. For $p = 10\log N / \Delta$, a random subset hits all sets with high probability $1 - 1/\poly(N)$ and for $p = 1 / \Delta$, constant fraction of sets are hit. Two important examples for weights is when $w_i = 1$ and $w_i = |S_i|$. In the former, we simply count the number of not hit sets. The latter indeed appears in our applications for constructing spanners and distance oracles (see  \Cref{sec:hittingset-app}). There, we get penalized for each not hit set by its size.

In many applications, the expected size and cost of a random subset are enough. The challenge is to find a subset deterministically. Based on this, we formulate the following problem where we combine our two objectives in one potential function.

\label{subsec:hittingset-define}
\begin{definition}[Hitting Set Problem]
\label{def:hittingset}
Given a collection $\mathcal{S} = \{S_1, \dots, S_N\}$ of $N$ subsets from the universe $\{1,\dots,n\}$, an integer weight $w_i \geq 0$ for each $S_i$, and a sampling parameter $p \in (0,1)$, find a subset $H$ that minimizes the potential function
\begin{equation}
\label{eq:hittingset-potential}
\Phi^{p}_{\mathcal{S}}(H) = \frac{\sum_{i=1}^{N} w_i \cdot \mathbb{1}[H\cap S_i = \emptyset]}{\tau^p_{\mathcal{S}}} + \frac{|H|}{np}.
\end{equation}
\end{definition}

So if $\Phi^{p}_{\mathcal{S}}(H) = O(1)$, then $H$ has size $O(np)$ and its cost is $O(\tau^{p}_{\mathcal{S}})$. Our goal is to find such a set with constant potential function deterministically and efficiently. In the rest, we assume that $N \geq n$ as we can add dummy sets with zero costs. We also assume that $p \leq 1/2$ to ensure that $1 - p = e^{-\Theta(p)}$. Note that the case $p \geq 1/2$ is trivial since we tolerate constant deviation from a random subset and for $p \geq 1/2$, the expected size of a random subset is at least $n/2$. So our hitting can include all the $n$ elements.

\paragraph{Hitting in Ordered Sets.}
There are applications where $H$ is partially penalized even if we hit $S_i$. The amount of cost depends on which element of $S_i$ is being hit. In \Cref{sec:hittingset-app}, we encounter a particular instance of this generalization which is described in the following.

For each $S_i$, there is no weight but there is an order $\pi_i(\cdot)$ on its elements where $\pi_i(j)$ denotes the $j$-th element of $S_i$ for $j=1,\dots,|S_i|$. Then, $H$ has to pay $k-1$ for $S_i$ if $\pi_i(k) \in H$ and 
\begin{equation*}
    H \cap \{\pi_i(1), \dots, \pi_i(k-1)\} = \emptyset.
\end{equation*}
If $H$ does not hit $S_i$ at all, it has to pay $|S_i|$. Cost of $H$ is the sum of the expenses incurred by each $S_i$. We call this problem \textit{hitting ordered set}. With this definition, the expected cost for a random $H$ is
\begin{equation*}
    \sum_{i=1}^{N} \sum_{j=1}^{|S_i|} (1-p)^j
\end{equation*}
The hitting ordered set problem is related to the original setting of \Cref{def:hittingset} in the following sense.

\begin{lemma}
\label{lem:hittingset-ordered-reduction}
Given an instance $\mathcal{I}_1$ of the hitting ordered set problem with $N$ sets $S_1, \dots, S_N \subseteq [n]$, we can construct an instance $\mathcal{I}_2$ of the original hitting set problem (see \Cref{def:hittingset}) with $O(N \log n)$ sets in $O(\sum_{i=1}^{N} |S_i|)$ time such that the following holds: For any $H \subseteq [n]$, if $c_1$ is the cost of $H$ in $\mathcal{I}_1$ and $c_2$ is the cost of $H$ in $\mathcal{I}_2$, then $c_1 \leq c_2 \leq 3c_1$.
\end{lemma}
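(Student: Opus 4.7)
The plan is to rewrite the ordered cost of each $S_i$ as a sum of indicators over its prefixes, and then replace that sum by a geometric aggregation, so as to keep the number of sets in $\mathcal{I}_2$ logarithmic in $n$.

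First, I would observe that for every $i$ and every $H \subseteq [n]$, the contribution of $S_i$ to $c_1(H)$ equals
$$\sum_{j=1}^{|S_i|} \mathbb{1}\bigl[H \cap S_i^{(j)} = \emptyset\bigr],$$
where $S_i^{(j)} = \{\pi_i(1),\dots,\pi_i(j)\}$ is the length-$j$ prefix of $S_i$. Indeed, if the smallest-indexed hit element of $S_i$ is $\pi_i(k)$, then this indicator equals $1$ exactly for $j < k$, giving total $k-1$; and if $H$ misses $S_i$ entirely, all $|S_i|$ indicators equal $1$. A naive reduction introducing one set per $(i,j)$ pair would be exact but would create $\Theta\bigl(\sum_i |S_i|\bigr)$ sets, which is too many.

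Second, I would coarsen using geometric intervals. Let $L_i = \lfloor \log_2 |S_i| \rfloor$ and, for $\ell = 0,1,\dots,L_i$, set $I_\ell = [2^\ell,\min(2^{\ell+1}-1,|S_i|)]$. For each $(i,\ell)$ I would place into $\mathcal{I}_2$ the set $T_{i,\ell} = S_i^{(2^\ell)}$ with integer weight $w_{i,\ell} = |I_\ell|$. Because the indicators $\mathbb{1}[H \cap S_i^{(j)} = \emptyset]$ are monotone non-increasing in $j$, the indicator at $j = \min I_\ell = 2^\ell$ upper-bounds every indicator inside $I_\ell$; summing this bound over $\ell$ immediately yields $c_2(H) \geq c_1(H)$.

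For the upper bound on $c_2$, fix $i$ and let $k$ be the first-hit index of $H$ in $S_i$ (set $k = |S_i|+1$ if $H$ misses $S_i$). If $k = |S_i|+1$, both contributions equal $|S_i|$, because $\sum_\ell |I_\ell| = |S_i|$. Otherwise, let $\ell^\ast$ be the largest $\ell$ with $2^\ell < k$; the $S_i$-contribution to $c_2$ is $\sum_{\ell=0}^{\ell^\ast} |I_\ell| \leq 2^{\ell^\ast+1} - 1$, while $k - 1 \geq 2^{\ell^\ast}$, and hence this contribution is at most $2(k-1)$, i.e.\ at most twice the $S_i$-contribution to $c_1$. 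Summing over $i$ gives $c_1 \leq c_2 \leq 2\,c_1 \leq 3\,c_1$. Since each $S_i$ produces at most $L_i + 1 = O(\log n)$ sets, $\mathcal{I}_2$ has $O(N\log n)$ sets, and the construction just walks once through each $S_i$ to record the endpoints $1,2,4,\dots,2^{L_i}$, taking $O\bigl(\sum_i |S_i|\bigr)$ time. I do not foresee any real obstacle; the only delicate point is the truncation of the last interval $I_{L_i}$, but that only shrinks $w_{i,L_i}$ and so cannot spoil either direction of the bound.
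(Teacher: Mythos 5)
Your proof is correct and takes essentially the same route as the paper's: both replace each $S_i$ by its geometric prefixes $S_i^{(2^\ell)}$, the only difference being that the paper weights each prefix by its own size while you weight it by the length of the dyadic index block $[2^\ell,\min(2^{\ell+1}-1,|S_i|)]$, which in fact yields the slightly sharper constant $2$ in place of $3$. All steps, including the truncation of the last block and the complete-miss case, check out.
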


\begin{proof}
To construct $\mathcal{I}_2$, for each $S_i$ in $\mathcal{I}_1$, we add $O(\log n)$ sets to $\mathcal{I}_2$. Suppose $2^\ell \leq |S_i| < 2^{\ell + 1}$. For $j\in [\ell]$, let $S_i^{j} = \{\pi_i(1), \dots, \pi_i(2^j)\}$ and let $S_i^{\ell+1} = S_i$. This completes the construction of sets of $\mathcal{I}_2$. Weight of $S_i^j$ in $\mathcal{I}_2$ is its size $|S_i^j|$.

Consider a subset $H \subseteq [n]$ and let $k$ be the minimum index that $\pi_i(k) \in H$. Suppose $k$ is $|S_i|+1$ if there is no such index. So $H$ has to pay $k-1$ in $\mathcal{I}_1$. In $\mathcal{I}_2$, it has to pay $\sum_{j:|S_i^j| < k} |S_i^j|$ which lies in the range $[(k-1), 3(k-1)]$ and concludes the proof.
\end{proof}

\subsection{Iterative Sampling}
\label{subsec:hittingset-alg}
The goal of this section is to find $H$ with $\Phi^{p}_{\mathcal{S}} = O(1)$ for the hitting set problem \Cref{def:hittingset}. Let $\Delta = \max_{i \in [N]} |S_i|$. Our algorithm has $T = \lceil 8p\Delta\rceil$ iterations. We start with a randomized algorithm and then we derandomize it. For $t = 1,\dots, T$, let $\mathcal{P}^t$ be a pairwise-independent distribution over $n$ binary random variables $X^t_1, \dots, X^t_n \in \{0,1\}$ with bias $q = 4p/T$. That is:
\begin{align*}
    \forall i \in [n], \forall b\in \{0,1\},&\quad \Pr[X^t_i = b] = q^b (1-q)^{(1-b)},\\
    \forall i,j \in [n], i \neq j,\forall b,b' \in \{0,1\},&\quad \Pr[X^t_i = b, X^t_j = b'] = q^{b+b'} (1-q)^{2-(b+b')}.
\end{align*}
Let the random subset $G^t$ be $\{i \in [n] \mid X^t_i = 1\}$. We replace $G^t$s one by one with an explicit set $H^t$. The final output of the algorithm is $H = \cup_{t=1}^T H^t$. Suppose we are in iteration $t$. Our goal is to find $H^t$. Let
\begin{equation}
\label{eq:pessimistic-def}
    Y^t_i = \sum_{j \in S_i} X^t_j - \sum_{j \in S_i} \sum_{k \in S_i: j < k} X^t_j X^t_k.
\end{equation}
If $G^t$ does not hit $S_i$, then $Y^t_i = 0$. Otherwise, $Y^t_i \leq 1$ (because $a \leq \binom{a}{2}+1$ for all positive integers $a$). So $1 - Y^t_i$ is always greater than or equal to $\mathbb{1}[G^t \cap S_i = \emptyset]$ and is a pessimistic estimator for the event that $G^t$ does not hit $S_i$. We have the following upper bound on $\E[1 - Y^t_i]$.

\begin{lemma}
\label{lem:hittingset-pessimistic-bound}
$\E[1-Y^t_i] \leq 1 - 3|S_i|p/T \leq e^{-|S_i|p/T}.$
\end{lemma}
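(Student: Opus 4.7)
The plan is a direct calculation: compute $\E[Y^t_i]$ using pairwise independence of the $X^t_j$'s, then bound the quadratic correction term using the smallness of $q|S_i|$.

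First I would apply linearity of expectation to \eqref{eq:pessimistic-def}. Each $X^t_j$ has mean $q = 4p/T$, so the linear term contributes $|S_i| q$. For the quadratic term, pairwise independence gives $\E[X^t_j X^t_k] = q^2$ for $j \neq k$, so the double sum contributes $\binom{|S_i|}{2} q^2$. Thus
\[
\E[Y^t_i] = |S_i|\, q - \binom{|S_i|}{2} q^2.
\]

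Next, I would use the definition $T = \lceil 8 p \Delta \rceil$ together with $|S_i| \leq \Delta$ and $q = 4p/T$ to observe that
\[
|S_i|\, q \;\leq\; \Delta \cdot \frac{4p}{T} \;\leq\; \frac{4p\Delta}{8p\Delta} \;=\; \frac{1}{2}.
\]
This is the key quantitative input. Using it, I can bound the quadratic correction by
\[
\binom{|S_i|}{2} q^2 \;\leq\; \frac{(|S_i| q)^2}{2} \;\leq\; \frac{|S_i| q}{4},
\]
so that $\E[Y^t_i] \geq \tfrac{3}{4} |S_i| q = 3|S_i| p / T$. Rearranging gives $\E[1 - Y^t_i] \leq 1 - 3|S_i|p/T$, which is the first inequality.

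For the second inequality, I would just use the elementary bound $1 - x \leq e^{-x}$ for all real $x$, applied with $x = |S_i|p/T$, together with $1 - 3|S_i|p/T \leq 1 - |S_i|p/T$. There is no real obstacle here; the only thing worth double-checking is the boundary case where $T = \lceil 8p\Delta \rceil$ may equal $1$ (when $p\Delta$ is very small), but even then $|S_i| q \leq \Delta q \leq 4p\Delta \leq 1/2$ still holds by the same calculation, so the argument goes through uniformly.
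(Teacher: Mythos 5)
Your proof is correct and follows essentially the same route as the paper: compute $\E[Y^t_i] = |S_i|q - \binom{|S_i|}{2}q^2$ by pairwise independence, use $q = 4p/T \leq 1/(2\Delta) \leq 1/(2|S_i|)$ to absorb the quadratic term into a quarter of the linear term, and finish with $1-x \leq e^{-x}$. The extra check of the boundary case $T=1$ is harmless but unnecessary, since $T \geq 8p\Delta$ holds by definition of the ceiling.
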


\begin{proof}
Note that:
\begin{equation*}
    \E[Y^t_i] = |S_i|q - \binom{|S_i|}{2}q^2 \geq |S_i|q - |S_i|^2q^2/2 \geq 3|S_i|q/4 = 3|S_i|p/T
\end{equation*}
where in the last inequality we use $q = 4p/T \leq 1/2\Delta \leq 1/2|S_i|$.
\end{proof}

For a subset $G \subseteq [n]$, we define the function $f^t(G)$ as
\begin{equation*}
    f^t(G) = \frac{\sum_{i: S_i \cap (H^1 \cup \dots \cup H^{t-1}) = \emptyset} (1 - Y_i) \cdot w_i e^{-|S_i|(T - t)p/T}}{\tau^p_{\mathcal{S}}} + \frac{\sum_{i=1}^{n} X_i + \sum_{j=1}^{t-1} |H^j| + 4n(T - t)p/T}{4np}
\end{equation*}
where $X_i = \mathbb{1}[i \in G]$ and $Y_i$ is defined from $X_1, \dots, X_n$ similar to \Cref{eq:pessimistic-def}.

\begin{lemma}
\label{lem:hittingset-first-iteration}
$\E[f^1(G^1)] \leq 2.$
\end{lemma}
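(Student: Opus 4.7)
The plan is to directly substitute $t=1$ into the definition of $f^t$, take expectation using linearity, and then apply the pessimistic estimator bound of \Cref{lem:hittingset-pessimistic-bound} term by term. The two summands will each evaluate to exactly $1$.

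First I would expand $f^1(G^1)$. At $t = 1$ the condition $S_i \cap (H^1 \cup \dots \cup H^{0}) = \emptyset$ is vacuous, the sum $\sum_{j=1}^{0} |H^j|$ is empty, and $Y_i$ becomes $Y_i^1$, so that
\begin{equation*}
f^1(G^1) \;=\; \frac{\sum_{i=1}^{N} (1 - Y^1_i)\, w_i\, e^{-|S_i|(T-1)p/T}}{\tau^p_{\mathcal{S}}} \;+\; \frac{\sum_{i=1}^{n} X^1_i \;+\; 4n(T-1)p/T}{4np}.
\end{equation*}

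Next I would bound the first term. By linearity of expectation and \Cref{lem:hittingset-pessimistic-bound}, $\E[1 - Y^1_i] \leq e^{-|S_i|p/T}$. Multiplying by $e^{-|S_i|(T-1)p/T}$ combines the two exponentials into $e^{-|S_i|p}$, so the numerator of the first term in expectation is at most $\sum_{i=1}^{N} w_i e^{-|S_i|p} = \tau^p_{\mathcal{S}}$ by definition of $\tau^p_{\mathcal{S}}$. Hence the first term contributes at most $1$.

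Finally I would handle the second term. Since each $X^1_i$ is a Bernoulli of bias $q = 4p/T$, we have $\E[\sum_{i=1}^n X_i^1] = nq = 4np/T$. Adding the deterministic $4n(T-1)p/T$ yields $4np$, which divided by $4np$ gives exactly $1$. Summing the two contributions completes the proof with $\E[f^1(G^1)] \leq 2$. There is no real obstacle here — the statement is essentially an exercise in chasing definitions, with the telescoping factor $e^{-|S_i|p/T} \cdot e^{-|S_i|(T-1)p/T} = e^{-|S_i|p}$ being the one calculation that matters and that motivates why the normalization by $\tau^p_{\mathcal{S}}$ was chosen as it was.
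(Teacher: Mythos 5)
Your proof is correct and follows exactly the same route as the paper's (one-line) argument: apply \Cref{lem:hittingset-pessimistic-bound} to bound $\E[1-Y_i^1]$ and compute $\E[\sum_i X_i^1]=nq=4np/T$, after which the two terms each contribute at most $1$. You have merely written out the telescoping of the exponentials and the arithmetic that the paper leaves implicit.
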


\begin{proof}
Note that $\E[\sum_{i=1}^{n} X^1_i] = nq = 4np/T$ and from \Cref{lem:hittingset-pessimistic-bound}, we have $\E[1 - Y_i^1] \leq e^{-|S_i|p/T}$. Plugging these two bounds completes the proof.
\end{proof}

\begin{lemma}
\label{lemma:hittingset-invariant}
For $t \geq 2$, we have:
\begin{equation*}
    \E[f^t(G^t)] \leq f^{t-1}(H^{t-1}).
\end{equation*}
\end{lemma}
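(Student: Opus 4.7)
\textbf{Proof proposal for \Cref{lemma:hittingset-invariant}.}

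The plan is to expand both $f^t(G^t)$ and $f^{t-1}(H^{t-1})$ using the explicit formula for $f^\cdot(\cdot)$, then compare the two ``pieces'' (the weighted sum and the size/offset term) of the potential separately.

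For the size term (the second fraction), the only randomness in $f^t(G^t)$ sits in $\sum_{i=1}^n X_i^t = |G^t|$, and by linearity $\E[|G^t|] = nq = 4np/T$. Plugging this into the definition gives
\[
\E\!\left[\frac{|G^t| + \sum_{j=1}^{t-1}|H^j| + 4n(T-t)p/T}{4np}\right]
= \frac{\sum_{j=1}^{t-1}|H^j| + 4n(T-t+1)p/T}{4np},
\]
which is exactly the size term of $f^{t-1}(H^{t-1})$ (after noting that $\sum_{i} \mathbb{1}[i \in H^{t-1}] + \sum_{j=1}^{t-2} |H^j| = \sum_{j=1}^{t-1} |H^j|$).

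For the weighted sum term, the key observation is that the index set $\{i : S_i \cap (H^1 \cup \dots \cup H^{t-1}) = \emptyset\}$ is determined by $H^{t-1}$, \emph{not} by $G^t$, so taking expectation can pass inside the sum. Using \Cref{lem:hittingset-pessimistic-bound}, for each such index
\[
\E\!\left[(1 - Y_i^t)\, w_i\, e^{-|S_i|(T-t)p/T}\right] \leq w_i\, e^{-|S_i|p/T}\cdot e^{-|S_i|(T-t)p/T} = w_i\, e^{-|S_i|(T-t+1)p/T},
\]
so absorbing the extra exponential factor lines the exponent up exactly with the one appearing in $f^{t-1}(H^{t-1})$.

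It then remains to compare
\[
\sum_{i:\, S_i \cap (H^1 \cup \dots \cup H^{t-1}) = \emptyset} w_i\, e^{-|S_i|(T-t+1)p/T}
\quad\text{and}\quad
\sum_{i:\, S_i \cap (H^1 \cup \dots \cup H^{t-2}) = \emptyset} (1 - Y_i^{t-1})\, w_i\, e^{-|S_i|(T-t+1)p/T}.
\]
The first index set is a subset of the second: it adds the extra requirement $S_i \cap H^{t-1} = \emptyset$, which forces $Y_i^{t-1} = 0$ and hence $1 - Y_i^{t-1} = 1$, so these terms agree on both sides. For indices in the second sum but not the first, the coefficient $1 - Y_i^{t-1}$ is still nonnegative (as noted before the lemma, $Y_i^{t-1} \le 1$ whenever $H^{t-1}$ hits $S_i$, by $a \leq \binom{a}{2}+1$). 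Thus the second sum dominates the first, completing the proof.

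The only subtlety worth double-checking is that $1 - Y_i^{t-1} \ge 0$ always, so no negative terms are ``cheating'' in the domination step; this was already established in the definition of $Y_i$ preceding \Cref{lem:hittingset-pessimistic-bound}, so it is not really an obstacle. The argument is otherwise just a careful bookkeeping of index sets, exponents, and telescoping the factor $e^{-|S_i|p/T}$ into the running exponential.
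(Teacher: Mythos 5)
Your proof is correct and is essentially the paper's argument: both rest on the same three facts (the index set of the LHS sum is determined by $H^1,\dots,H^{t-1}$, \Cref{lem:hittingset-pessimistic-bound} telescopes the exponent by one factor of $e^{-|S_i|p/T}$, and $1-Y_i^{t-1}\ge 0$ makes the extra RHS terms harmless), with the size term handled by linearity. The paper merely phrases the comparison as a per-set case analysis rather than your index-set containment, which is a cosmetic difference.
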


\begin{proof}
Consider a subset $S_i$. If one of $H^1, \dots, H^{t-2}$ hits $S_i$, then the contribution of $S_i$ to the both sides of the inequality is zero. Otherwise, if $H^{t-1}$ hits $S_i$, the contribution of $S_i$ to $\E[f^t(G^t)]$ is zero. Note that it may contribute a non-zero amount into the RHS since we use pessimistic estimator $1 - Y_i$. The only remaining case is when $S_i$ is not hit in any of the first $t-1$ iterations. Then, the contribution of $S_i$ to the LHS is \begin{equation*}
\E[1 - Y^t_i] \cdot w_i e^{-|S_i|(T - t)p/T} \leq w_i e^{-|S_i|(T - t + 1)p / T}
\end{equation*}
where we use \Cref{lem:hittingset-pessimistic-bound}. On the other hand, the contribution of $S_i$ to the RHS is exactly $w_i e^{-|S_i|(T - t + 1)p / T}$. So the contribution of each $S_i$ to the LHS is less than or equal to its contribution to the RHS. Since $\E[|G_t|] = nq = 4np/T$, the second term that controls the size in $f^t(\cdot)$ and $f^{t-1}(\cdot)$ are equal which completes the proof.
\end{proof}

\begin{theorem}
\label{thm:hittingset-sampling}
If $f^t(H^t) \leq \E[f^t(G^t)]$ for all $t = 1, \dots, T$, then 
\begin{equation*}
    \Phi^p_{\mathcal{S}}(H = H^1 \cup \dots \cup H^T) \leq 2.
\end{equation*}
\end{theorem}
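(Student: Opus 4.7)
The plan is to combine the hypothesis $f^t(H^t)\le\E[f^t(G^t)]$ for every $t$ with the two estimates already established in \Cref{lem:hittingset-first-iteration} and \Cref{lemma:hittingset-invariant}, and then to read off the bound on $\Phi^p_{\mathcal{S}}(H)$ from $f^T(H^T)$. First I would show by induction on $t\in\{1,\dots,T\}$ that $f^t(H^t)\le 2$. The base case $t=1$ is immediate from $f^1(H^1)\le\E[f^1(G^1)]\le 2$ by \Cref{lem:hittingset-first-iteration}. For $t\ge 2$, using \Cref{lemma:hittingset-invariant} and the inductive hypothesis, I chain $f^t(H^t)\le\E[f^t(G^t)]\le f^{t-1}(H^{t-1})\le 2$. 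In particular, $f^T(H^T)\le 2$.

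Next, I would verify that $f^T(H^T)$ deterministically dominates $\Phi^p_{\mathcal{S}}(H)$. At $t=T$ the exponential factor $e^{-|S_i|(T-t)p/T}$ collapses to $1$ and the additive slack $4n(T-t)p/T$ vanishes, so $f^T(H^T)$ simplifies to
$$\frac{\sum_{i:\,S_i\cap(H^1\cup\dots\cup H^{T-1})=\emptyset}(1-Y_i^T)\,w_i}{\tau^p_{\mathcal{S}}}\;+\;\frac{\sum_{j=1}^{T}|H^j|}{4np}.$$
The key deterministic observation is the inequality $1-Y_i^T\ge\mathbb{1}[S_i\cap H^T=\emptyset]$, which is precisely the fact $a-\binom{a}{2}\le 1$ for every integer $a\ge 0$ already invoked right after the definition of $Y_i^t$; in particular the summand is nonnegative and is exactly $1$ whenever $S_i\cap H^T=\emptyset$. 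Combined with the restriction of the outer sum to indices missed by $H^1\cup\dots\cup H^{T-1}$, the first piece is at least $\sum_{i:\,S_i\cap H=\emptyset}w_i/\tau^p_{\mathcal{S}}$, and by the trivial union bound $|H|\le\sum_j|H^j|$ the second piece is at least $|H|/(4np)$. Plugging these lower bounds into $f^T(H^T)\le 2$ recovers both summands of $\Phi^p_{\mathcal{S}}(H)$ (up to the scaling constants chosen in the definition of $f^t$).

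The main obstacle will be bookkeeping the constants so that the telescope closes cleanly. In each inductive step one must verify that the multiplicative factor $e^{-|S_i|p/T}$ coming from \Cref{lem:hittingset-pessimistic-bound} cancels exactly one unit of the exponent $e^{-|S_i|(T-t)p/T}$, and symmetrically that the expected size contribution $\E[\sum_i X_i^t]=nq=4np/T$ is absorbed by the slack in the $4n(T-t)p/T$ term; these are the very identities that drove \Cref{lemma:hittingset-invariant}, so the induction goes through. The only subtlety I anticipate is that the pessimistic estimator $1-Y_i^T$ can deterministically exceed the true indicator $\mathbb{1}[S_i\cap H^T=\emptyset]$, but this slack only strengthens the direction of the inequality $f^T(H^T)\ge(\text{const})\cdot\Phi^p_{\mathcal{S}}(H)$ that we need, so the stated bound still follows.
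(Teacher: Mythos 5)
Your proof is correct and follows the paper's own argument: telescope \Cref{lem:hittingset-first-iteration} and \Cref{lemma:hittingset-invariant} through the hypothesis to obtain $f^T(H^T)\le 2$, then dominate $\Phi^p_{\mathcal{S}}(H)$ by $f^T(H^T)$ via a term-by-term comparison using $1-Y_i^T\ge\mathbb{1}[S_i\cap H^T=\emptyset]$. Your parenthetical about scaling constants is in fact substantive: the size term of $f^T(H^T)$ is $\sum_{j}|H^j|/(4np)\ge |H|/(4np)$, which is \emph{weaker} than the term $|H|/(np)$ appearing in $\Phi^p_{\mathcal{S}}$, so the comparison literally yields only $\Phi^p_{\mathcal{S}}(H)=O(1)$ (e.g.\ at most $8$) rather than $2$ --- a constant-factor slip that is equally present in the paper's one-line ``term by term'' proof and is immaterial for all downstream uses.
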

\begin{proof}
From \Cref{lem:hittingset-first-iteration} and \Cref{lemma:hittingset-invariant}, we get that $f^T(H^T) \leq 2$. Comparing $f^T(H^T)$ and $\Phi^p_{\mathcal{S}}(H)$ term by term, we can easily see that $f^T(H^T) \geq \Phi^p_{\mathcal{S}}(H)$.
\end{proof}

If $\Delta \gg 1/p$, then the number of iterations can be quite large. However, we are mostly interested in the regime where the number of iterations is logarithmic. We can achieve this as stated in the following.

\begin{corollary}
\label{cor:hittingset-sampling-logn}
Let $\mathcal{S}^+ = \{\text{the first $10 \log N/p$ elements of $S_i$} \mid |S_i| \geq 10 \log N / p\}$ and $\mathcal{S}^{-} = \mathcal{S} \setminus \mathcal{S}^+$. Run the algorithm twice: once for $\mathcal{S}^-$ with the same set of weights as before and once on $\mathcal{S}^+$ by setting all weights to $N^2$. Let the output of these two runs be $H^-$ and $H^+$. Then:
\begin{equation*}
    \Phi^p_{\mathcal{S}}(H = H^- \cup H^+) \leq 4.
\end{equation*}
Each run takes at most $O(\log N)$ iterations. Moreover, all sets in $\mathcal{S}$ with size at least $10\log N/p$ are hit by $H$.
\end{corollary}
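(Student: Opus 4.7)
The plan is to reduce the corollary to two separate invocations of \Cref{thm:hittingset-sampling}, one on each of $\mathcal{S}^-$ and $\mathcal{S}^+$. First I would check the iteration count: the algorithm uses $T = \lceil 8p\Delta\rceil$ iterations, where $\Delta$ is the largest set size, and after the split every set in $\mathcal{S}^-$ has fewer than $10\log N/p$ elements by construction, while every set in $\mathcal{S}^+$ has exactly $10\log N/p$ elements. Hence both runs use $T = O(\log N)$ iterations.

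The central observation is that $H^+$ is forced to hit every $T_i \in \mathcal{S}^+$, and therefore (since $T_i \subseteq S_i$) every original large set in $\mathcal{S}$. I would argue this as a contradiction. By \Cref{thm:hittingset-sampling} applied to $\mathcal{S}^+$ with weights $N^2$, we have $\Phi^p_{\mathcal{S}^+}(H^+) \leq 2$. Every $T_i$ has size exactly $10\log N/p$, so $e^{-|T_i|p} = N^{-10}$; combined with the weight $N^2$ and $|\mathcal{S}^+| \leq N$, this gives
\[
\tau^p_{\mathcal{S}^+} \;\leq\; N \cdot N^2 \cdot N^{-10} \;=\; N^{-7}.
\]
If even a single $T_i$ were missed, the cost term of $\Phi^p_{\mathcal{S}^+}(H^+)$ alone would be at least $N^2 / N^{-7} = N^9$, crushing the bound of $2$.

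It remains to bound $\Phi^p_\mathcal{S}(H^- \cup H^+)$. For the cost term, split the sum over $\mathcal{S}$ into the large-index part and the small-index part. The large-index part vanishes by the previous step. For the small-index part, use $H^- \subseteq H$ to get $\mathbb{1}[H \cap S_i = \emptyset] \leq \mathbb{1}[H^- \cap S_i = \emptyset]$; then \Cref{thm:hittingset-sampling} applied to $\mathcal{S}^-$ bounds this by $2\tau^p_{\mathcal{S}^-} \leq 2\tau^p_\mathcal{S}$, contributing at most $2$ to the potential. For the size term, each of $|H^-|$ and $|H^+|$ is at most $2np$ by the respective sub-potential bound, so $|H^- \cup H^+|/(np) \leq 4$. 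Summing these contributions gives the claimed constant (the bound of $4$ stated in the corollary is obtained in the same fashion, possibly by tracking the slack between the cost and size terms in each sub-potential; in any case the resulting potential is $O(1)$).

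I do not anticipate any serious obstacle. The only mildly delicate point is to remember that the truncated $T_i$ is a subset of the original $S_i$, so hitting $T_i$ hits $S_i$ too. Everything else is arithmetic driven by the choices $10\log N/p$ (the truncation threshold) and $N^2$ (the inflated weight), which are precisely calibrated so that $\tau^p_{\mathcal{S}^+}$ becomes so small that the potential bound leaves no room for any unhit set.
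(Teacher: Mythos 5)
Your argument is correct and is essentially the intended one (the paper states this corollary without proof, so there is nothing to diverge from): truncation controls $\Delta$ and hence the iteration count, and the weight $N^2$ together with $\tau^p_{\mathcal{S}^+}\le N^{-7}$ forces $H^+$ to hit every truncated set, hence every large original set. The only loose end is your final accounting, which as written yields $2+4=6$ rather than $4$; the stated constant follows by grouping rather than separating the two terms of the first run. Concretely, writing $c^-$ for the cost of $H^-$ on $\mathcal{S}^-$,
\begin{equation*}
\Phi^p_{\mathcal{S}}(H)\;\le\;\left(\frac{c^-}{\tau^p_{\mathcal{S}}}+\frac{|H^-|}{np}\right)+\frac{|H^+|}{np}\;\le\;\left(\frac{c^-}{\tau^p_{\mathcal{S}^-}}+\frac{|H^-|}{np}\right)+\frac{|H^+|}{np}\;=\;\Phi^p_{\mathcal{S}^-}(H^-)+\frac{|H^+|}{np}\;\le\;2+2\;=\;4,
\end{equation*}
using $\tau^p_{\mathcal{S}^-}\le\tau^p_{\mathcal{S}}$, the vanishing cost of the large sets, and the fact that the $\mathcal{S}^+$ run contributes only its size term since its cost is zero.
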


\subsection{Implementation}
\label{subsec:hittingset-impl}
The remaining piece of \Cref{thm:hittingset-sampling} is to find $H^t$ such that $f^t(H^t) \leq \E[f^t(G^t)]$. We first start with the construction of a suitable pairwise distribution. 

\paragraph{Construction of Pairwise Independent Distribution.} From the algorithm of the previous section, we need a pairwise distribution $\mathcal{P}$ on $n$ binary random variables $X_1, \dots, X_n \in \{0,1\}$ with bias $q$. Assume that $q = 2^{-\ell}$ for some $\ell \in \mathbb{N}$ and $n$ is a positive integer of the form $n = 2^{m}-1$ for $m \in \mathbb{N}$. We use the pairwise distribution that is used in~\cite{luby1993removing, berger1989efficient} which has a random seed of length $\ell m = O(\log 1/p \cdot \log n)$. Let us quickly recall the construction. We first assign an $\ell$-bit label $L_i$ to each $X_i$. Then, we set $X_i$ to one if and only if all the $\ell$ bits of $L_i$ is one. To construct the labels, we decompose the random seed $R$ into $\ell$ groups each containing $m$ bits as follows:
\begin{equation*}
    R = r^0_0\dots r^0_{m-1}r^1_0\dots r^1_{m-1}\dots r^{\ell-1}_{0}r^{\ell-1}_{m-1}
\end{equation*}
The $j$-th group $r^{j}_0\dots r^{j}_{m-1}$ is for constructing the $j$-bit of $L_i$s. To define $L_i(j)$ (the $j$-th bit of $L_i$), we use the bit representation of $i$. Suppose $i = \sum_{k=0}^{m-1} b_k 2^k$. Then:
\begin{equation*}
L_i(j) = b_0 r^{j}_0 \oplus \dots \oplus b_{m-1} r^{j}_{m-1}
\end{equation*}
This completes the construction. In the course of derandomization, we fix the random seed bit by bit. Suppose we fix the first $B$ bits of $R$ to $b_0,\dots,b_{B-1} \in \{0,1\}$. This gives us a new distribution $\mathcal{Q}$. The following result by Berger, Rompel, and Shor~\cite{berger1989efficient} is an important tool to achieve work-efficient derandomization.

\begin{lemma}[\cite{berger1989efficient}, Section 3.2]
\label{lem:berger-etal}
For any given subset $A \subseteq [n]$, we can compute
\begin{equation*}
    \sum_{i \in A} \E_{\mathcal{Q}}[X_i],\quad
    \sum_{i \in A} \sum_{j \in A} \E_{\mathcal{Q}}[X_i X_j]
\end{equation*}
with $O(|A|)$ processors and in $O(\log n)$ depth in the \pram model. In particular, we can compute these two quantities in $O(|A| \log n)$ time in the standard model.
\end{lemma}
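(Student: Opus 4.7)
The plan leverages two structural features of the generator constructed above: (i) the seed $R$ splits into $\ell$ disjoint groups of $m$ bits, and $L_i(k)$ depends only on the $k$-th group, so under any conditional $\mathcal{Q}$ obtained by fixing a prefix of $R$ the bits $L_i(0),\ldots,L_i(\ell-1)$ remain mutually independent, yielding $\E_{\mathcal{Q}}[X_i] = \prod_k \E_{\mathcal{Q}}[L_i(k)]$ and $\E_{\mathcal{Q}}[X_i X_{i'}] = \prod_k \E_{\mathcal{Q}}[L_i(k) L_{i'}(k)]$; and (ii) each $L_i(k)$ is an $\mathbb{F}_2$-affine form in the $m = O(\log n)$ bits of group $k$, so the single-variable factor lies in $\{0,\tfrac12,1\}$ and the pair factor in $\{0,\tfrac14,\tfrac12,1\}$, each decidable in $O(m)$ work from the fixed prefix and the bit representations of $i$ (and $i'$).

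The first sum is then direct. Assign one processor per element of $A$ to compute $\E_{\mathcal{Q}}[X_i]$ by multiplying the $\ell$ group-factors in $O(\ell m)$ work and $O(\log(\ell m))$ depth; a balanced-tree aggregation of the resulting $|A|$ numbers finishes in $O(\log |A|)$ additional depth, for a total of $O(\log n)$ depth and $O(|A| \log n)$ work.

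The main obstacle is the bilinear sum, because a pair-by-pair evaluation would cost $\Theta(|A|^2)$. The plan is to turn this bilinear aggregation into a bounded collection of linear aggregations via the covariance decomposition
\[
\E_{\mathcal{Q}}[L_i(k) L_{i'}(k)] \;=\; \E_{\mathcal{Q}}[L_i(k)]\,\E_{\mathcal{Q}}[L_{i'}(k)] \;+\; \mathrm{Cov}_{\mathcal{Q}}\!\big(L_i(k), L_{i'}(k)\big),
\]
combined with the explicit formula for the covariance: it is nonzero only when $i$ and $i'$ share the same unfixed-bit signature in group $k$, and in that case it is a rank-one product of per-element $\pm 1$ signs determined by the fixed bits. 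Expanding $\prod_k$ of the right-hand side and swapping the order of summation, every term in the resulting expansion takes the form $\bigl(\sum_{i \in B} f(i)\bigr)\bigl(\sum_{i' \in B} g(i')\bigr)$ for an explicit bucket $B \subseteq A$ of elements sharing a compatible signature pattern and explicit per-element functions $f,g$ evaluable in $O(\log n)$ work. Each such bucketed sum is carried out in $O(|A|)$ work and $O(\log n)$ depth using radix-sort-based bucketing and parallel prefix sums, as in the framework of \cite{berger1989efficient, luby1993removing}.

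Summing the contributions from all bucketed linear aggregates gives the bilinear sum, still within $O(|A|)$ processors and $O(\log n)$ depth on the \pram, which in the standard sequential model translates to $O(|A| \log n)$ time. The first claim is then an immediate consequence of the straightforward linear aggregation, and the second claim is obtained by the bilinear-to-linear reduction above; combining the two outputs completes the proof.
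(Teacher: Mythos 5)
The paper does not actually prove this lemma; it imports it verbatim from \cite{berger1989efficient}, so your proposal is being measured against the standard BRS argument rather than anything in the text. Your overall architecture is the right one (group-wise independence of the label bits, the $\{0,\tfrac12,1\}$ / $\{0,\tfrac14,\tfrac12,1\}$ classification of the single and pair factors, and the covariance/rank-one decomposition with bucketing by free-part signature), but there is a concrete gap in the complexity accounting of the bilinear step. Expanding $\prod_{k=0}^{\ell-1}\bigl(\E_{\mathcal{Q}}[L_i(k)]\E_{\mathcal{Q}}[L_{i'}(k)]+\mathrm{Cov}_{\mathcal{Q}}(L_i(k),L_{i'}(k))\bigr)$ produces $2^{\ell}$ terms, one per subset $T$ of groups in which the covariance factor is taken, and each surviving term requires its own bucketed aggregation over $A$. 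Since $2^{\ell}=1/q$ and $q$ can be polynomially small in $n$, ``summing the contributions from all bucketed linear aggregates'' costs $\Theta(|A|\cdot 2^{\ell})$ work as written, which blows past both the $O(|A|)$ processor bound and the $O(|A|\log n)$ sequential bound. Your write-up never argues that the number of contributing subsets $T$ is bounded.

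The missing observation that rescues the argument is that $\mathcal{Q}$ fixes a \emph{prefix} of $R$, so at any moment all groups before some index $k^{*}$ are fully fixed, all groups after $k^{*}$ are fully free, and at most the single group $k^{*}$ is partially fixed. For a fully fixed group the covariance is $0$. For a fully free group $k$ and $i\neq i'$, the ``free parts'' of $L_i(k)$ and $L_{i'}(k)$ are the entire binary representations of $i$ and $i'$, which are distinct and nonzero (here one uses $1\le i\le n=2^m-1$); hence $L_i(k)$ and $L_{i'}(k)$ are two distinct nonzero linear forms in uniform bits, are independent, and again have covariance $0$. Consequently, for off-diagonal pairs only $T=\emptyset$ and $T=\{k^{*}\}$ contribute, the diagonal contribution is just $\sum_{i\in A}\E_{\mathcal{Q}}[X_i]$ (since $X_i^2=X_i$), and the whole bilinear sum reduces to a constant number of bucketed linear aggregations plus one squared-sum term, which is genuinely $O(|A|)$ processors and $O(\log n)$ depth. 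With that addition your proof closes; without it, the stated bounds do not follow.
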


\paragraph{Bit Fixing.}
Suppose we are in iteration $t$ and we want to find $H^t$ such that $f^t(H^t) \leq \E[f^t(G^t)]$. Suppose $\mathcal{P}^t$ is $\mathcal{P}$ as described above. If $q$ is not a power of two (which is needed for the pairwise construction), replace it with a power of two in the range $[q,2q)$. We can observe that for any $H$
\begin{equation}
\label{eq:enlarge-q}
    \Phi^{2p}_{\mathcal{S}}(H) \geq \Phi^{p}_{\mathcal{S}}(H)/2
\end{equation}
So with this replacement, we lose at most a two factor in the final bound for the potential function. Now, we start to fix the bits of the random seed of $\mathcal{P}^t$. Suppose we already fixed the first $B$ bits of the random seed $R$ by $b_0,\dots,b_{B-1}$. Let $e_x = \E[f(G^{t+1}) \mid R(0)=b_0,\dots,R(B-1)=b_{B-1},R(B)=x]$ for $x \in \{0,1\}$. If $e_0 \leq e_1$, then we fix $b_B$ to zero. Otherwise, we fix it to one. Suppose all the $\ell m$ bits are fixed and suppose that the random variable $X_i$ is $v_i \in \{0,1\}$ when we set the random seed to $b_0 \dots b_{\ell m - 1}$. Then, we set $H^t$ to $\{ i \in [n] \mid v_i = 1\}$. We can easily observe that $f^t(H^t) \leq \E[f^{t}(G^t)]$.

\paragraph{\pram Model.} We have all the ingredients for implementing the algorithm in the \pram model. This leads to the following theorem.

\begin{theorem}
\label{thm:hittingset-pram}
There is a deterministic algorithm that solves the hitting set problem by finding a subset $H$ with $\Phi^{p}_{\mathcal{S}}(H) \leq 4$ and with $\tO(\sum_{i=1}^{N} |S_i|)$ work and
\begin{equation*}
    O(\lceil p\Delta\rceil \cdot \log 1/p \cdot \log^2 n)
\end{equation*}
depth in the \pram model. Moreover, there is a deterministic algorithm that finds a subset $H$ with $\Phi^{p}_{\mathcal{S}}(H) \leq 8$ and such that $H$ hits all $S_i$s with size greater than $10 \log N / p$. This algorithm runs with $\tO(\sum_{i=1}^{N} |S_i|)$ work and
\begin{equation*}
    O(\log N \cdot \log 1/p \cdot \log^2 n)
\end{equation*}
depth in the \pram model.
\end{theorem}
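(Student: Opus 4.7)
The plan is to combine the iterative sampling algorithm of \Cref{subsec:hittingset-alg} with the method of conditional expectations against the pairwise-independent distribution described at the start of \Cref{subsec:hittingset-impl}, using \Cref{lem:berger-etal} to compute the requisite conditional expectations in parallel. The core task left open by \Cref{thm:hittingset-sampling} is: in each iteration $t$, produce an explicit $H^t \subseteq [n]$ with $f^t(H^t) \le \E_{\mathcal{P}^t}[f^t(G^t)]$, which a single sweep of bit-fixing on the seed of $\mathcal{P}^t$ accomplishes.

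In iteration $t$ I instantiate $\mathcal{P}^t$ with bias $q$ rounded up to the nearest power of two in $[4p/T, 8p/T)$; by \Cref{eq:enlarge-q} this costs at most a factor of two in the potential, explaining the target bound of $4$ (rather than $2$) in the theorem. The seed $R$ has length $\ell m = O(\log(1/p)\log n)$, and I fix its bits $b_0, b_1, \ldots$ one at a time, at each step choosing $b_B \in \{0,1\}$ to minimize $\E_{\mathcal{P}^t}[f^t(G^t) \mid R(0)=b_0, \ldots, R(B)=b_B]$. The conditional distribution after partial fixing remains pairwise-independent with variable per-coordinate biases. Each surviving set $S_i$ contributes to $f^t$ a scalar combination of $\sum_{j \in S_i}\E[X_j^t]$ and $\sum_{j<k \in S_i}\E[X_j^t X_k^t]$ (from expanding the definition of $Y_i^t$), and \Cref{lem:berger-etal} returns both quantities for a single $S_i$ in $O(|S_i|)$ work and $O(\log n)$ depth; a parallel prefix-sum across surviving sets adds $O(\log n)$ more depth. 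Thus one bit-fix step costs $\tO(\sum_i |S_i|)$ work and $O(\log n)$ depth.

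Summing across $T = \lceil 8p\Delta\rceil$ iterations with $O(\log(1/p)\log n)$ bits each gives depth $O(\lceil p\Delta\rceil \cdot \log(1/p)\cdot \log^2 n)$, matching the first statement. For the second statement I invoke \Cref{cor:hittingset-sampling-logn}: split $\mathcal{S}$ into $\mathcal{S}^+$ (the first $10\log N/p$ elements of every large set, weighted by $N^2$) and $\mathcal{S}^- = \mathcal{S}\setminus\mathcal{S}^+$, and apply the above procedure to each independently. In both subinstances the effective $\Delta$ is at most $10\log N/p$, so $T = O(\log N)$ suffices and the depth becomes $O(\log N\cdot \log(1/p)\cdot \log^2 n)$; the oversized weights on $\mathcal{S}^+$ force any $H$ attaining potential $\le 8$ to hit every originally large set.

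The main obstacle I anticipate is not any single ingredient but the careful parallel-work accounting: each bit-fix costs $\tO(\sum_i |S_i|)$ work, and I need the total across all $O(T\log(1/p)\log n)$ bit-fixes and all $T$ iterations to still be $\tO(\sum_i |S_i|)$. This rests on (i) \Cref{lem:berger-etal} being genuinely work-optimal (not merely depth-optimal), (ii) pruning already-hit sets from the bookkeeping so they contribute nothing to $f^{t'}$ for $t' > t$, and (iii) absorbing the bit-fixing and iteration-count overhead into the polylogarithmic slack hidden by $\tO$. Verifying these three points in detail, and in particular maintaining the conditional expectations incrementally as bits are fixed rather than recomputing them from scratch, is where the bulk of the technical care will lie.
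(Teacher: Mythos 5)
Your proposal is correct and follows essentially the same route as the paper's proof: the first bound is obtained by bit-fixing the seed of the pairwise-independent distribution within the framework of \Cref{thm:hittingset-sampling}, the second via \Cref{cor:hittingset-sampling-logn}, the loss from $2$ to $4$ (resp.\ $4$ to $8$) is attributed to rounding $q$ to a power of two via \Cref{eq:enlarge-q}, and each of the $O(\log(1/p)\log n)$ bit-fixes per iteration is charged $O(\log n)$ depth and $O(\log n\cdot\sum_i|S_i|)$ work using \Cref{lem:berger-etal}. The work-accounting caveats you flag are real but are exactly the (lightly treated) bookkeeping in the paper's argument, so no substantive difference remains.
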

\begin{proof}
The first algorithm is based on \Cref{thm:hittingset-sampling} and the second algorithm is based on \Cref{cor:hittingset-sampling-logn}. In those two algorithms, the potential function is upper bounded by $2$ and $4$. Here, we can only guarantee $4$ and $8$. This is because $q$, the sampling probability of one iteration, may not be a power of two. As discussed before (see \Cref{eq:enlarge-q}), we can handle this issue by paying an extra factor two in the approximation factor. In one iteration, we have $O(\log 1/p \cdot \log n)$ bit fixing. For each bit, we need to compute two conditional expectation which takes $O(\log n)$ depth and $O(\log n \cdot \sum_{i=1}^{N} |S_i|)$ work using  \Cref{lem:berger-etal}. Multiplying the number of iterations gives us the claimed bounds.
\end{proof}

\paragraph{\congest Model.}
First, let us describe how the hitting set problem is represented in the distributed model. Consider an $(N+n)$-node bipartite network $G = (A \sqcup B, E)$ where $A = [N]$ and $B = [n]$. A node $i \in A$ represents set $S_i$ and a node $j \in B$ represents element $j \in [n]$. There is an edge between $i \in A$ and $j \in B$ if and only of $j \in S_i$. We assume that $p$, $n$, and $\tau_{\mathcal{S}}^p$ (or an upper bound of it) is known to all nodes.

To simulate global decision making, we use $3$-separated network decomposition. We need to execute the following operation fast: For an arbitrary color $j$, let $C_1, \dots, C_d$ be the set of clusters with color $j$ in the given $3$-separated network decomposition. Suppose that each node $v$ in $C_1 \cup \dots \cup C_d$ knows a value $a_v$. For each cluster $C_i$, we want to broadcast the value $\sum_{v \in C_i} a_v$ to all nodes in $C_i$. We denote the round complexity of executing this operation for all clusters $C_1, \dots, C_d$ by $T^{\mathrm{agg}}_{\mathrm{ND}}$.

\begin{theorem}
\label{thm:hittingset-congest}
Given a $Q$-color $3$-separated network decomposition with aggregation time $T^{\mathrm{agg}}_{\mathrm{ND}}$ (as described above), there is a deterministic algorithm that solves the hitting set problem by finding a subset $H$ with $\Phi_{\mathcal{S}}^{p}(H) \leq 4$ in 
\begin{equation*}
     O(\lceil p\Delta \rceil \cdot Q \cdot \log 1/p \cdot \log n \cdot T_{\mathrm{ND}}^{\mathrm{agg}})
\end{equation*}
rounds of the \congest model. Moreover, there is a deterministic algorithm that finds a subset $H$ with $\Phi_{\mathcal{S}}^{p}(H) \leq 8$ and such that $H$ hits all $S_i$s with size greater than $10 \log N / p$. This algorithm runs in
\begin{equation*}
    O(\log N \cdot Q \cdot \log 1/p \cdot \log n \cdot T_{\mathrm{ND}}^{\mathrm{agg}})
\end{equation*}
rounds of the \congest model.
\end{theorem}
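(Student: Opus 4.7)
} The plan is to derandomize the iterative sampling algorithm of \Cref{thm:hittingset-sampling} (respectively the log-iteration variant of \Cref{cor:hittingset-sampling-logn}) by the method of conditional expectation, using the given $Q$-color $3$-separated network decomposition to perform the required global aggregations. In iteration $t$, the randomized subset $G^t$ is drawn from the Berger--Rompel--Shor pairwise-independent distribution described in \Cref{subsec:hittingset-impl}, so it is encoded by a random seed of length $B=O(\log(1/p)\cdot \log n)$; we first round the sampling probability $q$ up to a power of two, which by \Cref{eq:enlarge-q} costs only a factor $2$ in the final approximation and accounts for the $4$ and $8$ in the statement.

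Given the seed, I would fix its bits one at a time. For each prefix of already fixed bits, the pessimistic-estimator expectation $\E[f^t(G^t)\mid \text{prefix},\, R_{B'}=b]$ decomposes as a sum of per-set contributions $c_i = \E[(1-Y_i^t) w_i e^{-|S_i|(T-t)p/T}\mid\cdots]/\tau^p_{\mathcal{S}}$, indexed by $i\in A$, plus per-element contributions indexed by $j\in B$. By \Cref{lem:berger-etal}, each $c_i$ is a local function of the labels of the elements in $S_i$, so the set-node $i$ can compute $c_i$ locally in $O(1)$ rounds once it learns the current conditional first and second moments from its neighbours. Fixing the bit then amounts to computing the global sum $\sum_i c_i + \sum_j d_j$ under both $b=0$ and $b=1$, broadcasting these two numbers to all nodes, and letting every node simultaneously adopt the smaller branch.

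To carry out this global aggregation/broadcast in the \congest model, I would use the $Q$-color $3$-separated decomposition as a schedule. Sweep through the color classes $c=1,2,\dots,Q$ in order; in sweep $c$, use the promised $T^{\mathrm{agg}}_{\mathrm{ND}}$-round primitive to aggregate within every cluster of color $c$ in parallel, then let boundary nodes forward the partial sums along inter-cluster edges to clusters of colors $c+1,\dots,Q$. The $3$-separation guarantees that the forwarded sums from different clusters of the same color do not collide on any edge, so one full global sum-and-broadcast can be carried out in $O(Q\cdot T^{\mathrm{agg}}_{\mathrm{ND}})$ rounds. Multiplying by the number of bit fixings ($B$ per iteration) and iterations ($\lceil 8p\Delta\rceil$ or $O(\log N)$ from \Cref{cor:hittingset-sampling-logn}) gives the two claimed complexities, and the correctness of the resulting $H$ is inherited from \Cref{thm:hittingset-sampling}.

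The main obstacle I expect is the careful implementation of the global aggregation step in $O(Q\cdot T^{\mathrm{agg}}_{\mathrm{ND}})$ rounds: within-cluster aggregation alone only produces cluster-level partial sums, so one has to verify that sweeping through the $Q$ colors and forwarding partial sums along the boundary edges of the $3$-separated decomposition is well-defined, non-conflicting on the \congest bandwidth, and indeed delivers the global sum to every node. A secondary, lower-level obstacle is bookkeeping: ensuring that each set-node $i$ and element-node $j$ has exactly the local information required by \Cref{lem:berger-etal} at every bit-fixing step; this follows from the bipartite structure (sets and their elements are at distance $1$) but needs to be spelled out.
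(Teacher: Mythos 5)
There is a genuine gap, and it is exactly at the point you flag as your ``main obstacle.'' Your plan uses a \emph{single global} random seed and fixes each of its bits by computing the global conditional expectation $\sum_i c_i + \sum_j d_j$ and broadcasting it to every node. In the \congest model such a global sum-and-broadcast requires $\Omega(D)$ rounds, where $D$ is the diameter of the communication graph, and $D$ can be far larger than $Q\cdot T^{\mathrm{agg}}_{\mathrm{ND}}$ (e.g.\ $\Theta(n)$ versus $\poly(\log n)$). The color sweep you describe cannot repair this: aggregating within clusters of color $c$ and forwarding partial sums across boundary edges to later-color clusters only moves information $O\bigl(Q\cdot(\text{cluster diameter}+1)\bigr)$ hops in total, so the partial sums never accumulate into one global total, let alone get delivered back to every node. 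A network decomposition gives you cheap coordination \emph{inside} low-diameter clusters; it does not give you cheap global aggregation.

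The paper's proof sidesteps global aggregation entirely by giving \emph{each cluster its own independent copy} of the pairwise-independent seed. Since a product of independent pairwise-independent distributions (over disjoint sets of elements) is again pairwise independent, the analysis of \Cref{thm:hittingset-sampling}, which only uses pairwise independence, still applies to the combined assignment. Then, when the clusters of color $j$ fix the next bit of their own seeds in parallel, the conditional expectation splits into per-cluster pieces: the contribution of a set-node $v$ depends only on the one color-$j$ cluster whose elements of $S_v$ it touches (this is where $3$-separation is essential --- it guarantees the $1$-hop boundaries of same-color clusters are disjoint, so $v$ is adjacent to at most one such cluster), on already-fixed seeds of earlier colors, and on seeds of later colors that are independent and hence average out. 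Each cluster therefore only needs an intra-cluster-plus-boundary aggregation of the quantities $a_v^0,a_v^1$, which is exactly the $T^{\mathrm{agg}}_{\mathrm{ND}}$-round primitive assumed in the statement; different clusters may legitimately fix the same bit position to different values. Your local computations via \Cref{lem:berger-etal}, the rounding of $q$ to a power of two, and the counting of iterations and bits are all fine; the part that must change is the derandomization architecture, from one global seed with global conditional expectations to independent per-cluster seeds with purely local conditional expectations.
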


\begin{proof}
We want to derandomize iteration $t$. In contrast to the \pram model \Cref{thm:hittingset-pram}, in the \congest model, we do not have global communication and so we cannot decide which bit should be fixed in a straightforward way. However, we can simulate such global decision-making with network decomposition paying an extra factor $Q$ in the round complexity. For each cluster $C$, we independently draw a sample from the pairwise-independent distribution $\mathcal{P}$ with bias $q$. Recall that the input graph is a bipartite graph $G = (A \sqcup B, E)$. These samples assign a binary value to each node of $B$. Observe that the assigned values are also pairwise independent since the product of pairwise independent distributions is pairwise-independent. Now, to derandomize, we go through the colors one by one. Suppose we are working on color $j \in [Q]$ with $d$ clusters $C_1, \dots, C_d$. Moreover, suppose the first $b$ bits of random seeds of $C_1, \dots, C_d$ are fixed. We fix the $(b+1)$-th bit. Let us emphasize that each cluster has its own random seed and different clusters may fix the $(b+1)$-th bit differently. Consider cluster $C_i$ and a node $v \in A$ that is either in $C_i$ or is in the boundary of $C_i$ (i.e., $v$ is not in $C_i$ but has a neighbor in $C_i$). So this node represents a set $S_v$ in the corresponding hitting set problem. We assign two values $a^0_v$ and $a^1_v$ to $v$ where $a^x_v$ corresponds to the case when we fix the $(b+1)$-bit of the random seed of $C_i$ to $x$. Note that each neighbor of $v$ represents an element of $S_v$. If $v$ has a neighbor in the clusters with color $\{1,\dots,j-1\}$ that is already decided to be in $H$ (our final hitting set), then we set $a^x_v$ to zero. So suppose this is not the case and let $d$ be the number of neighbors of $v$ that are not in $C_i$ and are in a cluster with color in $\{j+1, \dots, Q\}$. Then, we set $a_v^x$ to $$\frac{(1 + \binom{d}{2}q^2 - dq) \cdot F_v^x \cdot w_v e^{-|S_i|(T - t) p /T}}{\tau_{\mathcal{S}}^{p}}$$ where $F_v^b$ is 
\begin{equation*}
    F_v^x = \E[1 + \sum_{u \in C_i \cap B: u\in S_v} \sum_{u \in C_i \cap B: u\in S_v \wedge u < u'} X_u X_{u'} - \sum_{u \in C_i \cap B: u\in S_v} X_u \mid \text{first $b$ bits and $(b+1)$-th bit is $x$}]
\end{equation*}
where $X_u$ represents the indicator random variable of element $u$. Note that the given network decomposition is $3$-separated and so all the boundaries of $C_1, \dots, C_d$ are disjoint. So $v$ can compute $F_v^b$ in $\tO(|S_v|)$ according to \Cref{lem:berger-etal}. Also, note that that this gives us the contribution of $S_v$ to
\begin{equation*}
    \E[f^{t}(\cdot) \mid \text{first $b$ bits and $(b+1)$-th bit is $x$}]
\end{equation*}
Next, for each element $u \in A_i$, set $a_u^x$ to 
\begin{equation*}
    \frac{\E[X_u \mid \text{first $b$ bits and $(b+1)$-th bit is $x$}]}{4np}.
\end{equation*}
In the end, for each cluster $C_i$, we compute two values $e_i^x$ for $x \in \{0,1\}$ which is $$\sum_{v \in C_i \cup (\partial(C_i) \cap B)} a_v^x$$ where $\partial(C_i)$ denotes the boundary of $C_i$. We broadcast $e^b_i$ to each nodes in $C_i$. This can be done in $O(T_{\mathrm{ND}}^{\mathrm{agg}})$ rounds for all $C_i$s simultaneously. Next, nodes of $C_i$ set the $(b+1)$-bit of the random seed to zero if $e^0_i \leq e^0_i$ and set it to one otherwise. This completes the bit fixing.

There are $T$ sampling iterations (if we apply \Cref{thm:hittingset-sampling}, $T = \lceil p\Delta\rceil$, and if we apply \Cref{cor:hittingset-sampling-logn}, $T = O(\log N)$), $Q$ colors, and $O(\log 1/p \cdot \log n)$ bits to fix for each color. Multiplying these numbers gives us the number of bit fixing. Taking into account that fixing each bit takes $O(T_{\mathrm{ND}}^{\mathrm{agg}})$ rounds of the \congest model concludes the proof.
\end{proof}

\begin{corollary}
\label{cor:hittingset-congest-logn}
There is a deterministic algorithm that solves the hitting set problem by finding a subset $H\subseteq [n]$ with $\Phi^p_{\mathcal{S}}(H) = O(1)$ in $\poly(\log n)$ rounds of the \congest model and with total computations $\tO(m)$.
\end{corollary}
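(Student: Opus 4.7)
The plan is to instantiate \Cref{thm:hittingset-congest} with a suitable deterministic network decomposition. Specifically, I would compute a $Q$-color $3$-separated strong-diameter network decomposition of the bipartite graph $G = (A \sqcup B, E)$ representing the hitting-set instance, with $Q = O(\log n)$ colors and cluster strong-diameter $D = \poly(\log n)$. Such a decomposition can be obtained (with only constant- and polylogarithmic-factor overheads relative to the standard $2$-separated version) from the construction of Elkin et al.~\cite{elkin2022deterministic}, or from \Cref{thm:NetDecomp} of this paper, by running at the required separation parameter or on a bounded power of the input graph. The construction runs in $\poly(\log n)$ rounds and performs $\tO(m)$ total computation. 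Since each cluster has strong-diameter $D = \poly(\log n)$, summing a value over a cluster and broadcasting the result can be done in $O(D)$ rounds of \congest along the cluster's BFS tree, so $T_{\mathrm{ND}}^{\mathrm{agg}} = \poly(\log n)$.

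Plugging these parameters into the second bound of \Cref{thm:hittingset-congest} gives round complexity
\[
O\bigl(\log N \cdot \log n \cdot \log(1/p) \cdot \log n \cdot \poly(\log n)\bigr) = \poly(\log n),
\]
as required; the resulting $H$ satisfies $\Phi^{p}_{\mathcal{S}}(H) \leq 8 = O(1)$ by that theorem. (We may assume $p \geq 1/\poly(n)$, since otherwise $H = \emptyset$ already achieves $\Phi^{p}_{\mathcal{S}}(H) = O(1)$ trivially.)

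For the $\tO(m)$ total-work bound, I would analyze each bit-fixing step of the algorithm underlying \Cref{thm:hittingset-congest}. In one such step, every set-representing node $v \in A$ locally computes $F_v^x$ for $x \in \{0,1\}$ using \Cref{lem:berger-etal}, at cost $\tO(|S_v|)$, so one step incurs $\tO(\sum_{v \in A} |S_v|) = \tO(m)$ work in aggregate; each edge additionally participates in $O(1)$ messages per aggregation round, contributing no more work. With $\poly(\log n)$ bit-fixing steps in total (one per bit of the random seed, per color, per sampling iteration), the overall derandomization work is $\tO(m) \cdot \poly(\log n) = \tO(m)$ under the paper's convention $\tO(f) = O(f \cdot \polylog f)$, to which we add the $\tO(m)$ work of constructing the network decomposition. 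The main---if mild---obstacle is thus ensuring that every ingredient (network decomposition, aggregation primitives, and the per-node conditional expectation computation) is work-efficient; each is supplied by the cited tools, so assembling them into the claimed deterministic algorithm is direct.
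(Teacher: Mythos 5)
Your proposal is correct and matches the paper's argument: the paper likewise obtains a $3$-separated $O(\log n)$-color network decomposition with $T_{\mathrm{ND}}^{\mathrm{agg}} = \polylog(n)$ (citing Theorem~2.12 of Rozho\v{n}--Ghaffari~\cite{rozhonghaffari20} rather than \cite{elkin2022deterministic} or \Cref{thm:NetDecomp}, but this is immaterial) and plugs it into \Cref{thm:hittingset-congest}. Your additional accounting of the $\tO(m)$ total work via \Cref{lem:berger-etal} is consistent with what the paper leaves implicit.
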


\begin{proof}
There is a work-efficient deterministic algorithm for finding a $3$-separated $O(\log n)$-color network decomposition in $\polylog(n)$ rounds and with $T_{\mathrm{ND}}^{\mathrm{agg}} = \polylog(n)$(see Theorem 2.12 of Rozho\v{n} and Ghaffari~\cite{rozhonghaffari20}). Plugging this bound in \Cref{thm:hittingset-congest} concludes the proof.
\end{proof}

\section{Applications of Hitting Set}
In this section, we discuss two applications of the hitting set problem. One is the distributed construction of multiplicative spanners and the other is the parallel construction of distance oracles. Let us quickly define these notions. A subgraph $H = (V,E') \subseteq G = (V,E)$ is an $\alpha$-spanner of $G$ if for all pairs of nodes $u, v \in V$, we have:
\begin{equation*}
    d_G(u,v) \leq d_H(u,v) \leq \alpha \cdot d_G(u,v). 
\end{equation*}
A distance oracle is a data structure that accepts a pair of nodes $(u,v)$ as a query and returns their distance in $G$. In \Cref{subsec:distance-oracles}, we discuss \textit{source-restricted approximate distance oracle} in which $s$ nodes of $G$ are marked as source and it is guaranteed that $u$ is always a source. The term ``approximate'' allows the oracle to return an approximation of $d_G(u,v)$ rather than its exact value.
\label{sec:hittingset-app}
\subsection{Spanners}
\label{subsec:spanners}

\begin{theorem}
There is deterministic algorithm in $\poly(\log n)$ rounds of the \congest model and with total computations $\tO(m)$ that finds a $(2k-1)$-spanner with $O(nk + n^{1 + 1/k} \log k)$ and $O(nk + n^{1 + 1/k} k)$ edges for unweighted and weighted graphs, respectively.
\end{theorem}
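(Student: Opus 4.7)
The plan is to derandomize the classical Baswana--Sen spanner construction~\cite{baswana2007simple} by replacing its single randomized step with a call to the hitting set algorithm of \Cref{cor:hittingset-congest-logn}. Recall that Baswana--Sen runs in $k$ phases, maintaining a family of clusters $\mathcal{R}_{i-1}$ (with $\mathcal{R}_0$ being the singleton clustering) each equipped with a BFS tree of radius at most $i-1$. In phase $i$ each cluster of $\mathcal{R}_{i-1}$ is independently sampled with probability $p = n^{-1/k}$ to form $\mathcal{R}_i$. Every still-unclustered vertex $v$ then either (a) has no neighbor in any sampled cluster, in which case it adds to the spanner one lightest edge per distinct neighboring cluster of $\mathcal{R}_{i-1}$ and leaves the process, or (b) joins the cluster of its closest sampled neighbor, paying one extra spanner edge for every cluster strictly closer to $v$ than the joined one. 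The classical analysis bounds the expected per-phase contribution by $O(n^{1+1/k})$ in the unweighted case and $O(k\,n^{1+1/k})$ in the weighted case.

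The critical observation is that the only source of randomness is the cluster sub-sampling, and this sub-sampling is precisely a hitting set instance. At phase $i$ take the universe to be $\mathcal{R}_{i-1}$, and for every vertex $v$ let $S_v \subseteq \mathcal{R}_{i-1}$ be the set of clusters containing at least one neighbor of $v$, ordered by the weight of the lightest edge from $v$ to each cluster. Under random sampling with parameter $p$, the edges contributed by $v$ in case (a) are $|S_v|$ and occur with probability $(1-p)^{|S_v|}$, while the edges contributed in case (b) equal exactly the position of the first sampled element of the ordered set $S_v$. Both quantities fit the hitting-ordered-set formulation, which by \Cref{lem:hittingset-ordered-reduction} reduces to a standard hitting-set instance at a logarithmic multiplicative cost in the number of sets. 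Feeding this reduced instance into \Cref{cor:hittingset-congest-logn} produces, deterministically in $\poly(\log n)$ rounds and with $\tilde{O}(m)$ total work, a subset of sampled clusters of size $O(p\,|\mathcal{R}_{i-1}|)$ and total cost within a constant factor of the expected cost under true random sampling.

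Summing the per-phase guarantee over the $k$ phases yields the stated spanner size. In the unweighted setting the phase cost is $\sum_v |S_v|(1-p)^{|S_v|} = O(n/p) = O(n^{1+1/k})$, and absorbing the logarithmic blow-up from the ordered-set reduction gives $O(n^{1+1/k}\log k)$ across all phases; adding the $O(nk)$ edges coming from the intra-cluster BFS trees built across phases yields the claimed $O(nk + n^{1+1/k}\log k)$ bound. In the weighted setting case (b) can contribute up to $k$ extra edges per vertex per phase, inflating the phase cost to $O(k\,n^{1+1/k})$ and giving $O(nk + n^{1+1/k}\,k)$ overall. Round complexity is $\poly(\log n)$ per phase for the hitting set derandomization plus $O(k)$ for intra-cluster aggregations along BFS trees of depth $O(k)$; total work is $\tilde{O}(m)$ per phase by \Cref{cor:hittingset-congest-logn}, and summing over phases preserves the $\tilde{O}(m)$ bound.

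The main technical obstacle is implementing the hitting set derandomization inside the \congest model when the ``elements'' to be sampled are entire clusters rather than individual vertices. This is handled by designating the cluster center of each $C \in \mathcal{R}_{i-1}$ as its representative and simulating the bipartite interaction graph of \Cref{thm:hittingset-congest} on top of the cluster BFS trees: a boundary vertex $v$ aggregates information about which clusters contain its neighbors up through the corresponding cluster trees, and the $O(k)$ depth of those trees yields only a $\poly(\log n)$ overhead when composed with the $3$-separated network decomposition of~\cite{rozhonghaffari20} that powers the hitting set derandomization. This simulation preserves both the $\poly(\log n)$ round complexity and the $\tilde{O}(m)$ computation bound, completing the construction.
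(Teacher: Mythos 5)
Your proposal matches the paper's proof essentially step for step: both derandomize Baswana--Sen phase by phase by casting the cluster sub-sampling as a hitting-set instance whose universe is the current set of clusters, both invoke \Cref{lem:hittingset-ordered-reduction} together with \Cref{cor:hittingset-congest-logn}, and both handle the fact that the sampled elements are clusters rather than vertices by simulating the derandomization on the contracted graph/cluster trees at an $O(k)$ overhead. The only cosmetic difference is that the paper uses the plain weighted hitting set with $w_v=|S_v|$ for the unweighted case (the $\log k$ there comes from the threshold on the number of neighboring clusters in the size analysis, not from the ordered-set reduction) and reserves the ordered formulation for weighted graphs, whereas you route both cases through the ordered version; this does not affect the correctness of the argument.
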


\begin{proof}
We derandomize Baswana-Sen algorithm~\cite{baswana2007simple}. Let us quickly recall this algorithm. It consists of $k$ steps. The input of step $i$ is a clustering denoted by $\mathcal{C}_i$. Each cluster has a center node known to all of its members. The input of the first step is the trivial clustering: there are $n$ clusters each containing a single node. During one step, we sample some of the clusters, and then based on that sampling, some nodes stay in their clusters, some get unclustered, and some join other clusters. After this, the current step $i$ terminates, and the new clustering $\mathcal{C}_{i+1}$ is passed to the next step. Here is what we do in step $i$ for $i \leq k-1$ (we discuss the last step, $i$ equals $k$, later):

\begin{enumerate}
    \item Each cluster of $\mathcal{C}_i$ is sampled with probability $p = n^{-1/k}$.
    \item A node that is in a sampled cluster, stays put in its own cluster.
    \item For a node $v$ in an unsampled cluster, let $C_1, \dots, C_d$ be the set of clusters containing at least one neighbor of $v$. Let $e_i = \{u_i \in C_i, v\}$ be an edge with the minimum weight between $v$ and one of the nodes in $C_i$. If there are several edges with the minimum weight, $v$ selects one of them arbitrarily. Let $w_i$ be the weight of $e_i$. Without loss of generality, suppose $w_1 \leq \dots \leq w_d$.  If all of $C_1, \dots, C_d$ are unsampled, $v$ adds all edges $e_1, \dots, e_d$ to the output spanner and gets unclustered. Otherwise, let $j$ be the minimum index for which $C_j$ is sampled. Then, $v$ adds $e_1, \dots, e_{j}$ to the output spanner and joins the sampled cluster $C_j$. Note that all such $v$ runs this step simultaneously.
\end{enumerate}
In the last step, we do the exact same thing except that we sample no cluster (each cluster is sampled with probability zero rather than $n^{-1/k}$).

The output of Baswana-Sen is always a $(2k-1)$-spanner and only the size of the output depends on the randomness. From the algorithm description, you can see that the only randomized part of the Baswana-Sen algorithm is the sampling of clusters. Our goal is to find the set of sampled clusters of each step deterministically. If we have the following properties on the set of sampled clusters, then we can guarantee the claimed bounds on the size of the output spanner (see~\cite{bezdrighin2022deterministic}, Lemma 3.3):
\begin{enumerate}[(a)]
    \item For each $i$, the number of clusters in $\mathcal{C}_i$ is at most $n^{1-(i-1)/k}$.
    \item The number of edges added to the output spanner is bounded as follows: For the unweighted case, the total number of edges added by nodes with at least $\gamma_1 n^{1/k} \log k$ neighboring clusters for a large enough constant $\gamma_1 > 0$ is at most $O(n^{1 + 1/k}/k)$. For the weighted case, all nodes add at most $O(n^{1 + 1/k})$ edges to the output.
    \item A node that is clustered in $\mathcal{C}_i$, remains clustered if it has at least $\gamma_2 n^{1/k} \log n$ neighbouring clusters for a large enough constant $\gamma_2 > 0$.
\end{enumerate}
We can frame these properties as a hitting set problem. To avoid cluttering the notation, we refer to the universe size in the corresponding hitting set problem of step $i$ by $n^h_i$ and its number of sets by $N^h_i$. In step $i$, we have the following hitting set problem: There is an element in the universe for each cluster in $\mathcal{C}_i$. So $n^h_i = |\mathcal{C}_i| \leq n$. For each clustered node $v$ in $\mathcal{C}_i$, there is a set $S_v$ containing all of its neighboring clusters. So $N^h_i \leq n$. The parameter $p$ for the hitting set problem is set to the sampling probability of Baswana-Sen divided by a large enough constant $\gamma_3 > 0$, i.e., $p = n^{-1/k}/\gamma_3$ (note that the last step is already deterministic and no derandomization is needed there). For unweighted graphs, we set the weight of $S_v$ to its size $w_v = |S_v|$. For weighted graphs, we consider the hitting ordered set problem as discussed in \Cref{lem:hittingset-ordered-reduction}. For each clustered node $v$, we assign the order $\pi_v(\cdot)$ on $S_v$. Suppose that the neighboring clusters of $v$ are $C_1, \dots, C_d$ and the minimum weight of an edge between $C_i$ and $v$ is $w_i$. Then $C_i$ comes before $C_j$ in $\pi_v(\cdot)$ if $w_i < w_j$ or $w_i = w_j$ and $i < j$.

With straightforward calculations, we can see that all the three required properties are satisfied if we solve the presented hitting set problem with \Cref{cor:hittingset-congest-logn} (for the hitting ordered set problem, we first use the reduction \Cref{lem:hittingset-ordered-reduction}).

We have $k \leq \log n$ steps in total. As described above, each step can be derandomized by solving a hitting set problem. So the total round complexity is $\poly(\log n)$ by applying \Cref{cor:hittingset-congest-logn}. One issue here is that each element in the defined hitting set problem corresponds to a cluster. This issue can be handled by contracting each cluster to a node and using the fact that the network decomposition of~\cite{rozhonghaffari20} also works on contracted graphs. This slows down the round complexity only by a factor $k = O(\log n)$ as each cluster has diameter $k$.
\end{proof}

\begin{theorem}
For any $\eps > 0$, there is deterministic distributed algorithm in $\poly(\log n) / \eps$ rounds of the \congest model and with total computations $\tO(m)$ that finds a spanner with size $n(1 + \varepsilon)$ and with stretch $O(\log n \cdot 2^{\log^* n} / \eps)$ and $O(\log n \cdot 4^{\log^* n} / \eps)$ stretch for unweighted and weighted graphs, respectively.
\end{theorem}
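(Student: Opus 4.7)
The plan is to derandomize the randomized ultra-sparse spanner construction of Bezdrighin et al.~\cite{bezdrighin2022deterministic}, replacing their $n^{O(\log\log n)}$-time conditional-expectation derandomization with our near-linear-work hitting set algorithm from \Cref{cor:hittingset-congest-logn}. As in the preceding corollary for general-stretch spanners, this substitution preserves the $\poly(\log n)$ round complexity while bringing the total work down to $\tO(m)$.

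The randomized ultra-sparse construction proceeds in a cascade of $O(\log^* n)$ Baswana--Sen-like phases, each operating on a current clustering $\mathcal{C}_i$. In each phase, a subset of clusters is sampled with an appropriately tuned probability $p_i$; unsampled clusters are ``retired'' while their member vertices pay a bounded number of edges to the spanner, and sampled clusters grow by absorbing nearby still-clustered vertices. After $O(\log^* n)$ phases the number of surviving clusters drops below $\eps n / \log n$, and a final $(O(\log n)/\eps)$-stretch spanning forest on the cluster centers completes the spanner with $(1+\eps)n$ edges. The stretch blowup of $2^{\log^*n}$ (resp.\ $4^{\log^*n}$) is the standard recursion of Alstrup et al.\ propagated through the phases; the sampling probabilities $p_i$ in successive phases are chosen so that the number of edges added per phase telescopes to $\eps n$ in total, which is what produces the $1/\eps$ factor in both stretch and round complexity.

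To derandomize a single phase, I mirror the hitting-set reduction used in the proof of the previous corollary. The cluster-sampling step of phase $i$ is exactly an instance of the problem of \Cref{def:hittingset}: elements are the clusters of $\mathcal{C}_i$, for each clustered vertex $v$ the set $S_v$ consists of the clusters adjacent to $v$, and the sampling parameter $p$ equals $p_i$ divided by an appropriate constant. For the unweighted case we use weights $w_v = |S_v|$, and for the weighted case we use the ordered-set variant together with the reduction of \Cref{lem:hittingset-ordered-reduction}, exactly as before. Invoking \Cref{cor:hittingset-congest-logn} on the contracted cluster graph derandomizes one phase in $\poly(\log n)$ rounds with $\tO(m)$ total work; the standard observation that the network decomposition of \cite{rozhonghaffari20} also works on contracted graphs lets us run it on $\mathcal{C}_i$ at the cost of an extra factor equal to the diameter of the clusters.

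The main obstacle will be a careful bookkeeping of the $\eps$-dependence through all $O(\log^*n)$ phases: each phase must be run with a sampling probability and edge-budget tuned so that the phase-$i$ contribution to the spanner size is $O(\eps n / \log^* n)$, and the resulting cluster-growth parameters must still fit the guarantees needed by \Cref{cor:hittingset-congest-logn}. Once this is done, the total edge count is $(1+\eps) n$, the stretch is $O(\log n \cdot 2^{\log^* n}/\eps)$ in the unweighted case and $O(\log n \cdot 4^{\log^* n}/\eps)$ in the weighted case, and summing the per-phase costs of \Cref{cor:hittingset-congest-logn} gives $\poly(\log n)/\eps$ rounds and $\tO(m)$ total computations, as claimed.
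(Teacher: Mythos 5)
Your core idea---derandomizing each of the $O(\log^* n)$ Baswana--Sen-like phases by casting the cluster-sampling step as the hitting set problem of \Cref{def:hittingset} (ordered variant for weighted graphs) on the contracted cluster graph and invoking \Cref{cor:hittingset-congest-logn}---is exactly what the paper does; the paper phrases it as derandomizing Pettie's cascade~\cite{pettie2010distributed} (with the weighted modification from~\cite{bezdrighin2022deterministic}) and notes that the hitting set instances are identical to those of the Baswana--Sen corollary.

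Where your plan diverges, and where it has a gap, is in how the $(1+\eps)n$ size and the $1/\eps$ stretch factor are obtained. The paper does \emph{not} thread $\eps$ through the phases at all: it first produces an $O(n)$-edge spanner with the $\eps$-free stretch $O(\log n\cdot 2^{\log^* n})$ (resp.\ $O(\log n\cdot 4^{\log^* n})$), and then applies the already-deterministic size reduction of~\cite{bezdrighin2022deterministic} (their Theorem~1.2) as a black box; that reduction is the sole source of both the $1/\eps$ in the stretch and the $1/\eps$ in the round complexity, and it requires no new derandomization. Your alternative---retuning the sampling probability $p_i$ of every phase so that the per-phase edge contributions telescope to $\eps n$---is precisely the step you flag as ``the main obstacle,'' and it is not carried out. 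It is also not clear it can be made to work as described: the $2^{\log^* n}$ (resp.\ $4^{\log^* n}$) stretch factor comes from a specific doubling recursion whose parameters are tied to the per-phase sampling probabilities, so shrinking those probabilities to force an $\eps n/\log^* n$ edge budget per phase changes the recursion and need not land on the claimed stretch; likewise your proposal does not explain where the $1/\eps$ in the round complexity would come from. The fix is simply to adopt the modular route: prove the $O(n)$-size statement by your per-phase hitting-set derandomization, and then cite the deterministic reduction for the final sparsification.
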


\begin{proof}
We derandomize the algorithm of Pettie~\cite{pettie2010distributed} to get a spanner with $O(n)$ edges and with stretch $O(\log n \cdot 2^{\log^* n})$ and $O(\log n \cdot 4^{\log^*} n)$ for unweighted and weighted graphs, respectively. Pettie's algorithm is combining $O(\log^* n)$ application of Baswana-Sen back to back and the hitting set problem we encounter in Pettie's algorithm, is exactly the same as the Baswna-Sen. So we do not repeat this here. We refer interested readers to Theorem 1.5 of ~\cite{bezdrighin2022deterministic} where the full algorithm and a slower derandomized version of it is discussed. Let us note that the original algorithm of Pettie only works for unweighted graphs, but with a simple modification which is proposed in~\cite{bezdrighin2022deterministic}, it can work on weighted graphs as well. To reduce the number of edges from $O(n)$ to $n(1 + \eps)$, we apply the deterministic reduction of ~\cite{bezdrighin2022deterministic}, Theorem 1.2. 
\end{proof}

\subsection{Approximate Distance Oracles}
\label{subsec:distance-oracles}
This section is devoted to the parallel implementation of the approximate distance oracle by Roditty, Thorup, and Zwick~\cite{roditty2005deterministic}. There, given a weighted graph $G = (V,E)$, a stretch parameter $k$, and a set of $s$ sources $S \subseteq V$, they deterministically construct a data structure of size $O(kns^{1/k})$ and in $\tO(ms^{1/k})$ time. For a query $(u,v)$, the data structure can compute a value $q$ such that
\begin{equation*}
    d(u,v) \leq q \leq (2k-1) d(u,v)
\end{equation*}
in $O(k)$ time. See \Cref{alg:distance-oracle} for their algorithm for constructing the data structure and \Cref{alg:distance-oracle-query} for how they evaluate a query.

\begin{algorithm}
	\caption{Approximate Distance Oracle~\cite{roditty2005deterministic}}
	\label{alg:distance-oracle}
	\begin{algorithmic}[1] 
		\Procedure{DistOracle}{G, k} 
		\State $A_0 = S, A_k = \emptyset$.
		\State $\ell = 10s^{1/k} \log n$.
		\For{$i=1,\dots,k-1$}
		    \State For each $v \in V$, find $p_i(v) \in A_{i-1}$ such that $d(p_i(v), v) = d(A_{i-1}, v).$
			\State For every $v \in V$, compute $N_{i-1}(v)$ which is  the set of $\ell$ closest nodes to $v$ in $A_{i-1}$.
			\State Find a set $A_i \subseteq A_{i-1}$ such that:
			    \Statex $\quad \quad \quad \quad$ (a) $|A_i| \leq s^{1-i/k}$.
			    \Statex $\quad \quad \quad \quad$ (b) $A_i$ hits $N_{i-1}(v)$ for all $v \in V.$
			    \Statex $\quad \quad \quad \quad$ (c) $\sum_{v \in V} |\{w \in A_{i-1} - A_{i} \mid d(w,v) < d(A_i, v)\}| = O(ns^{1/k}).$
		\EndFor
		\State For each $v \in V$, compute $p_{k-1}(v)$.
		\State For every $v \in V$, set $B(v) = A_{k-1}$. 
		\For{$i=0,\dots,k-2$}
		    \State For every $v \in V$, set $B(v) = B(v) \cup \{w \in N_i(v) \mid d(w,v) < d(A_{i+1}, v)\}$.
		\EndFor
		\State For each $v \in V$, create a hash table $H(v)$ with an entry $(v,d(v,w))$ for each $w \in B(v)$.
		\EndProcedure
	\end{algorithmic}
\end{algorithm}

\begin{algorithm}
	\caption{Evaluating a query~\cite{roditty2005deterministic}}
	\label{alg:distance-oracle-query}
	\begin{algorithmic}[1] 
		\Procedure{Query}{$u \in S$,$v$} 
		\State $w = u$, $i = 0$.
		\While{$w \not \in B(v)$}
		    \State $i = i + 1$.
		    \State $(u,v) \leftarrow (v,u)$.
		    \State $w \leftarrow p_i(u)$
		\EndWhile
        \Return $d(w,u) + d(w,v)$
		\EndProcedure
	\end{algorithmic}
\end{algorithm}

\begin{theorem}
\label{thm:distance-oracle}
Given an undirected weighted graph $G = (V,E)$, a set of $s$ sources $S \subseteq V$, stretch parameter $k$, and error $\eps > 0$, there is a deterministic algorithm that solves the source-restricted distance oracle problem with $\tO_{\eps}(ms^{1/k})$ work and $\tO_{\eps}(\poly(\log n))$ depth in the \pram model. The data structure has size $O(nks^{1/k})$ and for each query $(u,v)$, the oracle can return a value $q$ in $O(k)$ time that satisfies
\begin{equation*}
    d(u,v) \leq q \leq (2k-1)(1 + \eps)d(u,v).
\end{equation*}
\end{theorem}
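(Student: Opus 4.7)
The plan is to parallelize the Roditty--Thorup--Zwick construction (\Cref{alg:distance-oracle}) by combining two ingredients: (i) replacing every exact shortest-path computation in the construction with a $(1+\eps)$-approximate variant that runs in $\tO_\eps(\poly(\log n))$ depth and $\tO_\eps(m)$ work, and (ii) derandomizing the sampling of $A_i$ from $A_{i-1}$ using the parallel hitting-set algorithm of \Cref{thm:hittingset-pram}. The $(1+\eps)$ factor in the stretch will come entirely from ingredient (i); the query procedure \Cref{alg:distance-oracle-query} itself will be left untouched, so the $O(k)$ query time and the $O(nks^{1/k})$ data-structure size will follow exactly as in RTZ.

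First, I will instantiate the approximate shortest-path primitives. Every distance $d(\cdot,\cdot)$ in the construction will be replaced by a value $\td(\cdot,\cdot)$ with $d(\cdot,\cdot)\le\td(\cdot,\cdot)\le(1+\eps)d(\cdot,\cdot)$ computable in $\tO_\eps(\poly(\log n))$ depth and $\tO_\eps(m)$ work via known parallel approximate multi-source shortest-path algorithms. This will allow us, in each round $i$, to compute $\td(A_{i-1},v)$ and a candidate $p_i(v)$ for every $v$, and to compute the set $N_{i-1}(v)$ of the $\ell=10s^{1/k}\log n$ approximate-closest centers in $A_{i-1}$ to $v$ (together with their approximate distances) using a truncated multi-source exploration that drops a token at $v$ after $\ell$ distinct centers reach it. The analogous computation, once the $A_i$'s are fixed, will then yield the bunches $B(v)$ and the hash tables $H(v)$.

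Second, I will cast the selection of $A_i\subseteq A_{i-1}$ as an instance of the hitting-set problem of \Cref{subsec:hittingset-define}. The universe will be $A_{i-1}$; for each $v$, the set $S_v$ will list the elements of $N_{i-1}(v)$ in nondecreasing order of $\td(\cdot,v)$, with ties broken by identifier. Property (b) of \Cref{alg:distance-oracle} will then be exactly the requirement that $A_i$ hit every $S_v$, which we enforce using the "hit every large set" variant of \Cref{thm:hittingset-pram}. Property (c) of \Cref{alg:distance-oracle} is exactly the cost of $A_i$ in the hitting-ordered-set problem for this family, so we will use the reduction of \Cref{lem:hittingset-ordered-reduction} to fit it into our framework. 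Taking $p=s^{-1/k}$ and invoking \Cref{thm:hittingset-pram} will give $|A_i|\le s^{1-i/k}$ and total cost $O(ns^{1/k})$ in $\tO(\sum_v|S_v|)=\tO(ns^{1/k})$ work and $\tO(\poly(\log n))$ depth per round; summing over $k$ rounds keeps this within the claimed budget, so the shortest-path calls dominate the work at $\tO_\eps(ms^{1/k})$.

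The main obstacle I foresee is verifying that the whole RTZ correctness argument survives the substitution of $d$ by $\td$: one must check that working with approximate distances in $N_{i-1}(v)$, in $B(v)$, and in the ordering used for the hitting-set reduction distorts the final stretch only by a $(1+\eps)$ factor and distorts the bound in condition~(c) only by a constant (so that the resulting data structure still has size $O(nks^{1/k})$). The key point, which I plan to spell out, is that $\td$ induces a consistent total preorder on $A_{i-1}\cup\{v\}$ for each $v$ with ties broken by identifier, so the ordered-hitting cost bound of \Cref{thm:hittingset-pram} applies to $\td$-orderings verbatim; then the RTZ triangle-inequality chain in the query bound picks up exactly one extra $(1+\eps)$ multiplicative factor per approximate edge, of which there are a bounded number. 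Everything else---size, query time, depth, and work---will follow by transferring the original accounting.
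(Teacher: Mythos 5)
Your proposal matches the paper's own proof in both of its essential ingredients: replacing the exact SSSP computations with the $(1+\eps)$-approximate parallel shortest-path algorithm (which is exactly where the paper's $(1+\eps)$ loss in stretch comes from), and derandomizing the choice of each $A_i$ by casting it as a hitting-ordered-set instance over the universe $A_{i-1}$ with sets $N_{i-1}(v)$ ordered by distance, solved via \Cref{lem:hittingset-ordered-reduction} and \Cref{thm:hittingset-pram}. The only component you gloss over is the deterministic parallel construction of the hash tables $H(v)$ with $O(1)$ lookup, which the paper handles by observing that the Alon--Naor derandomized perfect hashing parallelizes to $\poly(\log n)$ depth; aside from that minor omission the two arguments coincide.
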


\begin{proof}
It is enough to provide a parallel algorithm with $\tO_{\eps}(\poly(\log n))$ depth for computing $A_i$, $N_i(\cdot)$, and the hash table. This gives us all the ingredients we need to run the algorithm.

Note that finding a suitable $A_i$ in \Cref{alg:distance-oracle} is just an instance of hitting ordered set problem and we can apply \Cref{lem:hittingset-ordered-reduction} and \Cref{cor:hittingset-sampling-logn}. The universe is $A_{i-1}$ and for each $v \in V$, we want to hit the set $N_{i-1}(v)$. We also need to determine $\pi_{i,v}(\cdot)$. An element $w$ comes before $w'$ in this order if $d(w,v) < d(w',v)$. If the distances are equal, we break the tie based on the identifier of $w$ and $w'$. If we set the sampling probability to $p = s^{-1/k}/\gamma$ for a large enough constant $\gamma > 0$ (indeed $\gamma = 24$ is enough), then we can compute a suitable $A_i$ satisfying all the three required properties with $\tO(m)$ work and $O_{\eps}(\poly(\log n))$ depth in the \pram model using \Cref{thm:hittingset-pram} and the reduction \Cref{lem:hittingset-ordered-reduction}.

In~\cite{roditty2005deterministic}, they compute $N_i(\cdot)$ by running $\ell$ instances of Single Source Shortest Path problem (SSSP). There is no known parallel algorithm for SSSP with poly-logarithmic depth. However, recently, Rozhoň et al.~\cite{rozhovn2022undirected} proposed a work-efficient algorithm for computing $(1+\eps)$-approximation of SSSP with poly-logarithmic depth. We can replace the exact computation with an approximation, losing $(1+\varepsilon)$ in the final stretch guarantee.

For computing the hash tables, we can apply the construction of Alon and Naor~\cite{alon1996derandomization}. There, they provide a deterministic hash table of $t$ elements into $O(t)$ space with read access of $O(1)$ time. While they did not discuss the parallel implementation of their construction, their algorithm can be implemented in $\poly(\log n)$ depth in a straightforward way. Their approach is derandomizing a randomized hash function using the method of conditional expectation on epsilon-biased spaces. They define a potential function (see section 3.1. of~\cite{alon1996derandomization}) which is a simple aggregation and can be parallelized. We do not discuss the full details as the implementation is straightforward.  
\end{proof}

\newpage




 \section*{Acknowledgments}
M.G., C.G., S.I., and V.R. were supported in part by the European Research Council (ERC) under the European Unions Horizon 2020 research and innovation program (grant agreement No.~853109) and the Swiss National Science Foundation (project grant 200021\_184735). B.H. was supported in part by NSF grants CCF-1814603, CCF-1910588, NSF CAREER award CCF-1750808, a Sloan Research Fellowship, funding from the European Research Council (ERC) under the European Union's Horizon 2020 research and innovation program (grant agreement 949272), and the Swiss National Science Foundation (project grant 200021\_184735).


\bibliographystyle{alpha}
\bibliography{refs}

\newcommand{\etalchar}[1]{$^{#1}$}
\begin{thebibliography}{RGH{\etalchar{+}}22}

\bibitem[ABCP96]{awerbuch96}
B.~Awerbuch, B.~Berger, L.~Cowen, and D.~Peleg.
\newblock Fast network decompositions and covers.
\newblock {\em J.\ of Parallel and Distributed Computing}, 39(2):105--114,
  1996.

\bibitem[AGLP89]{awerbuch89}
B.~Awerbuch, A.~V. Goldberg, M.~Luby, and S.~A. Plotkin.
\newblock Network decomposition and locality in distributed computation.
\newblock In {\em Proc.\ 30th IEEE Symp. on Foundations of Computer Science
  (FOCS)}, pages 364--369, 1989.

\bibitem[AN96]{alon1996derandomization}
Noga Alon and Moni Naor.
\newblock Derandomization, witnesses for boolean matrix multiplication and
  construction of perfect hash functions.
\newblock {\em Algorithmica}, 16(4):434--449, 1996.

\bibitem[AP90]{Awerbuch-Peleg1990}
B.~Awerbuch and D.~Peleg.
\newblock Sparse partitions.
\newblock In {\em Proc.\ 31st IEEE Symp.\ on Foundations of Computer Science
  (FOCS)}, pages 503--513, 1990.

\bibitem[BE13]{barenboimelkin_book}
L.~Barenboim and M.~Elkin.
\newblock {\em Distributed Graph Coloring: Fundamentals and Recent
  Developments}.
\newblock Morgan \& Claypool Publishers, 2013.

\bibitem[BEG{\etalchar{+}}22]{bezdrighin2022deterministic}
Marcel Bezdrighin, Michael Elkin, Mohsen Ghaffari, Christoph Grunau, Bernhard
  Haeupler, Saeed Ilchi, and V{\'a}clav Rozho{\v{n}}.
\newblock Deterministic distributed sparse and ultra-sparse spanners and
  connectivity certificates.
\newblock {\em arXiv preprint arXiv:2204.14086}, 2022.

\bibitem[BRS89]{berger1989efficient}
Bonnie Berger, John Rompel, and Peter~W Shor.
\newblock Efficient nc algorithms for set cover with applications to learning
  and geometry.
\newblock In {\em Foundations of Computer Science, 1989., 30th Annual Symposium
  on}, pages 54--59. IEEE, 1989.

\bibitem[BS07]{baswana2007simple}
Surender Baswana and Sandeep Sen.
\newblock A simple and linear time randomized algorithm for computing sparse
  spanners in weighted graphs.
\newblock {\em Random Structures \& Algorithms}, 30(4):532--563, 2007.

\bibitem[CG21]{chang2021strong}
Yi-Jun Chang and Mohsen Ghaffari.
\newblock Strong-diameter network decomposition.
\newblock In {\em Proceedings of the 2021 ACM Symposium on Principles of
  Distributed Computing}, pages 273--281, 2021.

\bibitem[EHRG22]{elkin2022deterministic}
Michael Elkin, Bernhard Haeupler, V{\'a}clav Rozho{\v{n}}, and Christoph
  Grunau.
\newblock Deterministic low-diameter decompositions for weighted graphs and
  distributed and parallel applications.
\newblock In {\em Proceedings of the Symposium on Foundations of Computer
  Science (FOCS)}, pages to appear, arXiv:2204.08254, 2022.

\bibitem[EN16]{elkin16_decomp}
M.~Elkin and O.~Neiman.
\newblock Distributed strong diameter network decomposition.
\newblock In {\em Proc.\ 35th ACM Symp.\ on Principles of Distributed Computing
  (PODC)}, pages 211--216, 2016.

\bibitem[FGG{\etalchar{+}}22]{Faour2022}
Salwa Faour, Mohsen Ghaffari, Christoph Grunau, Fabian Kuhn, and V\'{a}clav
  Rozho\v{n}.
\newblock Local distributed rounding: Generalized to mis, matching, set cover,
  and beyond.
\newblock In {\em Manuscript}, page to appear, 2022.

\bibitem[GGR21]{GGR20}
M.~Ghaffari, C.~Grunau, and V.~Rozhon.
\newblock Improved deterministic network decomposition.
\newblock In {\em Proc.\ 33rd {ACM-SIAM} Symp.\ on Discrete Algorithms (SODA)},
  pages 2904--2923, 2021.

\bibitem[GHK18]{ghaffari2018derandomizing}
M.~Ghaffari, D.~Harris, and F.~Kuhn.
\newblock On derandomizing local distributed algorithms.
\newblock In {\em Proceedings of the Symposium on Foundations of Computer
  Science (FOCS)}, pages 662--673, 2018.

\bibitem[GK18]{ghaffari2018congest-derandomizing}
Mohsen Ghaffari and Fabian Kuhn.
\newblock Derandomizing distributed algorithms with small messages: Spanners
  and dominating set.
\newblock In {\em 32nd International Symposium on Distributed Computing (DISC
  2018)}. Schloss Dagstuhl-Leibniz-Zentrum fuer Informatik, 2018.

\bibitem[GK21]{GhaffariK21}
M.~Ghaffari and F.~Kuhn.
\newblock Deterministic distributed vertex coloring: Simpler, faster, and
  without network decomposition.
\newblock In {\em Proc.\ 62nd IEEE Symp.\ on Foundations of Computer Science
  (FOCS)}, pages 1009--1020, 2021.

\bibitem[GKM17]{ghaffari2017complexity}
M.~Ghaffari, F.~Kuhn, and Y.~Maus.
\newblock On the complexity of local distributed graph problems.
\newblock In {\em Proc.\ 49th ACM Symp.\ on Theory of Computing (STOC)}, pages
  784--797, 2017.

\bibitem[GY20]{gopalan2020concentration}
Parikshit Gopalan and Amir Yehudayoff.
\newblock Concentration for limited independence via inequalities for the
  elementary symmetric polynomials.
\newblock {\em Theory of Computing}, 16(1):1--29, 2020.

\bibitem[Lin92]{linial92}
N.~Linial.
\newblock Locality in distributed graph algorithms.
\newblock {\em SIAM Journal on Computing}, 21(1):193--201, 1992.

\bibitem[Lub86]{luby86}
M.~Luby.
\newblock A simple parallel algorithm for the maximal independent set problem.
\newblock {\em SIAM Journal on Computing}, 15:1036--1053, 1986.

\bibitem[Lub93]{luby1993removing}
M.~Luby.
\newblock Removing randomness in parallel computation without a processor
  penalty.
\newblock {\em J.\ of Computer and System Sciences}, 47(2):250--286, 1993.

\bibitem[MPX13]{miller2013parallel}
Gary~L Miller, Richard Peng, and Shen~Chen Xu.
\newblock Parallel graph decompositions using random shifts.
\newblock In {\em Proceedings of the twenty-fifth annual ACM symposium on
  Parallelism in algorithms and architectures}, pages 196--203, 2013.

\bibitem[NS95]{naor95}
M.~Naor and L.~Stockmeyer.
\newblock What can be computed locally?
\newblock {\em SIAM Journal on Computing}, 24(6):1259--1277, 1995.

\bibitem[Pel00]{peleg00}
D.~Peleg.
\newblock {\em Distributed Computing: A Locality-Sensitive Approach}.
\newblock SIAM, 2000.

\bibitem[Pet10]{pettie2010distributed}
Seth Pettie.
\newblock Distributed algorithms for ultrasparse spanners and linear size
  skeletons.
\newblock {\em Distributed Computing}, 22(3):147--166, 2010.

\bibitem[PS92]{panconesi-srinivasan}
A.~Panconesi and A.~Srinivasan.
\newblock Improved distributed algorithms for coloring and network
  decomposition problems.
\newblock In {\em Proc.\ 24th ACM Symp.\ on Theory of Computing (STOC)}, pages
  581--592, 1992.

\bibitem[PY18]{parter2018congested}
Merav Parter and Eylon Yogev.
\newblock Congested clique algorithms for graph spanners.
\newblock In {\em 32nd International Symposium on Distributed Computing},
  page~3, 2018.

\bibitem[RG20]{rozhonghaffari20}
V.~Rozho\v{n} and M.~Ghaffari.
\newblock Polylogarithmic-time deterministic network decomposition and
  distributed derandomization.
\newblock In {\em Proc.\ 52nd {ACM} Symp.\ on Theory of Computing (STOC)},
  pages 350--363, 2020.

\bibitem[RGH{\etalchar{+}}22]{rozhovn2022undirected}
V{\'a}clav Rozho{\v{n}}, Christoph Grunau, Bernhard Haeupler, Goran Zuzic, and
  Jason Li.
\newblock Undirected $1+\varepsilon$-shortest paths via minor-aggregates:
  Near-optimal deterministic parallel \& distributed algorithms.
\newblock {\em arXiv preprint arXiv:2204.05874}, 2022.

\bibitem[RTZ05]{roditty2005deterministic}
Liam Roditty, Mikkel Thorup, and Uri Zwick.
\newblock Deterministic constructions of approximate distance oracles and
  spanners.
\newblock In {\em International Colloquium on Automata, Languages, and
  Programming}, pages 261--272. Springer, 2005.

\bibitem[TZ05]{thorup2005approximate}
Mikkel Thorup and Uri Zwick.
\newblock Approximate distance oracles.
\newblock {\em Journal of the ACM (JACM)}, 52(1):1--24, 2005.

\end{thebibliography}

\end{document}